\providecommand{\keywords}[1]
{
  {\small	
  \textbf{\textit{Keywords---}} #1}
}
\newcommand\numberthis{\addtocounter{equation}{1}\tag{\theequation}}
\newcommand{\algindent}{{\mbox{}\phantom{\textbf{aaa} \itshape(}}}
\DeclareMathOperator*{\essinf}{ess\,inf}
\DeclareMathOperator*{\esssup}{ess\,sup}
\newcommand{\KL}[1]{D_{KL}(#1 \, \Vert \, \P)}
\newtheorem{theorem}{Theorem}[section]
\newtheorem{lemma}[theorem]{Lemma}
\newtheorem{definition}[theorem]{Definition}
\newtheorem{proposition}[theorem]{Proposition}
\newtheorem{corollary}[theorem]{Corollary}
\newtheorem{example}[theorem]{Example}
\newtheorem{optimization}[theorem]{Optimization Problem}
\newcommand{\boldeta}{{\boldsymbol{\eta}}}
\newcommand{\E}{{\mathbb{E}}}
\newcommand{\Q}{{\mathbb{Q}}}
\renewcommand{\P}{{\mathbb{P}}}
\newcommand{\Id}{{\mathds{1}}}
\newcommand{\R}{{\mathds{R}}}
\newcommand{\Qset}{{\mathcal{Q}}}
\newcommand{\F}{{\mathcal{F}}}
\newcommand{\B}{{\mathcal{B}}}
\renewcommand{\L}{{\mathcal{L}}}
\newcommand{\VaR}{{\text{VaR}}}
\newcommand{\dQP}{\frac{d\Q}{d\P}}
\title{Stressing Dynamic Loss Models}
\author[1,a]{Emma Kroell}
\author[1,b]{Silvana M. Pesenti}
\author[1,c]{Sebastian Jaimungal}
\affil[1]{Department of Statistical Sciences, University of Toronto}
\affil[a]{emma.kroell@mail.utoronto.ca}
\affil[b]{silvana.pesenti@utoronto.ca}
\affil[c]{sebastian.jaimungal@utoronto.ca}
\date{\today}
\begin{document}

\maketitle

\begin{abstract}
Stress testing, and in particular, reverse stress testing, is a prominent exercise in risk management practice. Reverse stress testing, in contrast to (forward) stress testing, aims to find an alternative but plausible model such that under that alternative model, specific adverse stresses (i.e. constraints) are satisfied. Here, we propose a reverse stress testing framework for dynamic models. Specifically, we consider a compound Poisson process over a finite time horizon and stresses composed of expected values of functions applied to the process at the terminal time. We then define the stressed model as the probability measure under which the process satisfies the constraints and which minimizes the Kullback-Leibler divergence to the reference compound Poisson model.

We solve this optimization problem, prove existence and uniqueness of the stressed probability measure, and provide a characterization of the Radon-Nikodym derivative from the reference model to the stressed model. We find that under the stressed measure, the intensity and the severity distribution of the process depend on time and state, and hence the stressed model is not a compound Poisson process. We illustrate the dynamic stress testing by considering stresses on VaR and both VaR and CVaR jointly and provide illustrations of how the stochastic process is altered under these stresses. We generalize the framework to multivariate compound Poisson processes and stresses at times other than the terminal time. We illustrate the applicability of our framework by considering ``what if'' scenarios, where we answer the question: What is the severity of a stress on a portfolio component at an earlier time such that the aggregate portfolio exceeds a risk threshold at the terminal time? Furthermore, for general constraints, we propose an algorithm to simulate sample paths under the stressed measure, thus allowing to compare the effects of stresses on the dynamics of the process.
\end{abstract}

\keywords{Reverse Stress Testing, Compound Poisson Processes, KL divergence, Value-at-Risk, Conditional Value-at-Risk}

\onehalfspacing

\section{Introduction}

Since the 2007-08 financial crisis, there has been an increased focus in the insurance and financial sector on modelling and assessing rare but extreme events. Stress testing, which examines a model or system under extreme but realistic scenarios, has become a key aspect of risk management at financial institutions, as well as a vital tool for regulators \citep{BCBS2018}. In light of the regulatory requirements on stress testing, a large portion of the literature on stress testing is concerned with constructing plausible but \emph{adverse scenarios} that result in a substantial stress on the model, see e.g., \textcite{Berkowitz1999Risks} and \textcite{Meucci2008Risks}. 
For example, \textcite{Mcneil2012IME} develop, in the context of solvency requirements, a stress testing methodology that provides multivariate adverse scenarios via the notion of a ``least solvency likely event''. \textcite{Glasserman2015QF} construct non-parametrically the most likely scenarios that lead to losses exceeding a prespecified risk threshold. Connecting to systemic risk, \textcite{brechmann2013IME} utilize vine copulas to perform stress tests to investigate systemic risk and contagion in financial networks. 
In an application to credit risk, \textcite{Breuer2012JBF} consider multi-period adverse scenarios whose plausibility is measured via the Mahalanobis distance of risk factors changes. \textcite{Breuer2013JBF} highlight the difficulty of choosing realistic but dangerous scenarios, which are essential for reliable stress tests. We also refer to \textcite{StressTest2013book} for a treatment of stress testing from a regulatory point of view.

An alternative approach to stress testing -- sometimes termed \emph{reverse stress testing} -- is to first define an  adverse or stressed state of the world, and then characterize the probability distribution leading to it, instead of defining a stress and searching for the corresponding adverse scenario. Fundamental to this framework is (a) the specification of \emph{stresses} or probabilistic constraints and (b) solving an optimization problem to find the probability measure under which the model satisfies the stresses and is ``close'' to a reference probability measure. One key advantage of reverse stress testing is that it provides the entire stressed probabilistic model, allowing for a comparison of the stressed model with the reference model. In a discrete and static setting, \textcite{Cambou2017} use the $f$-divergence to incorporate constraints on events, which they term ``views'', while \textcite{Makam2021} consider the $\chi^2$-divergence and a constraint on the expected value of a risk factor. For general but static models, \textcite{pesenti2019reverse} and \textcite{Pesenti2022Risks} consider stresses defined via changes in risk measures such as Value-at-Risk (VaR), distortion risk measures, and expected utilities, using the Kullback-Leibler (KL) and the Wasserstein distance, respectively.

While all the aforementioned works consider static settings, and thus do not capture the potential impact of a stress cascading through time, \textcite{Breuer2012JBF} and \textcite{Bellotti2013IJF} highlight the need for understanding the effect of stress testing dynamic models. Indeed, in a plethora of settings the time-development of an insurance or financial portfolios' profit and loss (P\&L) is of interest, and thus also its evolution under adverse stresses. Examples in actuarial science include claims reserving (\cite{Merz2008CAS}), variable annuities which rely on accurate dynamic mortality models (\cite{Bacinello2011IME}), optimal pension accumulation and withdrawal (\cite{Gerrard2004IME}), and valuation and hedging of insurance risks (\cite{Dahl2006IME}). In all these examples, the key interest lies in the dynamics of the (stressed) P\&L. Moreover, quantifying the stressed dynamics of extreme losses allows for adequate risk assessment and management interventions, see e.g., \textcite{Merz2008CAS} for a detailed discussion. 

The literature on stress testing in a dynamic 
 setting, however, is limited with \textcite{Breuer2012JBF} and \textcite{Bellotti2013IJF}, who construct
 adverse scenarios in a multi-period and thus discrete-time setting within a credit risk context. As such, a comprehensive treatment of stress testing for dynamic loss models in continuous-time is missing. We fill this gap by developing a (reverse) stress testing approach for compound Poisson processes in a continuous-time setting, which allows for multiple stresses on portfolio components, stresses at terminal and earlier time points, and provide an algorithm that allows to simulate under the stressed probability measure. While the starting point is a compound Poisson process, we allow the stressed dynamics to fall in the class of stochastic processes with state dependent intensity and state dependent severity distributions. A key advantage of our framework is that it explicitly quantifies the dynamics of the stressed stochastic process over the entire time horizon. Furthermore, the developed algorithm allows for efficient simulation of the stressed processes and thus provides a methodology to estimate, e.g., future losses and risk measures, under the stressed scenarios. 

In this work, we contribute to the stress testing literature by generalizing the reverse stress testing approach to a dynamic setting, via stressing a stochastic process at a fixed point in time and characterizing the dynamics of the processes under the stressed probability measure. In particular, we consider an insurance portfolio over a finite time horizon, which under the reference probability measure is modelled by a compound Poisson process. We then are interested in how the dynamics of the portfolio need to be changed so that, under the altered dynamics, stress(es) at the terminal time are attained. The stresses we consider are expectations of functions applied to the portfolio at the terminal time. This includes stresses on risk measures such as Value-at-Risk (VaR), VaR and Conditional Value-at-Risk (CVaR), and expected utility constraints. To find the \emph{stressed probability measure}, we seek over equivalent probability measures on the path-space of stochastic processes the one under which the process attains the stresses and which minimizes the KL divergence to the reference measure. We solve the optimization problem, characterize the unique stressed probability measure and the stressed dynamics of the stochastic process. We find that under the stressed probability measure, the intensity and the severity distribution of the process both depend on state and time. Thus, in the spirit of reverse stress testing, our analysis informs how the effects of a stress on a Poisson process results in a marked point process with time and state dependence in both intensity and severity distribution. 
Differently from the static setting of reverse stress testing or the development of adverse scenarios, we require and are interested in a full characterization of the dynamics of the stochastic process under the stressed probability measure that lead to the constraint. Therefore, we also propose an algorithm to simulate sample paths under the stressed probability measures, and thus are able to illustrate the effects of a stress on the entire process. 

Relevant for risk management purposes, we study a stress on VaR and a stress on VaR and CVaR jointly and provide illustrations of how the stochastic process is altered under these stresses. We further consider generalizations to multivariate compound Poisson processes and stresses at times earlier than the terminal time. For stresses on a sub-portfolio, we observe that the dependence between the sub-portfolios may amplify the stress. We illustrate the applicability of our framework by considering ``what if'' scenarios, where we examine what severity of stress on a portfolio component at an earlier time is needed such that the aggregate portfolio exceeds a risk threshold at the terminal time.

The paper is structured as follows. \cref{sec:main} introduces and solves the optimization problem, including an alternative representation of the change of measure and the stressed dynamics of the process. In \cref{sec:applications}, we discuss two applications: a stress on VaR of the process at the terminal time and a stress on VaR and CVaR jointly. In \cref{sec:extensions}, we consider two extensions of our framework: first to allow for constraints that are not at the terminal time, and second to multivariate compound Poisson processes, where we examine the cascading effect a stress on one component of the process has on the other components. The algorithm for simulating the process under the stressed measure, which is used throughout the paper, is discussed in
\cref{sec:numerics}. The code is available at \href{https://github.com/emmakroell/stressing-dynamic-loss-models}{https://github.com/emmakroell/stressing-dynamic-loss-models}.

\section{Minimally Perturbed Compound Poisson Process}
\label{sec:main}
Let a filtered probability space $(\Omega, \F, \{\F_t\}_{t\in[0,T]}, \P)$ be given and let $\B(\cdot)$ denote the Borel $\sigma$-algebra of a set. We consider a one-dimensional jump process $X := (X_t)_{t \in [0,T]}$ that satisfies the stochastic differential equation (SDE) under $\P$
\begin{equation*}
    dX_t = \int_{\R} x\, \mu(dx,dt),
\end{equation*}
where $\mu$ is a Poisson random measure. We assume for every $t\in [0,T]$ and every $A\in\B(\R)\otimes\B([0,t])$ that $\mu(\cdot,A)$ is $\F_t$-measurable and that $\mu(\cdot,\R \times (t,\infty))$ is independent of $\F_t$. Thus, the random measure $\mu$ is $\F$-adapted and its future increments are independent of the past filtration. We further assume that $X$ has mean measure (also known as the compensating measure) given by
\begin{equation*}
    \nu(dx,dt) = \kappa \, G(dx) dt,
\end{equation*} 
thus $X$ is a compound Poisson process with intensity $\kappa>0$ and severity or jump size distribution $G$. We denote by  $\tilde \mu (dy, dt) := \mu (dy, dt) - \nu (dy, dt)$ the compensated measure.

Throughout, we call $\P$ the \textit{reference probability measure} and denote by $\Q$ alternative probability measures on the same filtered probability space. For notational convenience, we write $\E^\Q[\cdot]$ for the expected value under $\Q$ and set $\E[\cdot] := \E^\P[\cdot ]$. To quantify discrepancies between probability measures, we use the Kullback-Leibler (KL) divergence \citep{kullback1951information}. The KL divergence of $\Q$ with respect to $\P$ (also known as the relative entropy) is defined as
\begin{equation*}
     D_{KL}(\mathbb{Q} \, \Vert \, \mathbb{P} ) = \begin{cases}
     \mathbb{E} \left[ \frac{d\mathbb{Q}}{d\mathbb{P}} \; \log \left( \frac{d\mathbb{Q}}{d\mathbb{P}} \right) \right]  &  \text{if } \Q \ll \P  \, ,\\
     \infty & \text{otherwise}\,,
     \end{cases} 
\end{equation*}
where we use the convention $0 \log 0 = 0$ and $\dQP$ denotes the Radon-Nikodym (RN) derivative of $\Q$ with respect to $\P$. For an overview of the properties of KL divergence, see, e.g., \textcite{van2014renyi}.
 
With this notation at hand, we first motivate the optimization problem which is central to this work. Consider an insurance portfolio modelled by a compound Poisson process $X$ under the reference measure $\P$. A modeller wishes to examine how the process is perturbed under \textit{stresses} on its terminal value, i.e., on $X_T$. Here, we consider stresses of the form $\E^\Q\left[f_i(X_T)\right]=c_i$, for functions $f_i \colon \R \to \R$ and constants $c_i\in \R$, $i = 1,\ldots, n$. Examples of stresses include expected utility constraints and risk measure constraints such as Value-at-Risk (VaR) and Conditional Value-at-Risk (CVaR). For a specified stress, the modeller then seeks the most ``plausible'' probability measure under which the stochastic process achieves the stresses. 

Specifically, we consider equivalent probability measures characterized by Girsanov's theorem \citep{appliedstochjump}, i.e. those with RN derivatives of the form
\begin{equation*}
\frac{d\Q}{d\P} =\mathfrak{E}\left(\int_0^T \int_\R \left[ h_t(y) - 1 \right] \, \tilde \mu(dy,dt)\right),
\end{equation*}
where $\mathfrak{E}(\cdot)$ denotes the stochastic exponential and where $h:=(h_t)_{t\in [0,T]}$ is a predictable, non-negative random field on $[0,T]$. To assure that $\frac{d\Q}{d\P} $ is indeed a martingale, we assume that $h$ satisfies the following condition.

\begin{definition}
The random field $h:=(h_t)_{t\in [0,T]}$ satisfies the Novikov condition on the interval $[0,T]$ if
\begin{equation*}
\label{eq:novikov}
   \E \left[  \exp \left( \int_0^\tau \int_\R (1 - h_t(y))^2 \, \mu(dy,dt) \right) \right] < \infty . 
\end{equation*}
\end{definition}

We seek over probability measures of this form the one with minimal KL divergence to $\P$ under which the process attains the constraints. Mathematically, we consider the following optimization problem.
\begin{optimization}
\label{opti:main}
Let $f_i \colon \R \to \R$ and $c_i \in \R$ for $i \in [n]$, where $[n] := \{1, \ldots, n \}$, and consider
\begin{equation*}
    \inf_{\Q\in\Qset} D_{KL}(\mathbb{Q} \, \Vert \, \mathbb{P} )
    \quad
    \text{s.t.}
    \quad
    \E^\Q\left[f_i(X_T)\right]=c_i, \quad i \in [n]\,,
\end{equation*}
where $\Qset$ is the class of equivalent probability measures induced by Girsanov's theorem, i.e.,
\begin{equation*}
\mathcal{Q}:=\left\{\Q_h\; \Big| \; \frac{d\Q_h}{d\P}=\mathfrak{E}\left(\int_0^T \int_\R \left[ h_t(y) - 1 \right] \, \tilde \mu(dy,dt)\right) \right\} ,
\end{equation*}
where $h_t$ is a predictable, non-negative random field satisfying Novikov's condition on the interval $[0,T]$.
\end{optimization}

Note that any probability measure $\Q_h \in \mathcal{Q}$ is uniquely characterized by a random field $h$. We refer to any predictable, non-negative random field that satisfies the Novikov condition on $[0,T]$ as a \textit{Girsanov kernel}, denoted by $h$. We further note that $\Qset$ is convex since a convex combination of random fields is a random field and any convex combination of random fields satisfying the Novikov condition also satisfies the Novikov condition.
In general, a change of measure induces a modified intensity and severity distribution that may be time and space dependent. Proposition \ref{prop:kappa_G} provides the exact characterization of the changes induced by the optimal measure change.

The next result provides the Girsanov kernel which characterizes the probability measure attaining the infimum in \cref{opti:main}.  

\begin{theorem}
\label{thm:h}
If a solution to \cref{opti:main} exists, it is given by $\Q_h$ with $h$ characterized by $h_t(y)=h^\boldeta(t, X_{t^{-}}, y)$, where
\begin{align}
h^\boldeta(t,x,y) := \frac{\E_{t,x+y}\left[\exp \left( -\sum_{i=1}^n\eta_i\,f_i(X_T) \right)\right]}{\E_{t,x}\left[\exp \left( -\sum_{i=1}^n\eta_i\,f_i(X_T) \right)\right]},
\label{eq:h}
\end{align}
$\boldsymbol{\eta}=(\eta_1, \ldots , \eta_n) \in \R^n$ are Lagrange multipliers such that the constraints hold, and $\E_{t,x}[\cdot]$ denotes the $\P$-expectation conditional on the event that the process $X$ at time $t^-$ is equal to $x$, i.e., $X_{t^-}=x$. The solution is unique.
\end{theorem}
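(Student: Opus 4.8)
The plan is to combine the Gibbs (Donsker--Varadhan) variational principle for relative entropy with a Lagrangian relaxation of \cref{opti:main}, and then to identify the Girsanov kernel of the resulting exponentially tilted measure by a martingale argument. As a first step I would fix $\boldeta \in \R^n$ in the range where $\E[\exp(-\sum_{i} \eta_i f_i(X_T))] < \infty$ and consider the relaxed, unconstrained problem $\inf_{\Q \in \Qset}\{ D_{KL}(\Q \, \Vert \, \P) + \sum_i \eta_i \E^\Q[f_i(X_T)]\}$. The variational principle states that for every $\Q \ll \P$,
\[
 D_{KL}(\Q \, \Vert \, \P) + \sum_{i} \eta_i \, \E^\Q[f_i(X_T)] \;\ge\; -\log \E\!\left[\exp\!\left(-\sum_{i} \eta_i \, f_i(X_T)\right)\right],
\]
with equality if and only if $\tfrac{d\Q}{d\P} = \exp(-\sum_i \eta_i f_i(X_T)) \big/ \E[\exp(-\sum_i \eta_i f_i(X_T))]$ $\P$-a.s. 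So the relaxed problem is solved by this exponential tilt, provided it belongs to $\Qset$ --- which is the first thing to verify.

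To check membership in $\Qset$ and extract its kernel, I would set $v^\boldeta(t,x) := \E_{t,x}[\exp(-\sum_i \eta_i f_i(X_T))]$, so that the candidate density equals $v^\boldeta(T,X_T)/v^\boldeta(0,X_0)$ and $h^\boldeta$ in \eqref{eq:h} is precisely $v^\boldeta(t,x+y)/v^\boldeta(t,x)$. Since $X$ is a time-homogeneous Markov process under $\P$ with no fixed jump times, the density process $N_t := \E[\exp(-\sum_i \eta_i f_i(X_T)) \mid \F_t]$ equals $v^\boldeta(t,X_t)$ and $N_{t^-} = v^\boldeta(t,X_{t^-}) > 0$. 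By the martingale representation property of the compound-Poisson filtration, $N$ is driven by $\tilde\mu$; matching the jumps of $N$ with those of $\int \xi \, d\tilde\mu$ forces $\xi_t(y) = v^\boldeta(t, X_{t^-}+y) - v^\boldeta(t,X_{t^-})$, hence $dN_t = N_{t^-} \int_\R (h^\boldeta(t,X_{t^-},y) - 1)\, \tilde\mu(dy,dt)$ and $N_t/N_0 = \mathfrak{E}(\int_0^\cdot \int_\R (h^\boldeta(s,X_{s^-},y)-1)\,\tilde\mu(dy,ds))_t$. Thus the tilted measure is exactly $\Q_h$ with $h_t(y) = h^\boldeta(t,X_{t^-},y)$; this kernel is predictable (left-continuity of $t \mapsto X_{t^-}$) and non-negative, and it remains to check it satisfies the Novikov condition, after which $\Q_{h^\boldeta} \in \Qset$. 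As an alternative route, one can instead write $D_{KL}(\Q_h \, \Vert \, \P) = \E^{\Q_h}[\int_0^T\!\int_\R (h_t(y)\log h_t(y) - h_t(y) + 1)\,\nu(dy,dt)]$ and minimize this functional directly with a Lagrange multiplier for each constraint, which leads to the same kernel via a BSDE/PIDE for $v^\boldeta$.

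Finally I would transfer optimality from the relaxed to the constrained problem by weak duality. If $\boldeta$ can be chosen so that $\E^{\Q_{h^\boldeta}}[f_i(X_T)] = c_i$ for all $i$ --- these are the Lagrange multipliers in the statement --- then for any feasible $\Q \in \Qset$,
\[
 D_{KL}(\Q \, \Vert \, \P) = D_{KL}(\Q \, \Vert \, \P) + \sum_i \eta_i\big(\E^\Q[f_i(X_T)] - c_i\big) \;\ge\; -\log\E\!\left[\exp\!\left(-\sum_i \eta_i f_i(X_T)\right)\right] - \sum_i \eta_i c_i = D_{KL}(\Q_{h^\boldeta} \, \Vert \, \P),
\]
so $\Q_{h^\boldeta}$ attains the infimum. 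Uniqueness follows either from strict convexity of $\Q \mapsto D_{KL}(\Q \, \Vert \, \P)$ on the convex feasible set, or directly from the equality case of the variational inequality, which pins down the density of any optimizer. I expect the main obstacle to be exactly this last step: arguing that, whenever a solution to \cref{opti:main} exists, there is indeed a $\boldeta$ making the exponential tilt feasible (i.e.\ that strong duality holds), which calls for a constraint-qualification/Slater-type argument together with the integrability bookkeeping (finiteness of $v^\boldeta$ and of the Novikov functional) on the relevant range of $\boldeta$.
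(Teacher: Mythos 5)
Your proposal is correct, but it follows a genuinely different route from the paper. The paper proves the result by stochastic control: it writes the KL divergence as an expected running cost in $h$, forms the Lagrangian, poses the HJB equation for the value function $J^{\boldsymbol{\eta}}$, obtains the feedback control $h^{\boldsymbol{\eta}}=e^{-\Delta_y J^{\boldsymbol{\eta}}}$ from the first-order condition, linearizes via the Cole--Hopf substitution $J^{\boldsymbol{\eta}}=-\log\omega^{\boldsymbol{\eta}}$, and identifies $\omega^{\boldsymbol{\eta}}$ through Feynman--Kac. You instead start from the Gibbs/Donsker--Varadhan inequality, which hands you the exponentially tilted density and the equality case for free, and then recover the Girsanov kernel by showing that the density martingale $N_t=v^{\boldsymbol{\eta}}(t,X_t)$ solves the Dol\'eans-Dade SDE with kernel $v^{\boldsymbol{\eta}}(t,x+y)/v^{\boldsymbol{\eta}}(t,x)$; weak duality then transfers optimality to the constrained problem. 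Your martingale step is essentially the paper's Theorem 2.3 (\cref{thm:special-stoch-exp}) run in reverse --- the paper derives the kernel first and only afterwards shows the stochastic exponential collapses to the tilt (\cref{thm:optim_RN}), whereas you take the tilt as primitive and extract the kernel. Your route buys a cleaner optimality and uniqueness argument (strict convexity of relative entropy, no unverified smoothness or dynamic-programming assumptions, and optimality over all $\Q\ll\P$ rather than just $\Qset$); the paper's HJB route buys a template that extends more directly to the piecewise value functions needed for constraints at multiple time points (\cref{prop:two-constraints}). The loose ends you flag --- Novikov for $h^{\boldsymbol{\eta}}$ and existence of feasible multipliers --- are not gaps relative to the paper, which handles the former only for bounded $\mathfrak{Z}$ and defers the latter to \cref{prop:optim_eta_gen} and \cref{prop:c-range}.
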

 
\begin{proof}
Let $\Q_h \in \Qset$, then the RN derivative $\frac{d\Q_h}{d\P}$ has the representation
\begin{equation*}
    \frac{d\Q_h}{d\P} = \mathfrak{E}\left(\int_0^T \int_\R \left[ h_t(y) - 1 \right] \, \tilde \mu(dy,dt)\right) = \exp \left(- \int_0^T\int_\R \left(h_t(y)-1\right)\,\nu(dy,dt)
    + \int_0^T\int_\R\log h_t(y)\;\mu(dy,dt) \right)
\end{equation*}
and the KL divergence of $\Q_h$ with respect to $\P$ becomes
\begin{align*}
\E\left[\frac{d\Q_h}{d\P} \,\log \left( \frac{d\Q_h}{d\P} \right) \right] &= \E^{\Q_h}\left[\left(- \int_0^T\int_\R \left(h_t(y)-1\right)\,\nu(dy,dt)
    + \int_0^T\int_\R\log h_t(y)\;\mu(dy,dt) \right) \right] \\
&= \E^{\Q_h}\left[ \int_0^T\int_\R \big(1-(1-\log h_t(y))\;h_t(y) \big)\,\nu(dy,dt)  \right] \\
&= \E^{\Q_h}\left[ \int_0^T\int_\R \big(1-\left(1-\log h_t(y)\right)\;h_t(y) \big)\,\kappa \, G(dy) dt  \right] .
\end{align*}

Let $\boldeta = (\eta_1, \ldots, \eta_n)$ be Lagrange multipliers. The Lagrangian associated with this optimization problem can be written as
\begin{equation*}
    \inf_{\Q_h\in\Qset} \E^{\Q_h}\left[\log\frac{d\Q_h}{d\P} + \sum_{i=1}^n\eta_i \left(f_i(X_T)-c_i\right)\right].
\end{equation*}
The time $t$ version of the associated value function for given Lagrange multipliers is defined as
\begin{equation*}
    J^\boldeta(t,x) := \inf_{\Q_h\in\Qset} \E^{\Q_h}_{t,x}\left[ \int_t^T\int_\R \big(1-(1-\log h_t(y))\;h_t(y) \big)\,\kappa \, G(dy) dt  + \sum_{i=1}^n\eta_i \left(f_i(X_T)-c_i\right)\right] ,
\end{equation*}
where $\E^{\Q_h}_{t,x}[\cdot]$ denotes the $\Q_h$-expectation given that the process $X$ at time $t^-$ is equal to $x$, i.e., $X_{t^-}=x$.

Under the assumption that $h_t$ is Markov, so that we may write $h_t(y)=h(t,X_{t^-},y)$ for some function $h:\R_+^3\mapsto \R$, $\R_+ := [0, \infty)$, the dynamic programming principle implies that the value function should satisfy the Hamilton-Jacobi-Bellman (HJB) equation:
\begin{subequations}
\begin{align}
    \partial_t J^\boldeta(t,x)+ \inf_{h} \left\{ \L^{h} J^\boldeta(t,x) + \int_\R \big(1- (1-\log h(t,x,y) ) \,h(t,x,y) \big) \kappa \, G(dy)  \right\} &=0 ,
    \label{eqn:HJBa}
    \\
    J^\boldeta(T,x) &= \sum_{i=1}^n\eta_i \left(f_i(x)-c_i\right),
\end{align}
\label{eqn:HJB}
\end{subequations}
where the linear operator $\L^{h}$ is the $\Q_{h}$-generator of $X$, and acts on functions as follows:
\begin{equation*}
    \L^h J^\boldeta(t,x) = \int_\R \left[J^\boldeta(t,x+y)-J^\boldeta(t,x)\right]\,h(t,x,y)\,\kappa \, G(dy) .
\end{equation*}

Applying the first order conditions to the $\inf$ term in \eqref{eqn:HJBa}, for fixed Lagrange multipliers, we obtain  the optimal control $h^\boldeta$ in feedback form:
\begin{equation}
  h^\boldeta(t,x,y) = e^{- \Delta_y J^\boldeta(t,x)}, 
  \label{eqn:opt-h}
\end{equation}
where $\Delta_y J^\boldeta(t,x) := J^\boldeta(t,x+y)-J^\boldeta(t,x)$.
Inserting the feedback form of the control back into \eqref{eqn:HJBa}, we obtain
\begin{equation}
    \partial_t J^\boldeta(t,x) + \int_\R \big(1- e^{-\Delta_y J^\boldeta(t,x)}\big) \kappa\,G(dy) = 0.
    \label{eqn:HJB-at-optimal}
\end{equation}

Next, we construct an explicit solution to \eqref{eqn:HJB-at-optimal} by using a Cole-Hopf change of variables $J^\boldeta(t,x)=-\log \omega^\boldeta(t,x)$. In this case,  \eqref{eqn:HJB-at-optimal} reduces to
\begin{equation*}
-\frac{\partial_t\omega^\boldeta(t,x)}{\omega^\boldeta(t,x)} + \int_\R\left(1-\frac{\omega^\boldeta(t,x+y)}{\omega^\boldeta(t,x)}\right) \kappa\,G(dy)= 0\,.
\end{equation*}
Multiplying through by $-\omega^\boldeta(t,x)$, we obtain the linear PDE
\begin{equation}
\label{eqn:HJB-linearized}
    \partial_t\omega^\boldeta(t,x) + \L \omega^\boldeta(t,x) = 0, \qquad \text{s.t. }\quad 
    \omega^\boldeta(T,x) = \exp\left( - \sum_{i=1}^n\eta_i \left(f_i(x)-c_i\right) \right)\, ,
\end{equation}
where the operator $\L$ is  the $\P$-generator of the process $X$ and acts on functions as follows
\begin{equation*}
    \L \omega^\boldeta(t,x) = \int_\R\left(\omega^\boldeta(t,x+y)-\omega^\boldeta(t,x)\right) \kappa(t,x)\,G_t(dy)\,.
\end{equation*}

This is a Cauchy linear parabolic PDE, which, via the Feynman-Kac representation \citep{pham2009}, admits the solution
\begin{equation*}
    \omega^\boldeta(t,x) = \E_{t,x}\left[ \exp\left( - \sum_{i=1}^n\eta_i \left(f_i(X_T)-c_i\right) \right) \right].
\end{equation*}
Inserting this representation into the feedback form of the optimal control \eqref{eqn:opt-h}, we have 
\begin{align*}
h^\boldeta(t,x,y) = \frac{\omega^\boldeta(t,x+y)}{\omega^\boldeta(t,x)} =\frac{\E_{t,x+y}\left[\exp \left( -\sum_{i=1}^n\eta_i\,f_i(X_T) \right)\right]}{\E_{t,x}\left[\exp \left( -\sum_{i=1}^n\eta_i\,f_i(X_T) \right)\right]}.
\end{align*}
The uniqueness of the solution follows from the convexity of $\Qset$ and the KL divergence.
\end{proof}
For simplicity, we write $\Q^{\boldeta} := \Q_{h^\boldeta}$ whenever the probability measure is induced by $h^\boldeta$ given in \eqref{eq:h}. Conditions under which the Lagrange multipliers exist are discussed in \cref{prop:optim_eta_gen}. Next, we derive an alternative representation of the RN derivative $\Q_h$ associated with $h(t,x,y)$, when $h(t,x,y)$ has the specific form given in \eqref{eq:h}. For this we first state a general result.

\begin{theorem}
\label{thm:special-stoch-exp}
Define the functions
\begin{align*}
    \ell:\R_+\times \R^2\mapsto \R_{>0}, \quad
    \text{such that} \quad  \ell(t,x,y)
    &:= \frac{\theta(t,x+y)}{\theta(t,x)},
    \\
    \theta:\R_+\times \R\mapsto \R, \quad 
    \text{such that} \quad \theta(t,x) &:=\E_{t,x}[\mathfrak{Z}(X_T)],
\end{align*}
for some function $\mathfrak{Z}:\R\mapsto \R_{>0}$ with $\E[\mathfrak{Z}(X_T)]<\infty$. Then, it holds $\P$-a.s. that
\begin{equation}
    \exp\left\{\int_0^T\int_\R \log \ell(t,X_{t^-},y)\,\mu(dy,dt) - \int_0^T\int_\R \left(\ell(t,X_{t^-},y)-1\right)\,\nu(dy,dt)\right\}
    = \frac{\mathfrak{Z}(X_T)}{\E\left[\,\mathfrak{Z}(X_T)\,\right]}.
    \label{eqn:stoch-exp-of-ell}
\end{equation}
\end{theorem}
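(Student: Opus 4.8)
The plan is to recognise the right-hand side of \eqref{eqn:stoch-exp-of-ell} as the terminal value, normalised by its initial value, of a closed $\P$-martingale, and then to identify that martingale with the stochastic exponential on the left. Put $M_t := \theta(t,X_t)$. Since $X$ is a time-homogeneous Markov process with no fixed jump times -- so that $X_{t^-}=X_t$ $\P$-a.s.\ for each fixed $t$ -- the tower and Markov properties give $M_t = \E[\mathfrak{Z}(X_T)\mid\F_t]$. As $\mathfrak{Z}>0$ and $\E[\mathfrak{Z}(X_T)]<\infty$, this is a strictly positive, uniformly integrable $\P$-martingale with $M_0 = \E[\mathfrak{Z}(X_T)]$ and $M_T = \theta(T,X_T)=\mathfrak{Z}(X_T)$. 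It therefore suffices to show that $M_t/M_0$ equals the process obtained from the left side of \eqref{eqn:stoch-exp-of-ell} by replacing $T$ with $t$.

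For that, I note that $\theta$ solves the same linear Cauchy problem \eqref{eqn:HJB-linearized} as $\omega^\boldeta$ did in the proof of \cref{thm:h}, now with terminal datum $\mathfrak{Z}$; that is, the Kolmogorov backward equation $\partial_t\theta + \L\theta = 0$, $\theta(T,\cdot)=\mathfrak{Z}$, with $\L$ the $\P$-generator of $X$, whose Feynman-Kac solution is precisely $\theta(t,x)=\E_{t,x}[\mathfrak{Z}(X_T)]$. Applying the It\^o formula for pure-jump semimartingales to $M_t=\theta(t,X_t)$ and compensating the jump integral,
\begin{equation*}
dM_t = \big(\partial_t\theta + \L\theta\big)(t,X_{t^-})\,dt + \int_\R\big(\theta(t,X_{t^-}+y)-\theta(t,X_{t^-})\big)\,\tilde\mu(dy,dt),
\end{equation*}
the drift vanishes by the backward equation. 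Dividing and multiplying the remaining term by $\theta(t,X_{t^-})=M_{t^-}>0$ yields the linear SDE
\begin{equation*}
dM_t = M_{t^-}\int_\R\big(\ell(t,X_{t^-},y)-1\big)\,\tilde\mu(dy,dt),\qquad M_0=\E[\mathfrak{Z}(X_T)],
\end{equation*}
whose unique solution is $M_t = M_0\,\mathfrak{E}\big(\int_0^\cdot\int_\R(\ell(s,X_{s^-},y)-1)\,\tilde\mu(dy,ds)\big)_t$. Since the driver is pure jump with no continuous martingale part, the Dol\'eans-Dade product formula collapses -- the product over the jumps recombining with the compensator integral -- to
\begin{equation*}
\mathfrak{E}\Big(\textstyle\int_0^\cdot\int_\R(\ell-1)\,\tilde\mu\Big)_t = \exp\left\{\int_0^t\int_\R\log\ell(s,X_{s^-},y)\,\mu(dy,ds) - \int_0^t\int_\R\big(\ell(s,X_{s^-},y)-1\big)\,\nu(dy,ds)\right\}.
\end{equation*}
Setting $t=T$ and dividing by $M_0=\E[\mathfrak{Z}(X_T)]$ gives \eqref{eqn:stoch-exp-of-ell}.

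The step I expect to require the most care is justifying the It\^o expansion: for a merely measurable positive $\mathfrak{Z}$ one has no a priori guarantee that $\theta(\cdot,x)$ is $C^1$ in time, nor that the jump integrals are integrable. In the applications of interest this is immediate, since there $\theta=\omega^\boldeta$ with $\mathfrak{Z}(x)=\exp(-\sum_{i=1}^n\eta_i(f_i(x)-c_i))$ and the classical Feynman-Kac representation of \eqref{eqn:HJB-linearized} applies; in general one localises along the jump times of $X$ and passes to the limit, using $\E[\mathfrak{Z}(X_T)]<\infty$ for uniform integrability. Finally, no separate Novikov estimate is needed to conclude that $\mathfrak{E}(\cdot)$ is a true (not merely local) martingale: by construction $M_0\,\mathfrak{E}(\cdot)=M=\E[\mathfrak{Z}(X_T)\mid\F_\cdot]$ is already uniformly integrable.
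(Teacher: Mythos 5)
Your proof is correct, and it essentially runs the paper's argument in the opposite direction. The paper defines $Z$ as the exponential on the left-hand side of \eqref{eqn:stoch-exp-of-ell}, shows via It\^o that $dZ_t = Z_{t^-}\int_\R\left(\ell(t,X_{t^-},y)-1\right)\tilde\mu(dy,dt)$, argues that $Z$ is a true martingale, and then verifies by a term-by-term cancellation (using the backward equation for $\theta$) that $d\left(Z_t/\Theta_t\right)=0$, so $Z=r\,\Theta$ with $r$ fixed by $\E[Z_T]=Z_0=1$. You instead start from the closed martingale $M_t=\theta(t,X_t)=\E[\mathfrak{Z}(X_T)\mid\F_t]$ (your c\`adl\`ag choice $\theta(t,X_t)$, rather than the paper's $\theta(t,X_{t^-})$, is in fact the cleaner one), derive the same linear SDE for $M$ from $\partial_t\theta+\L\theta=0$, and then identify $M/M_0$ with the exponential via uniqueness of the Dol\'eans-Dade equation plus its explicit product form for a finite-activity pure-jump driver, which collapses correctly since $\ell>0$. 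The ingredients (It\^o for pure-jump processes and the Kolmogorov backward equation) are identical, but your route buys two things: it replaces the paper's cancellation computation for $d(Z/\Theta)$ by a standard uniqueness statement, and, more substantively, it does not rely on the paper's claim that ``$\ell$ is bounded'' to upgrade $Z$ from a local to a true martingale — a claim that is not valid for general positive $\mathfrak{Z}$ (e.g.\ $\mathfrak{Z}(x)=e^{x}$ with unbounded jumps gives $\ell(t,x,y)=e^{y}$) — because uniform integrability of $M$ is automatic from its being a conditional-expectation martingale closed by $\mathfrak{Z}(X_T)\in L^1$. Your closing remarks on regularity address precisely what both proofs leave implicit; for a compound Poisson process they are unproblematic, since $\theta(\cdot,x)$ admits an explicit series representation that is differentiable in $t$, and the backward equation itself gives $\int_\R \ell(t,x,y)\,G(dy)<\infty$, so the compensated jump integral is well defined.
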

\begin{proof}
Define the stochastic process $Z=(Z_t)_{t\in[0,T]}$, such that 
\begin{equation*}
    Z_t = \exp\left\{\int_0^t\int_\R \log \ell(s,X_{s^-},y)\,\mu(dy,ds) - \int_0^t\int_\R \left(\ell(s,X_{s^-},y)-1\right)\,\nu(dy,ds)\right\}.
\end{equation*}
Clearly, $Z_T$ coincides with the left-hand side of \eqref{eqn:stoch-exp-of-ell}. By It\^{o}'s lemma we have that
\begin{equation}
    dZ_t = Z_{t^-}\int_\R \left(\ell(t,X_{t^-},y)-1\right)[\mu(dy,dt)-\nu(dy,dt)],
    \label{eqn:sde-for-Z}
\end{equation}
and therefore $Z$ is a local martingale. Moreover, due to the integrability and support of $\mathfrak{Z}$, $\ell$ is bounded, and therefore, $Z$ is a true martingale. 

Next, define the stochastic process $\Theta:=(\Theta_t)_{t\in[0,T]}$, where $\Theta_t:=\theta(t,X_{t^-})$. As $X$ is a Markov process, $\Theta_t=\E_t[\mathfrak{Z}(X_T)]$ and by integrability of $\mathfrak{Z}$, $\Theta$ is also a martingale, hence, 
\begin{equation*}
    d\Theta_t = \int_\R \left(\theta(t,X_{t^-}+y)-\theta(t,X_{t^-})\right)[\mu(dy,dt)-\nu(dy,dt)].
\end{equation*}
We aim to show that $Z_t/\Theta_t$ is $\P$-a.s constant for all $t \in [0,T]$. To this end, from It\^{o}'s lemma we have that
\begin{equation*}
    d\left(\frac{1}{\Theta_t}\right) =
  -\frac{\partial_t\theta(t,X_{t^-})}{\theta^2(t,X_{t^-})}\, dt
   +\int_\R \left(\frac{1}{\theta(t,X_{t^-}+y)}-\frac{1}{\theta(t,X_{t^-})}\right)\mu(dy,dt).
\end{equation*}
Combining the above with \eqref{eqn:sde-for-Z}, we have
\begin{align*}
    d \left( \frac{Z_t}{\Theta_t}\right)
    &=  \frac{d Z_t}{\Theta_{t^-}} +   Z_{t^-} d \left( \frac{1}{\Theta_t}\right)  + d\left[Z,\frac1\Theta\right]_t
    \\
    \begin{split}
    &= \frac{Z_{t^-}}{\Theta_{t^-}}
    \int_\R \left(\ell(t,X_{t^-},y)-1\right)[\mu(dy,dt)-\nu(dy,dt)]
    \\
    &\quad -Z_{t^-} \frac{\partial_t\theta(t,X_{t^-})}{\theta^2(t,X_{t^-})}\,dt +Z_{t^-}\int_\R \left(\frac{1}{\theta(t,X_{t^-}+y)}-\frac{1}{\theta(t,X_{t^-})}\right)\mu(dy,dt)
    \\
    &\quad
    + Z_{t^-} \int_\R
    \left(\ell(t,X_{t^-},y)-1\right)
    \left(\frac{1}{\theta(t,X_{t^-}+y)}-\frac{1}{\theta(t,X_{t^-})}\right)\mu(dy,dt) \, .
    \end{split}
\end{align*}
As $\Theta_t=\theta(t,X_{t^-})$ is a martingale, we also have that
\begin{equation*}
    \partial_t \theta(t,x) +
    \int_\R\left(\theta(t,x+y)-\theta(t,x)\right)\,\kappa\,G(dy) = 0.
\end{equation*}
Using this identity and substituting the expression for $\ell$ in terms of $\theta$, yields
\begin{align*}
    \begin{split}
    d \left( \frac{Z_t}{\Theta_t}\right)
    &=
    \frac{Z_{t^-}}{\theta(t,X_{t^-})}
    \int_\R \left(
    \frac{\theta(t,X_{t^-}+y)}{\theta(t,X_{t^-})}-1\right)[\mu(dy,dt)-\nu(dy,dt)]
    \\
    &\quad
     + \frac{Z_{t^-}}{\theta^2(t,X_{t^-})}
     \int_\R\left(\theta(t,X_{t^-}+y)-\theta(t,X_{t^-})\right)\,\nu(dy,dt)
    \\
    &\quad +Z_{t^-}\int_\R \left(\frac{1}{\theta(t,X_{t^-}+y)}-\frac{1}{\theta(t,X_{t^-})}\right)\mu(dy,dt)
    \\
    &\quad + Z_{t^-} \int_\R
    \left(\frac{\theta(t,X_{t^-}+y)}{\theta(t,X_{t^-})}-1\right)
    \left(\frac{1}{\theta(t,X_{t^-}+y)}-\frac{1}{\theta(t,X_{t^-})}\right)\mu(dy,dt),
    \end{split}
\end{align*}
which, after inspecting the various terms, vanishes. Therefore, $Z_t = r \, \Theta_t$ for some $r \in\R$ and  all $t \in [0,T]$. As $Z$ is a martingale, $\E[Z_T]=Z_0=1$, moreover, $Z_T = r\,
\Theta_T=r \,\mathfrak{Z}(X_T)$. Therefore, $r=1/\E[\mathfrak{Z}(X_T)]$, which proves the result.
\end{proof}

With the above result we obtain a succinct representation of the RN derivative associated with the Girsanov kernel $h^\boldeta(t,X_{t-},y)$ from \cref{thm:h}, which is in the spirit of \textcite{Csiszar1975}.
\begin{corollary}
\label{thm:optim_RN}
For Lagrange multipliers $\boldeta = (\eta_1, \ldots,\eta_n)$ such that $\E[\exp \left( -\sum_{i=1}^n\eta_i\,f_i(X_T) \right)]<\infty$ and the Girsanov kernel $h^\boldeta$ given in \cref{thm:h}, the measure $\Q^\boldeta$ induced by $h^\boldeta$ has RN derivative 
\begin{equation*}
    \frac{d\Q^\boldeta}{d\P} = \frac{\exp \left( -\sum_{i=1}^n\eta_i\,f_i(X_T) \right)}{\E\left[\exp \left( -\sum_{i=1}^n\eta_i\,f_i(X_T) \right)\right]}
    \label{eq:optim_RN}.
\end{equation*}
\end{corollary}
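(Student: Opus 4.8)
The plan is to recognize the right-hand side as a direct instance of \cref{thm:special-stoch-exp}. First I would set $\mathfrak{Z}(x) := \exp\!\left(-\sum_{i=1}^n \eta_i f_i(x)\right)$, which is valued in $\R_{>0}$, and observe that the standing hypothesis $\E\!\left[\exp\!\left(-\sum_{i=1}^n \eta_i f_i(X_T)\right)\right] < \infty$ is precisely the integrability requirement $\E[\mathfrak{Z}(X_T)] < \infty$ of \cref{thm:special-stoch-exp}. Hence that theorem applies with this choice of $\mathfrak{Z}$, and in particular the associated stochastic exponential is a true martingale with unit expectation, so $\Q^\boldeta$ is a well-defined probability measure.

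Next I would identify the functions $\theta$ and $\ell$ of \cref{thm:special-stoch-exp} with the objects built in the proof of \cref{thm:h}. By definition $\theta(t,x) = \E_{t,x}[\mathfrak{Z}(X_T)] = \E_{t,x}\!\left[\exp\!\left(-\sum_i \eta_i f_i(X_T)\right)\right] = e^{\sum_i \eta_i c_i}\,\omega^\boldeta(t,x)$, the constant factor $e^{\sum_i \eta_i c_i}$ arising from the $c_i$-shift in the terminal condition of \eqref{eqn:HJB-linearized}. Consequently the ratio $\ell(t,x,y) = \theta(t,x+y)/\theta(t,x) = \omega^\boldeta(t,x+y)/\omega^\boldeta(t,x) = h^\boldeta(t,x,y)$, since the constant cancels; that is, $\ell$ evaluated along the path is exactly the Girsanov kernel $h_t(y) = h^\boldeta(t,X_{t^-},y)$ inducing $\Q^\boldeta$.

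Then I would invoke the explicit form of the stochastic exponential already recorded at the start of the proof of \cref{thm:h}: for $\Q_h\in\Qset$ with kernel $h$,
\[
\frac{d\Q_h}{d\P} = \exp\!\left(-\int_0^T\!\!\int_\R (h_t(y)-1)\,\nu(dy,dt) + \int_0^T\!\!\int_\R \log h_t(y)\,\mu(dy,dt)\right).
\]
Taking $h_t(y) = h^\boldeta(t,X_{t^-},y) = \ell(t,X_{t^-},y)$, the right-hand side of this display is literally the left-hand side of \eqref{eqn:stoch-exp-of-ell}. Applying \cref{thm:special-stoch-exp} then yields, $\P$-a.s.,
\[
\frac{d\Q^\boldeta}{d\P} = \frac{\mathfrak{Z}(X_T)}{\E[\mathfrak{Z}(X_T)]} = \frac{\exp\!\left(-\sum_{i=1}^n \eta_i f_i(X_T)\right)}{\E\!\left[\exp\!\left(-\sum_{i=1}^n \eta_i f_i(X_T)\right)\right]},
\]
which is the claimed identity.

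There is essentially no hard analytic step: all the martingale work — showing $Z_t/\Theta_t$ is $\P$-a.s.\ constant, the boundedness of $\ell$, and the true-martingale property — was discharged inside \cref{thm:special-stoch-exp}. The only points needing care are bookkeeping: verifying that the additive $c_i$-shift contributes only a multiplicative constant that cancels in $\ell$, noting that $\mathfrak{Z}$ is strictly positive (immediate, being an exponential) so $\log\ell$ is well defined, and confirming that the corollary's hypothesis is exactly the integrability condition demanded by \cref{thm:special-stoch-exp}.
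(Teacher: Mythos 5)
Your proposal is correct and follows essentially the same route as the paper: the paper's proof likewise writes out the stochastic-exponential form of $\frac{d\Q^\boldeta}{d\P}$ and then applies \cref{thm:special-stoch-exp} with $\mathfrak{Z}(x)=\exp\left(-\sum_{i=1}^n\eta_i f_i(x)\right)$ to the specific kernel $h^\boldeta$ from \cref{thm:h}. Your extra bookkeeping (the cancellation of the constant $e^{\sum_i \eta_i c_i}$ linking $\theta$ to $\omega^\boldeta$, and matching the integrability hypothesis) is exactly the detail the paper leaves implicit.
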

\begin{proof}
Since
\begin{equation*}
\frac{d\Q^\boldeta}{d\P}=\exp \left(- \int_0^T\int_\R \left(h^\boldeta(t,X_{t-},y)-1\right)\,\nu(dy,dt)
    + \int_0^T\int_\R\log h^\boldeta(t,X_{t-},y)\;\mu(dy,dt) \right),
\end{equation*}
the result is an application of \cref{thm:special-stoch-exp} to the specific form of $h^\boldeta$ in \cref{thm:h}.
\end{proof}

Finally, we provide the existence of a solution to \cref{opti:main} by discussing the existence of the Lagrange multipliers. In particular, we show that the Lagrange multipliers are solutions of a (possibly non-linear) system of equations. 
\begin{proposition}
\label{prop:optim_eta_gen}
If there exists $\boldeta^*=(\eta^*_1, \ldots , \eta^*_n)$ such that
\begin{equation}
0 =  \E \left[ \exp \left( -\sum_{j=1}^n\eta^*_j\,f_j(X_T) \right)\left( f_i(X_T)-c_i \right)\right], \quad i \in [n],
\label{eq:find_eta}
\end{equation}
then \cref{opti:main} has a unique solution $\Q^* := \Q^{\boldeta^*}$ characterized by the Girsanov kernel $h^*_t(y) := h^{\boldeta^*}(t,X_{t^-},y)$.
\end{proposition}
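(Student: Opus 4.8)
The plan is to verify directly that $\Q^* := \Q^{\boldeta^*}$ is admissible, feasible, and optimal, using the explicit Radon--Nikodym derivative supplied by \cref{thm:optim_RN} together with a Pythagorean-type decomposition of relative entropy; uniqueness then comes for free from the strict case of that decomposition. Throughout I treat $M := \E\big[\exp(-\sum_{j=1}^n \eta^*_j f_j(X_T))\big] \in (0,\infty)$ as given: positivity is automatic, and finiteness is implicit in the well-posedness of the defining system \eqref{eq:find_eta} and is precisely the hypothesis under which \cref{thm:optim_RN} applies.

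First I would show $\Q^* \in \Qset$. The kernel $h^*_t(y) = h^{\boldeta^*}(t,X_{t^-},y)$ is non-negative, being a quotient of expectations of the strictly positive variable $\exp(-\sum_j \eta^*_j f_j(X_T))$, and predictable, since its $X$-dependence enters only through the left limit $X_{t^-}$. Moreover \cref{thm:special-stoch-exp}, applied with $\mathfrak{Z}(x) = \exp(-\sum_j \eta^*_j f_j(x))$ (so that $\ell = h^{\boldeta^*}$), identifies the corresponding stochastic exponential with $\mathfrak{Z}(X_T)/\E[\mathfrak{Z}(X_T)]$, a genuine probability density equivalent to $\P$; in particular the exponential is a true martingale, which is the content of the Novikov-type requirement in the definition of $\Qset$. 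Hence $\Q^*$ is a well-defined element of $\Qset$, and by \cref{thm:optim_RN} its density is $\frac{d\Q^*}{d\P} = M^{-1}\exp(-\sum_j \eta^*_j f_j(X_T))$. Feasibility is then immediate: for each $i \in [n]$,
\[
\E^{\Q^*}\left[f_i(X_T)\right] \;=\; \frac{1}{M}\,\E\!\left[\exp\!\Big(-\sum_{j=1}^n \eta^*_j\,f_j(X_T)\Big)\,f_i(X_T)\right] \;=\; c_i\,,
\]
the last equality being a rearrangement of \eqref{eq:find_eta}.

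For optimality, I would take an arbitrary $\Q \in \Qset$ with $\E^\Q[f_i(X_T)] = c_i$ for all $i$ and $D_{KL}(\Q \,\Vert\, \P) < \infty$ (otherwise there is nothing to prove), and use the chain rule $\frac{d\Q}{d\P} = \frac{d\Q}{d\Q^*}\,\frac{d\Q^*}{d\P}$, legitimate since $\Q^* \sim \P$. Taking logarithms and integrating against $\Q$, and substituting $\log\frac{d\Q^*}{d\P} = -\sum_j \eta^*_j f_j(X_T) - \log M$, the constraints force $\E^\Q[\log\frac{d\Q^*}{d\P}] = -\sum_j \eta^*_j c_j - \log M$; applying this with $\Q$ replaced by $\Q^*$ (using the feasibility just established) shows the right-hand side is exactly $D_{KL}(\Q^* \,\Vert\, \P)$. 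This gives the identity
\[
D_{KL}(\Q \,\Vert\, \P) \;=\; D_{KL}(\Q \,\Vert\, \Q^*) \;+\; D_{KL}(\Q^* \,\Vert\, \P) \;\ge\; D_{KL}(\Q^* \,\Vert\, \P)\,,
\]
with equality if and only if $D_{KL}(\Q \,\Vert\, \Q^*) = 0$, i.e. $\Q = \Q^*$. Thus $\Q^*$ minimises \cref{opti:main} and is the unique minimiser, consistent with the uniqueness already recorded in \cref{thm:h}.

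I expect the main obstacle to be purely the integrability bookkeeping in the last step rather than anything conceptual: one must justify splitting the expectation of the logarithm of a product, which requires $\E^\Q[\log\frac{d\Q^*}{d\P}]$ to be finite (it is, once the constraints are used --- this needs each $\E^\Q[f_i(X_T)]$ finite, guaranteed by feasibility) and $D_{KL}(\Q \,\Vert\, \Q^*)$ to be well defined in $[0,\infty]$. Restricting at the outset to competitors with $D_{KL}(\Q \,\Vert\, \P) < \infty$ renders every term finite and makes the manipulation rigorous. A minor secondary point is to spell out the true-martingale property of $h^*$; here I would simply invoke the boundedness of $\ell$ obtained inside the proof of \cref{thm:special-stoch-exp}.
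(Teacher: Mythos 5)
Your proof is correct, but it is a genuinely different argument from the paper's. The paper's proof is a one-liner: it takes \cref{thm:h} and \cref{thm:optim_RN} as given, rewrites the constraint $\E^{\Q}[f_i(X_T)-c_i]=0$ using the RN derivative of \cref{thm:optim_RN}, and cancels the normalising constant to arrive at \eqref{eq:find_eta}; existence and uniqueness are then inherited from the HJB verification and the convexity argument in \cref{thm:h}. You instead give a direct verification in the style of Csisz\'ar's I-projection theorem: admissibility and feasibility of $\Q^{\boldeta^*}$, followed by the Pythagorean identity $D_{KL}(\Q\,\Vert\,\P)=D_{KL}(\Q\,\Vert\,\Q^*)+D_{KL}(\Q^*\,\Vert\,\P)$, which holds because $\log\frac{d\Q^*}{d\P}$ is an affine function of the $f_j(X_T)$ whose $\Q$-expectation is pinned by the constraints. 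What your route buys is a self-contained proof of optimality and uniqueness that does not lean on the dynamic-programming verification argument, and it in fact yields the stronger statement that $\Q^*$ minimises over \emph{all} feasible measures absolutely continuous with respect to $\P$, not merely those in $\Qset$. What it costs is the admissibility step the paper never has to confront at this point: you must check $\Q^*\in\Qset$, and here you conflate the Novikov condition with the true-martingale property it implies. Since $\Qset$ is defined by requiring $h$ to satisfy Novikov's condition, strictly speaking you should either verify that condition for $h^{\boldeta^*}$ (boundedness of $h^{\boldeta^*}$, as used inside the proof of \cref{thm:special-stoch-exp}, suffices when combined with integrability of the jump measure) or note that membership in $\Qset$ only requires the density process to be a martingale, which \cref{thm:special-stoch-exp} delivers. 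This is a presentational wrinkle, not a gap; the paper is equally casual on this point.
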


\begin{proof}
For $\Q_h \in \Qset$ and $i \in [n]$, we rewrite the constraint(s) from \cref{opti:main} using the RN derivative from \cref{thm:optim_RN} to obtain
\begin{align*}
    0 = \E^\Q \left[ f_i(X_T) - c_i \right] = \E \left[ \frac{\exp \left( -\sum_{j=1}^n\eta_j\,f_j(X_T) \right)}{\E\left[\exp \left( -\sum_{j=1}^n\eta_j\,f_j(X_T) \right)\right]} \left( f_i(X_T)-c_i \right)\right] .
\end{align*}
Cancellation gives \eqref{eq:find_eta}.
\end{proof}

We refer to the solution of \cref{opti:main}, denoted by $\Q^*$, as the \textit{stressed probability measure}. Depending on the constraints, explicit expressions for $\boldeta^*$ may be available. We observe that whether Equation \eqref{eq:find_eta} has a solution depends on the choice of $f_i$ and $c_i$, $i \in [n]$. One necessary condition for Equation \eqref{eq:find_eta} to have a solution is that each $c_i$ must be in the support of $f_i(X_T)$. This requirement means that the value of each constraint $c_i$ is such it is attainable by $f_i(X_T)$ in some state of the world.

For the case of a single constraint, we can specify additional conditions on $f(X_T)$ and $c$ such that a solution to \cref{eq:find_eta} exists.
\begin{proposition}\label{prop:c-range}
    If the cumulant generating function of $f(X_T)$, $K_{f(X_T)}(s) := \log \E \left[ e^{sf(X_T)} \right]$, satisfies $K_{f(X_T)}(a), \, K_{f(X_T)}(b) < \infty$ for some $a,b \in \R$ with $a < b$, then for all 
    \begin{equation}
       c \in \left( \frac{\E\left[ e^{af(X_T)} f(X_T)\right]}{\E\left[ e^{af(X_T)}\right]},\frac{\E\left[ e^{bf(X_T)} f(X_T)\right]}{\E\left[ e^{bf(X_T)}\right]} \right) ,
       \label{eq:c-interval}
    \end{equation}
    there exists a unique solution $\eta^*$ to 
    \begin{equation}
         \E \left[ e^{ - \eta^* \,f(X_T)} \left( f(X_T)-c \right)\right]
         =0\,.
        \label{eq:find_eta_univariate}
    \end{equation}
\end{proposition}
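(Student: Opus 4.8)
The plan is to recognise \eqref{eq:find_eta_univariate} as a level-set equation for the derivative of the cumulant generating function and then exploit strict convexity. Set $K:=K_{f(X_T)}$ and let $\mathcal{D}:=\{s\in\R:\E[e^{sf(X_T)}]<\infty\}$, an interval. By H\"older's inequality $[a,b]\subseteq\mathcal{D}$, and since $a<b$ the open interval $(a,b)$ lies in the interior of $\mathcal{D}$; there $K$ is finite and real-analytic, one may differentiate under the expectation, and $K'(s)=\E[e^{sf(X_T)}f(X_T)]/\E[e^{sf(X_T)}]$. Substituting $s=-\eta$ shows that \eqref{eq:find_eta_univariate} is equivalent to $K'(-\eta)=c$. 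If $f(X_T)$ is $\P$-a.s.\ constant the interval \eqref{eq:c-interval} is empty and nothing is to be proved, so I assume $f(X_T)$ is non-degenerate; then $K''(s)$ is the variance of $f(X_T)$ under the exponentially tilted measure proportional to $e^{sf(X_T)}\,d\P$, which is strictly positive, so $K'$ is continuous and strictly increasing on $(a,b)$.

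Next I would compute the image $K'((a,b))$. By strict monotonicity it is the open interval with endpoints $\lim_{s\downarrow a}K'(s)$ and $\lim_{s\uparrow b}K'(s)$, and the key claim is that these one-sided limits are exactly the two ratios in \eqref{eq:c-interval}. For the denominators, dominated convergence gives $\E[e^{sf(X_T)}]\to\E[e^{af(X_T)}]$ as $s\downarrow a$ and $\to\E[e^{bf(X_T)}]$ as $s\uparrow b$, the dominating functions being $e^{af(X_T)}+e^{(a+b)f(X_T)/2}$ and $e^{bf(X_T)}+e^{(a+b)f(X_T)/2}$ respectively, both integrable since $a,b\in\mathcal{D}$ and $(a+b)/2\in\mathrm{int}\,\mathcal{D}$. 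For the numerators I split $f(X_T)$ into positive and negative parts: the part on which $e^{sf(X_T)}$ decreases towards the boundary value is handled by dominated convergence with the integrable bound coming from the interior point $(a+b)/2$, while the part on which $e^{sf(X_T)}$ increases is handled by monotone convergence, which also accommodates the case where the limit is $+\infty$, so that the endpoint formula in \eqref{eq:c-interval} still reads correctly with a $\pm\infty$ value. Hence $K'$ is a continuous strictly increasing bijection from $(a,b)$ onto the interval \eqref{eq:c-interval}.

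For $c$ in that interval, the intermediate value theorem and strict monotonicity then yield a unique $s^*\in(a,b)$ with $K'(s^*)=c$; put $\eta^*:=-s^*$. This solves \eqref{eq:find_eta_univariate}, since $\E[e^{-\eta^*f(X_T)}(f(X_T)-c)]=\E[e^{s^*f(X_T)}]\,(K'(s^*)-c)=0$ and $\E[e^{-\eta^*f(X_T)}]<\infty$ because $-\eta^*\in\mathcal{D}$. For uniqueness among all $\eta$, the variance argument makes $K'$ strictly increasing on the whole of $\mathrm{int}\,\mathcal{D}$, so $K'(-\eta)=c$ has at most one root with $-\eta\in\mathrm{int}\,\mathcal{D}$; if $-\eta$ is an endpoint of $\mathcal{D}$ lying in $\mathcal{D}$, the monotone limit of $K'$ there is strictly to one side of $c$, because $c$ is already attained strictly inside; and if $-\eta\notin\mathcal{D}$ the left-hand side of \eqref{eq:find_eta_univariate} is not a finite real number equal to zero. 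Thus $\eta^*$ is the unique solution.

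The convexity and analyticity facts for cumulant generating functions, the intermediate value theorem step, and the limit interchanges are all routine. The step I expect to be the main obstacle is pinning down the two one-sided limits of $K'$ at $a$ and $b$: since $a$ and $b$ need not be interior to $\mathcal{D}$ one cannot simply differentiate $K$ there, and exactly one of the two one-sided pieces of the numerator $\E[e^{sf(X_T)}f(X_T)]$ may only converge in the monotone --- possibly to $+\infty$ --- rather than the dominated sense. This is precisely why the admissible range of $c$ in \eqref{eq:c-interval} is stated via those particular ratios and is an open interval.
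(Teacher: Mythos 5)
Your proposal is correct and follows essentially the same route as the paper: rewrite \eqref{eq:find_eta_univariate} as $K'(-\eta)=c$ for the cumulant generating function $K$, use (strict) convexity so that $K'$ is increasing on $(a,b)$, identify the endpoints of $K'((a,b))$ with the two ratios in \eqref{eq:c-interval}, and conclude by the intermediate value theorem. Your treatment is somewhat more careful than the paper's at the boundary --- handling the one-sided limits of $K'$ at $a$ and $b$ via monotone/dominated convergence rather than evaluating the derivative there, and addressing the degenerate case and uniqueness over all of $\mathrm{int}\,\mathcal{D}$ --- but these are refinements of the same argument, not a different one.
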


\begin{proof}
Let $c$ be in the interval given in Equation \eqref{eq:c-interval} and assume there exists $a,b \in \R$, $a<b$, such that the cumulant generating function of $f(X_T)$, $K_{f(X_T)}(s)$, satisfies $K_{f(X_T)}(a), \, K_{f(X_T)}(b) < \infty$. 
The latter implies, by properties of the cumulant generating function, that $K_{f(X_T)}(s) < \infty$ for all $s \in (a,b)$. As $K_{f(X_T)}(s)$ is convex and twice differentiable, its derivative, $\frac{d}{ds}K_{f(X_T)}(s)$, exists and is continuously increasing on $(a,b)$. We observe that 
    \begin{equation*}
       \frac{d}{ds}K_{f(X_T)}(s) \Big |_{s = a} = \frac{\E\left[ e^{af(X_T)} f(X_T)\right]}{\E\left[ e^{af(X_T)}\right]}  \quad \text{and} \quad \frac{d}{ds}K_{f(X_T)}(s) \Big |_{s = b} = \frac{\E\left[ e^{bf(X_T)} f(X_T)\right]}{\E\left[ e^{bf(X_T)}\right]} \, .
    \end{equation*}
    Thus $\frac{d}{ds}K_{f(X_T)}(s)$ must take on all intermediate values between $\E\left[ e^{af(X_T)} f(X_T)\right]/\E\left[ e^{af(X_T)}\right]$ and \\$\E\left[ e^{bf(X_T)} f(X_T)\right]/\E\left[ e^{bf(X_T)}\right]$. Hence 
    there must exist an $-\eta^* \in (a,b)$ such that 
    \begin{equation}
    \label{eq:c-deriv}
            c = \frac{d}{ds}K_{f(X_T)}(s) \Big |_{s = - \eta^*} =  \frac{\E\left[ e^{ - \eta^* f(X_T)} f(X_T)\right]}{\E\left[ e^{-\eta^* f(X_T)}\right]} .
    \end{equation}
    Uniqueness of $\eta^*$ follows by increasingness of $\frac{d}{ds}K_{f(X_T)}(s)$. Finally, rearranging \eqref{eq:c-deriv} gives \eqref{eq:find_eta_univariate}.
\end{proof}

Note that if $f$ is constant, the condition in \cref{eq:c-interval} becomes void. In that case, one must have $c=f(X_T)$ for \cref{eq:find_eta_univariate} to have a solution.
An example when the assumptions in \cref{prop:c-range} are fulfilled is when $f(X_T)$ is bounded. In this case the cumulant generating function exists and is finite for all $s \in \R$. Thus, as a consequence of the above proposition we obtain that any stress $c$ satisfying $-\infty<\essinf f(X_T) < c < \esssup f(X_T)<+\infty$ yields a solution to \cref{eq:find_eta_univariate}. A case in point is VaR, which we discuss in detail in the \cref{sec:applications}. We further consider the case of two constraints, VaR and CVaR, and give explicit conditions under which a solution exists.

Under a stressed probability measure $\Q^*$, the stochastic process $X$ is no longer a true compound Poisson process due to the time- and state-dependence of the Girsanov kernel $h^*(t,x,y)$. Indeed, the next result states the intensity process and the severity distribution of $X$ under $\Q^*$.
\begin{proposition}
\label{prop:kappa_G}
The intensity process and severity distribution of the process $X$ under the stressed measure $\Q^*$ are, respectively,
\begin{equation*}
\kappa^*(t,x) =  \kappa \int_\mathbb{R} h^*(t,x,y) G(dy) 
\quad \text{and} \quad
    G^*(t,x,dy) = \frac{h^*(t,x,dy) G(dy)}{\int_\R  h^*(t,x,dy') G(dy')}.
\end{equation*}
\end{proposition}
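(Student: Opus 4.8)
The plan is to read the $\Q^*$-dynamics of $X$ straight off the Girsanov change of measure. Recall that each $\Q_h \in \Qset$ is defined through $\frac{d\Q_h}{d\P} = \mathfrak{E}\big(\int_0^T\int_\R [h_t(y)-1]\,\tilde\mu(dy,dt)\big)$, and that the Novikov condition in the definition preceding \cref{opti:main} guarantees this stochastic exponential is a true martingale. The standard Girsanov theorem for integer-valued random measures (see \textcite{appliedstochjump}) then states that, under $\Q_h$, the random measure $\mu$ admits the $(\F_t)$-compensator $\nu^{\Q_h}(dy,dt) = h_t(y)\,\nu(dy,dt)$; that is, the $\P$-mean measure is tilted by the Girsanov kernel. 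Applying this with $h_t(y) = h^*(t,X_{t^-},y)$ and $\nu(dy,dt) = \kappa\,G(dy)\,dt$ gives the $\Q^*$-compensator
\[
\nu^{\Q^*}(dy,dt) = h^*(t,X_{t^-},y)\,\kappa\,G(dy)\,dt .
\]

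It then remains to disintegrate this compensator into a time- and state-dependent intensity times a time- and state-dependent jump-size probability kernel. Integrating over $y \in \R$ yields the total jump rate in state $X_{t^-}=x$, namely $\kappa^*(t,x) := \kappa\int_\R h^*(t,x,y)\,G(dy)$; this quantity is strictly positive and finite because $h^\boldeta$ in \eqref{eq:h} is a ratio of $\P$-expectations of the strictly positive, integrable random variable $\exp(-\sum_{i}\eta_i f_i(X_T))$ — equivalently $h^* = \omega^\boldeta(t,x+y)/\omega^\boldeta(t,x)$ with $\omega^\boldeta>0$ — and finiteness of $\int_\R\omega^\boldeta(t,x+y)\,G(dy)$ is exactly what makes the linear PDE \eqref{eqn:HJB-linearized} well posed. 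Normalising the compensator by this rate identifies the severity law as $G^*(t,x,dy) = h^*(t,x,y)\,\kappa\,G(dy)/\kappa^*(t,x) = h^*(t,x,y)\,G(dy)/\int_\R h^*(t,x,y')\,G(dy')$, which is a genuine probability measure on $\R$ by construction. As a cross-check, one can verify directly that the $\Q^*$-generator $\L^{h^*}J(t,x) = \int_\R[J(t,x+y)-J(t,x)]\,h^*(t,x,y)\,\kappa\,G(dy)$ equals $\kappa^*(t,x)\int_\R[J(t,x+y)-J(t,x)]\,G^*(t,x,dy)$, which is precisely the generator of a marked point process with intensity $\kappa^*$ and severity kernel $G^*$.

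The only delicate point — more a matter of bookkeeping than a true obstacle — is justifying the form of the $\Q^*$-compensator: one must confirm that we are within the scope of the Girsanov theorem for random measures, i.e. that $Z_t := \E[\tfrac{d\Q^*}{d\P}\mid \F_t]$ is a genuine (hence on $[0,T]$ uniformly integrable) martingale, which follows from the Novikov condition and, alternatively, from \cref{thm:optim_RN} together with $\E[\exp(-\sum_i \eta_i f_i(X_T))]<\infty$, and that the tilted mean measure remains $\sigma$-finite so that the disintegration into $\kappa^*$ and $G^*$ is well defined. Once this is in place, the identification of $\kappa^*(t,x)$ and $G^*(t,x,dy)$ is immediate from the two displays above.
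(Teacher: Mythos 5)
Your argument is correct and matches the paper's own proof: both invoke Girsanov's theorem to identify the $\Q^*$-compensator as $\nu^*(dy,dt)=h^*(t,X_{t^-},y)\,\kappa\,G(dy)\,dt$, then read off $\kappa^*(t,x)$ by integrating over $y$ and obtain $G^*(t,x,dy)$ by normalizing. Your added remarks on positivity, finiteness, and the martingale property of the density are fine supplementary bookkeeping but not a different route.
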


\begin{proof}
By Girsanov's Theorem, the compensator of $\mu(dy,dt)$ under $\Q^*$ is $\nu^*(dy,dt)=h^*(t,x,y)\,\nu(dy,dt)$ and thus the intensity of the process becomes:
\begin{equation*}
\kappa^*(t,x) := \lim_{\Delta t \to 0}\frac{1}{\Delta t} \int_t^{t+\Delta t} \int_\mathbb{R} \nu^*(dy,ds) = \lim_{\Delta t \to 0}\frac{1}{\Delta t} \int_t^{t+\Delta t} \int_\mathbb{R} h^*(t,x,y) \kappa G(dy) dt =  \kappa \int_\mathbb{R} h^*(t,x,y) G(dy) . 
\end{equation*}
We obtain the updated severity distribution from $\nu^*(dy,dt)=h^*(t,x,y) \,\kappa G(dy) dt$, by multiplying and dividing by the integral term in $\kappa^*(t,x)$.
\end{proof}
Under the stressed measure $\Q^*$, we refer to the intensity $\kappa^*$ and the severity distribution $G^*$ as the stressed intensity and stressed severity distributions, respectively. Note that while in \cref{prop:kappa_G} we state the dynamics of $X$ under the optimal measure $\Q^*$, the result holds for any fixed and not necessarily optimal Lagrange multiplier $\boldeta$, $h^\boldeta$, and corresponding $\Q^\boldeta$. 

\cref{thm:optim_RN} gives a concise representation of the RN derivative of the stressed measure $\Q^*$ as a function of the value of the process at terminal time only, i.e., of $X_T$. This result, that the RN derivative of the stressed measure induced by perturbing a stochastic process is a function only of the terminal value of the process, is a priori not obvious. However, this representation of the RN derivative does not explicitly characterize the dynamics of the process under $\Q^*$. Indeed, the stressed intensity process and severity distribution, see \cref{prop:kappa_G}, are functions of the optimal random field $h^*$ given in \cref{thm:h}. We moreover provide in \cref{sec:numerics} an algorithm to simulate sample paths of the process under the stressed probability measure.

\section{Risk Measure Constraints}
\label{sec:applications}

In this section, we apply our framework to a risk management context. Specifically, we consider constraints given by the two most widely used risk measures, Value-at-Risk (VaR) and Conditional Value-at-Risk (CVaR). For a constraint on VaR and joint constraints on VaR and CVaR, we examine the dynamics of $X$ under the stressed measure both analytically and numerically.

\subsection{Value-at-Risk}
We first examine a stress on VaR. The VaR of a random variable $Z$ under a measure $\Q$ at level $\alpha \in [0,1]$ is defined as
\begin{equation*}
    \text{VaR}^\Q_\alpha(Z) = \inf \left\{ z \in \R \, | \,  F^\Q_Z(z) \geq \alpha \right\},
\end{equation*}
where $F^\Q_Z$ denotes the distribution function of $Z$ under $\Q$ and we use the convention that $\inf \emptyset = + \infty$. For notational simplicity, we write $\text{VaR}_\alpha(Z) := \text{VaR}^\P_\alpha(Z)$.

We implement a $\VaR_\alpha$ constraint in our framework by setting $f(x) = \mathds{1}_{\{x < q\}}$ and $c = \alpha \in (0,1)$ in \cref{opti:main}, which gives the constraint
\begin{equation}\label{eq:VaR-constraint}
    \Q(X_T < q) = \alpha.
\end{equation}
When the distribution of $X_T$ under $\P$ is continuous to the left of $q$, then \eqref{eq:VaR-constraint} is equivalent to requiring $\text{VaR}^\Q_\alpha(X_T) = q$; otherwise, it may be viewed as a probability constraint. Furthermore, $q$ must be chosen such that the constraint is attainable, meaning that $q$ must lie between the essential infimum and essential supremum of $X_T$, a requirement on the choice of $q$ rather than the process $X$.

\begin{proposition}
\label{prop:eta_h_VaR}
Let $\alpha \in (0,1)$ and $\essinf X_T < q < \esssup X_T$. The solution to \cref{opti:main} with constraint given by $f(x) = \mathds{1}_{\{x < q\}}$ and $c = \alpha$ is the measure $\Q^*$ characterized by the Girsanov kernel 
\begin{equation}
    h^*(t,x,y) = \frac{1 + \left( e^{-\eta^*} - 1 \right) \P\left(X_T - X_{t-}  < q - x - y \right)}{1 + \left( e^{-\eta^*} - 1 \right) \P\left(X_T - X_{t-}  < q - x  \right)},
    \label{eqn:h-opt-VaR}
\end{equation}
where the Lagrange multiplier $\eta^*$ is given by
\begin{equation*}
    \eta^* = \log \left(\frac{(1-\alpha)\P(X_T < q)}{\alpha \, \P(X_T \geq q)}\right) \, .
\end{equation*}
The solution is unique.
\end{proposition}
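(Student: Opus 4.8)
The plan is to specialize \cref{thm:h} and \cref{prop:optim_eta_gen} to the single constraint $f(x)=\mathds{1}_{\{x<q\}}$, $c=\alpha$, and to carry out the (elementary) resulting computations explicitly, exploiting that $f$ takes only the values $0$ and $1$.

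First I would evaluate the conditional expectations in the numerator and denominator of \eqref{eq:h}. Since $\exp(-\eta\,\mathds{1}_{\{X_T<q\}}) = 1 + (e^{-\eta}-1)\,\mathds{1}_{\{X_T<q\}}$, we get $\E_{t,x}[\exp(-\eta f(X_T))] = 1 + (e^{-\eta}-1)\,\E_{t,x}[\mathds{1}_{\{X_T<q\}}]$, and because $X$ has independent increments under $\P$ and a fixed time is almost surely not a jump time, $\E_{t,x}[\mathds{1}_{\{X_T<q\}}] = \P\!\left(X_T - X_{t^-} < q - x\right)$. Substituting into \eqref{eq:h} yields exactly \eqref{eqn:h-opt-VaR} for any Lagrange multiplier $\eta$ in place of $\eta^*$; moreover $\E[\exp(-\eta f(X_T))]<\infty$ is immediate since $f$ is bounded, so \cref{thm:h} and \cref{thm:optim_RN} apply.

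Next I would pin down $\eta^*$ via the scalar equation \eqref{eq:find_eta_univariate} of \cref{prop:optim_eta_gen}. Writing $p := \P(X_T<q)$ and using $\mathds{1}_{\{X_T<q\}}^2 = \mathds{1}_{\{X_T<q\}}$, the equation $\E[e^{-\eta^* f(X_T)}(f(X_T)-\alpha)]=0$ reduces to $p - \alpha + (e^{-\eta^*}-1)\,p\,(1-\alpha) = 0$, which rearranges to $e^{-\eta^*} = \frac{\alpha(1-p)}{(1-\alpha)\,p} = \frac{\alpha\,\P(X_T\ge q)}{(1-\alpha)\,\P(X_T<q)}$, i.e. the stated value of $\eta^*$. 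Since $\essinf X_T < q < \esssup X_T$ forces $0<p<1$ and $f(X_T)$ is bounded, \cref{prop:c-range} (or direct inspection, noting the endpoints of the interval in \eqref{eq:c-interval} tend to $0$ and $1$) shows $\alpha\in(0,1)$ lies in the admissible range, so this $\eta^*$ is the unique solution of \eqref{eq:find_eta_univariate}; \cref{prop:optim_eta_gen} then delivers existence and uniqueness of $\Q^*=\Q^{\eta^*}$ with Girsanov kernel $h^*_t(y)=h^{\eta^*}(t,X_{t^-},y)$, which is \eqref{eqn:h-opt-VaR}. There is no substantive obstacle here; the only points needing care are the reduction $\E_{t,x}[\mathds{1}_{\{X_T<q\}}]=\P(X_T-X_{t^-}<q-x)$, which uses independence of increments together with the fact that $t$ is not a jump time, and the bookkeeping confirming that $\alpha\in(0,1)$ falls inside the interval of \cref{prop:c-range} so that the explicit $\eta^*$ is genuinely admissible.
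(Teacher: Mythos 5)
Your proposal is correct and takes essentially the same route as the paper's proof: specialize \cref{thm:h} via the identity $e^{-\eta\,\mathds{1}_{\{X_T<q\}}}=1+(e^{-\eta}-1)\mathds{1}_{\{X_T<q\}}$ and independence of increments to obtain \eqref{eqn:h-opt-VaR}, then solve the scalar equation from \cref{prop:optim_eta_gen} for $\eta^*$. Your additional appeal to \cref{prop:c-range} and the observation that $\essinf X_T<q<\esssup X_T$ gives $0<\P(X_T<q)<1$ merely makes explicit the paper's remark that $\eta^*$ is well-defined.
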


\begin{proof}
Letting $f(X_T) = \mathds{1}_{\{X_T < q\}}$ in \cref{thm:h}, we have for any $\eta$ that
\begin{align*}
h^\eta(t,x,y) &= \frac{\E_{t,x+y}\left[e^{-\eta\,\mathds{1}_{\{X_T < q\}}}\right]}{\E_{t,x}\left[e^{-\eta\,\mathds{1}_{\{X_T < q\}}}\right]} \\
&= \frac{\E_{t,x+y}\left[1 + \left( e^{-\eta} - 1 \right) \mathds{1}_{\{X_T < q\}} \right]}{\E_{t,x}\left[1 + \left( e^{-\eta} - 1 \right) \mathds{1}_{\{X_T < q\}} \right]} \\
&= \frac{1 + \left( e^{-\eta} - 1 \right) \P\left(X_T - X_{t-}  < q - x - y \right)}{1 + \left( e^{-\eta} - 1 \right) \P\left(X_T - X_{t-}  < q - x  \right)} .
\end{align*}
We find the optimal Lagrange multiplier $\eta^*$ by applying \cref{prop:optim_eta_gen}. For each $h^\eta$, the corresponding measure $\Q^\eta$ has RN derivative given in  \cref{thm:optim_RN}. Thus, the
constraint \eqref{eq:VaR-constraint} becomes
\begin{align*}
    0 =  \E \left[ e^{-\eta\mathds{1}_{\{X_T < q\}}}\left( \mathds{1}_{\{X_T < q\}}-\alpha \right)\right] 
    =  \E \left[ \left( e^{-\eta} \mathds{1}_{\{X_T < q\}} + \mathds{1}_{\{X_T \geq q\}} \right) \left( \mathds{1}_{\{X_T < q\}}-\alpha \right)\right].
\end{align*}
Solving for $\eta$ gives the result, which is well-defined by the choice of $\alpha$ and $q$.
\end{proof}

We note that since $\alpha = \Q(X_T < q)$, the optimal Lagrange multiplier $\eta^*$ is the log-odds ratio of the event $\{X_T < q\}$ under the reference and the stressed probability measures. If $q$ is chosen smaller than $\VaR_{\alpha}(X_T)$, corresponding to a downward stress on VaR, then it holds $\P(X_T < q) \leq \Q(X_T < q)$, and thus $\eta^* \leq 0$. Similarly, if $q$ is larger than $\VaR_{\alpha}(X_T)$, corresponding to an upward stress on VaR, we have $\P(X_T < q) \geq \Q(X_T < q)$, and thus $\eta^* \geq 0$.

The Girsanov kernel $h^*(t,x,y)$ provides information on how the process is distorted under the stressed measure.
More specifically, Proposition \ref{prop:kappa_G} states the stressed intensity and severity distributions, and that these are, in general, state and time dependent. The specific form of $h^*(t,x,y)$ in this example leads to a few observations summarized in the next two propositions.
\begin{proposition}
\label{prop:h_VaR}
Let $X$ be a compound Poisson process with intensity rate $\kappa$ and severity random variable $\xi$, where $\xi \stackrel{\P}{\sim} G$ has non-negative support.
Let $h^*$ be given in \cref{prop:eta_h_VaR}, i.e., the Girsanov kernel associated with the solution to \cref{opti:main} and constraints $f(x) = \mathds{1}_{\{x < q\}}$ and $c = \alpha$, where $\alpha \in (0,1)$ and $\essinf X_T < q < \esssup X_T$. Then, $h^*(t,x,y)$ satisfies for all $y \ge 0$,
\begin{equation*}
\lim_{t \to T}  h^* (t,x,y) =
\begin{cases}
1 \qquad &   \text{if}\quad x > q \, ,
\\
e^{\eta^*}+ \left( 1 - e^{\eta^*} \right) \mathds{1}_{\{x < q - y\}} \qquad & \text{if} \quad x \le q \, .
\end{cases}
\end{equation*}
Moreover, if $x > q$, then $h^*(t,x,y) = 1$ for all $t \in [0,T]$.
\end{proposition}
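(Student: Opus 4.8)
The plan is to argue directly from the closed form of the Girsanov kernel obtained in \cref{prop:eta_h_VaR}, namely
\[
h^*(t,x,y) = \frac{1 + (e^{-\eta^*}-1)\,\P(X_T - X_{t^-} < q-x-y)}{1 + (e^{-\eta^*}-1)\,\P(X_T - X_{t^-} < q-x)},
\]
and to track the two probabilities appearing in it as $t\uparrow T$. The only structural facts needed are that, under $\P$, the increment $X_T - X_{t^-}$ is independent of $\F_{t^-}$, is non-negative $\P$-a.s.\ (the severity $\xi\sim G$ has non-negative support), and equals $0$ with probability at least $e^{-\kappa(T-t)}$ (the event that $X$ has no jump on $[t,T]$), a quantity that tends to $1$ as $t\uparrow T$.

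For the ``moreover'' claim and the first branch, suppose $x>q$. Then $q-x<0$ and, since $y\ge 0$, also $q-x-y<0$; because $X_T - X_{t^-}\ge 0$ $\P$-a.s., the events $\{X_T - X_{t^-}<q-x\}$ and $\{X_T - X_{t^-}<q-x-y\}$ are $\P$-null for \emph{every} $t\in[0,T]$. Hence the numerator and denominator of $h^*$ are each identically $1$, so $h^*(t,x,y)=1$ for all $t\in[0,T]$, and in particular $\lim_{t\to T}h^*(t,x,y)=1$.

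For the second branch I would treat the numerator and denominator separately. When $x<q$ we have $q-x>0$, so the inclusion $\{X_T-X_{t^-}=0\}\subseteq\{X_T-X_{t^-}<q-x\}$ gives $e^{-\kappa(T-t)}\le \P(X_T-X_{t^-}<q-x)\le 1$, and a squeeze yields $\P(X_T-X_{t^-}<q-x)\to 1$; thus the denominator converges to $e^{-\eta^*}$. For the numerator there are two subcases: if $x<q-y$ then $q-x-y>0$ and the same squeeze gives $\P(X_T-X_{t^-}<q-x-y)\to 1$, so the numerator also converges to $e^{-\eta^*}$ and $h^*\to 1$; if instead $q-y\le x<q$ then $q-x-y\le 0$, so by non-negativity of the increment $\P(X_T-X_{t^-}<q-x-y)=0$ for every $t$, the numerator is identically $1$, and $h^*\to e^{\eta^*}$. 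These two subcases together are exactly $e^{\eta^*}+(1-e^{\eta^*})\mathds{1}_{\{x<q-y\}}$, completing the proof.

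There is no substantial analytic obstacle: the argument is a case split governed by the signs of $q-x$ and $q-x-y$, combined with the elementary fact that non-negative increments force the relevant left-tail probabilities either to vanish identically or, via the atom of $X_T-X_{t^-}$ at the origin, to saturate to $1$. The one point deserving care is the boundary value $x=q$, where $q-x=0$ and $q-x-y\le 0$, so both probabilities vanish for all $t$ and $h^*\equiv 1$; accordingly the second branch of the stated limit should be read for $x<q$ (the single level $x=q$ being immaterial when it carries no mass along the path $X_{t^-}$, e.g.\ for atomless $G$ and $q\neq 0$), and I would add a one-line remark to that effect rather than try to finesse it.
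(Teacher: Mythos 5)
Your proof is correct and follows the same case decomposition as the paper's: split on the signs of $q-x$ and $q-x-y$, with a negative (or, under the strict inequality, non-positive) threshold forcing the corresponding left-tail probability to vanish identically because the increments $X_T-X_{t^-}$ are non-negative. The one technical difference lies in how you establish $\P(X_T-X_{t^-}<q-x)\to 1$ when $q-x>0$: the paper derives a first-order short-time expansion $1+\kappa\epsilon\left(G(q-x)-1\right)+o(\epsilon)$ by conditioning on the number of jumps in $[T-\epsilon,T]$, whereas you use the elementary squeeze $e^{-\kappa(T-t)}\le \P(X_T-X_{t^-}<q-x)\le 1$ via the no-jump event. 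Your route is shorter and fully sufficient here, since only the limit and not the rate is needed; the paper's expansion pays off only because it is reused verbatim in the later CVaR proposition. Your remark on the boundary $x=q$ is also correct and sharper than the paper: with the strict inequality appearing in the kernel of \cref{prop:eta_h_VaR}, both tail probabilities vanish identically at $x=q$, so $h^*(\cdot,q,y)\equiv 1$ rather than tending to $e^{\eta^*}$; the second branch of the statement should indeed be read for $x<q$ (the paper's proof quietly switches to a non-strict inequality at this step, which masks the discrepancy).
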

\begin{proof}
If $x > q$, then from the form of $h^*(t,x,y)$ given in \cref{prop:eta_h_VaR}, we observe that the probabilities in the numerator and denominator of $h^*(t,x,y)$ are both equal to 0 and thus $h^*(t,x,y)=1$ for any $t \in [0,T]$. 

Suppose that $x \leq q$. Using the properties of a Poisson process over small time intervals, we obtain an approximation of $\P\left(X_T - X_{t-}  \leq q - z  \right)$ for $z \in \R$ when $t$ is close to $T$. Using that under $\P$, $X_t = \sum_{i=1}^{N_t} \xi_i$, where $\xi_i$ are i.i.d. severity random variables with distribution $G$ and $N_t$ is a Poisson process with intensity $\kappa$,
we obtain for $\epsilon>0$ small
\begin{align*}
    \P\left(X_T - X_{T-\epsilon}  \leq q - z  \right) &= \P\left(\sum_{N_{T-\epsilon}}^{N_T} \xi_i  \leq q - z  \right) \\
    &= \P\left(N_{\epsilon} = 0 \right) + \P\left( \{ N_{\epsilon} = 1 \} \cap \{ \xi  \leq q - z  \} \right) + \P\left( \{ N_{\epsilon} > 1 \} \cap \{ \xi \leq q - z  \} \right) \\
    &= 1 - \kappa \epsilon + o(\epsilon) + (\kappa \epsilon + o(\epsilon)) \P\left(\xi  \leq q - z  \right) + o(\epsilon) \\
    &= 1 + \kappa \epsilon \left(G(q-z)  - 1 \right) + o(\epsilon). \numberthis \label{eqn:approx}
\end{align*}

We split the case $x \le q$ into two parts: first, we further assume that $x > q - y$, then the probability in the numerator of $h^*(t,x,y)$ is zero, thus substituting the approximation given in \eqref{eqn:approx} gives 
\begin{equation*}
    h^*(T- \epsilon,x,y) = \frac{1}{1 + \left( e^{-\eta^*} - 1 \right) \left[ 1 + \kappa \epsilon \left( G(q-x)  - 1 \right)+ o(\epsilon) \right]} .
\end{equation*}
When $\epsilon \to 0$, we have $h^*(T- \epsilon,x,y) \to e^{\eta^*}$.

For the second part we further assume that $x\leq q - y$. Substituting the short-time approximation \eqref{eqn:approx} into both the numerator and denominator of $h^*(t,x,y)$, gives
\begin{equation*}
    h^*(T- \epsilon,x,y) = \frac{1 + \left( e^{-\eta^*} - 1 \right) \left[ 1 + \kappa \epsilon \left( G(q-x-y)  - 1 \right) + o(\epsilon) \right]}{1 + \left( e^{-\eta^*} - 1 \right) \left[ 1 + \kappa \epsilon \left( G(q-x)  - 1 \right) + o(\epsilon) \right]},
\end{equation*}
which converges to 1 as $\epsilon \to 1$.
\end{proof}

\begin{proposition}
\label{prop:kappa_VaR}
Let $X$ be a compound Poisson process with rate $\kappa$ and severity random variable $\xi$, where $\xi \stackrel{\P}{\sim} G$ has non-negative support. Let $\kappa^*$ be the intensity process of $X$ under the stressed measure $\Q^*$ that solves \cref{opti:main} with constraints $f(x) = \mathds{1}_{\{x < q\}}$ and $c = \alpha$, where $\alpha \in (0,1)$ and $\essinf X_T < q < \esssup X_T$. Then $\kappa^*(t,x)$ satisfies
\begin{equation*}
\lim_{t \to T}  \kappa^*(t,x) =
\begin{cases}
\kappa \qquad &   \text{if}\quad x > q \, ,
\\
\kappa \left[ e^{\eta^*} + \left( 1 - e^{\eta^*} \right) G(q-x ) \right] \qquad & \text{if} \quad x \le q \, .
\end{cases}
\end{equation*}
Moreover, if $x > q$, then $\kappa^*(t,x) = \kappa$ for all $t \in [0,T]$, i.e., the stressed intensity is equal to the intensity under the reference measure $\P$.
\end{proposition}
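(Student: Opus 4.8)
The plan is to combine the representation of the stressed intensity from \cref{prop:kappa_G}, namely $\kappa^*(t,x) = \kappa \int_\R h^*(t,x,y)\,G(dy)$, with the pointwise limits of the Girsanov kernel $h^*$ already established in \cref{prop:h_VaR}, and to verify that the limit $t\to T$ may be moved inside the integral.

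For the case $x > q$, \cref{prop:h_VaR} gives $h^*(t,x,y) = 1$ for \emph{every} $t \in [0,T]$ and every $y \ge 0$ (when the severities are non-negative, the probabilities in the numerator and denominator of \eqref{eqn:h-opt-VaR} both vanish, since $X_T - X_{t^-}\ge 0 > q-x-y$ and $>q-x$). Since $G$ is a probability measure supported on $[0,\infty)$, substituting into \cref{prop:kappa_G} yields $\kappa^*(t,x) = \kappa \int_\R G(dy) = \kappa$ for all $t \in [0,T]$, which gives both the limit and the "moreover" assertion at once.

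For the case $x \le q$, I would pass the limit $t \to T$ inside $\int_\R h^*(t,x,y)\,G(dy)$ by dominated convergence: from the explicit form \eqref{eqn:h-opt-VaR}, the numerator and denominator of $h^*(t,x,y)$ both lie between $\min\{1, e^{-\eta^*}\}$ and $\max\{1, e^{-\eta^*}\}$, so $h^*(t,x,y) \le e^{|\eta^*|}$ uniformly in $(t,x,y)$, and the constant $e^{|\eta^*|}$ is trivially $G$-integrable. Using $h^*(t,x,y) \to e^{\eta^*} + (1 - e^{\eta^*})\mathds{1}_{\{x < q - y\}}$ from \cref{prop:h_VaR} and $\{x < q - y\} = \{y < q - x\}$,
\begin{equation*}
\lim_{t\to T} \kappa^*(t,x) = \kappa \int_\R \left( e^{\eta^*} + (1-e^{\eta^*})\,\mathds{1}_{\{x < q-y\}} \right) G(dy) = \kappa\left( e^{\eta^*} + (1-e^{\eta^*})\,\P(\xi < q-x) \right),
\end{equation*}
and identifying $\P(\xi < q-x)$ with $G(q-x)$ (exact at continuity points of $G$, consistent with the notation in the short-time expansion \eqref{eqn:approx}) gives the stated formula.

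The only genuinely non-trivial step is the interchange of limit and integral, which reduces to the uniform boundedness of $h^*$ noted above; everything else is bookkeeping built on \cref{prop:kappa_G} and \cref{prop:h_VaR}. A minor point worth flagging is the strict-versus-weak inequality in $\mathds{1}_{\{x < q-y\}}$ when $G$ has an atom at $q-x$, which only affects the result through the convention that writes $G(q-x)$ for the cumulative distribution function of $\xi$.
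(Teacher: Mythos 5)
Your proposal is correct and follows essentially the same route as the paper's proof: the case $x>q$ via $h^*\equiv 1$ from \cref{prop:h_VaR}, and the case $x\le q$ by substituting the pointwise limit of $h^*$ into the formula of \cref{prop:kappa_G} and interchanging limit and integral by dominated convergence. The only difference is that you make the dominating function explicit (the uniform bound $h^*\le e^{|\eta^*|}$), which the paper leaves implicit; this is a welcome clarification rather than a new argument.
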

\begin{proof}
If $x> q$, then by \cref{prop:h_VaR}, $h^*(t,x,y) = 1$ for all $t \in [0,T]$ and therefore $\kappa^*(t,x)= \int_{\R_+} \kappa \, G(dy) = \kappa$ for $t \in [0,T]$. Else, if $x \leq q$, we have by dominated convergence:
\begin{align*}
\lim_{t \to T}\kappa^*(t,x) 
&=
\kappa \int_{\R_+} \lim_{t \to T} h^*(t,x,y)  G(dy) \\
&= \kappa \int_{\R_+} \left[ e^{\eta^*} + \left( 1 - e^{\eta^*} \right) \mathds{1}_{\{x \leq q - y\}} \right]  G(dy) \\
&= \kappa \left[ e^{\eta^*}  + (1 - e^{\eta^*} )G(q-x) \right] .
\end{align*}
\end{proof}

Finally, in the following example we examine a numerical example of a 15\% upward stress on $\text{VaR}_\alpha(X_T)$.

\begin{figure}[htb!]
    \centering
    \includegraphics[width=10cm]{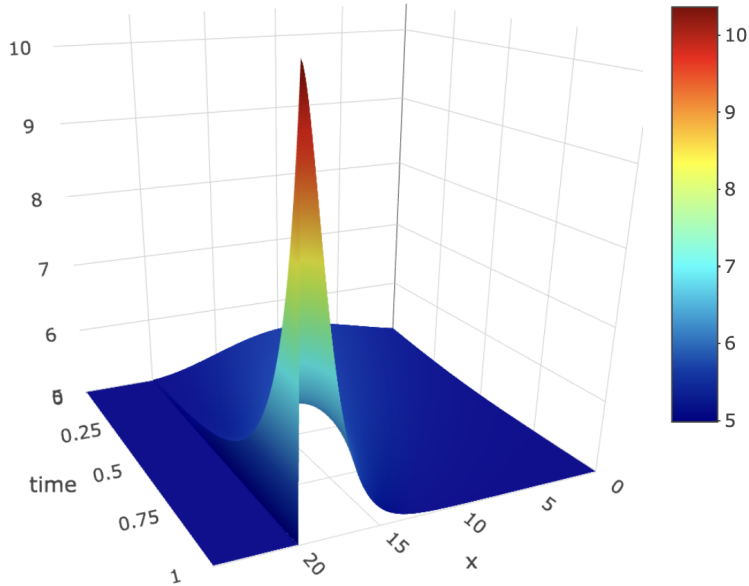}
    \caption{Stressed intensity $\kappa^*(t,x)$ for a 15\% stress upwards on $\text{VaR}_{0.9}(X_T)$. Parameters under $\P$ are $\kappa=5$ and $\xi \sim \Gamma(2,1)$, where $\E[\xi]=2$.}
    \label{fig:VaR_kappa_3d}
\end{figure}

\begin{example}
\label{ex:VaR}
Here we illustrate how the dynamics of a compound Poisson process are altered for a constraint on VaR. Specifically, under the reference measure $\P$, we consider a compound Poisson process $X$ over the time horizon $[0,T]$ where $T = 1$, intensity $\kappa = 5$, and the severity is Gamma distributed $\xi \sim \Gamma(a, b)$, where $a=2$ is the shape parameter and $b = 1$ is the rate parameter (i.e., $\E[\xi]=2$, $\text{Var}(\xi) = 2$). With these parameters, the 90\% quantile of $X_T$ is $\text{VaR}_{0.9}(X_T) = 17.4$. We apply a stress composed of a 15\% increase in VaR at the 90\% level, resulting in $\text{VaR}^\Q_{0.9}(X_T)$ of approximately 20. Simulation under the reference and the stressed measure is performed using methods described in \cref{sec:numerics}, which requires $h^*$ in \eqref{eqn:h-opt-VaR} to be plugged into Proposition \ref{prop:kappa_G}.

\cref{fig:VaR_kappa_3d} shows the optimal intensity $\kappa^*(t,x)$ for times $t \in [0,1]$ and $x\in [0,25]$. We see that if $x$ is greater than 20, then $\kappa^*(t,x) = \kappa = 5$, i.e., the stressed intensity is the same as under the reference measure. We further observe at time $T= 1$, when $x$ is close to 0, $\kappa^*(t,x)$ is close to $\kappa = 5$, and as $x$ approaches $q$, $\kappa^*(T,x)$ approaches $\kappa e^{\eta^*}=10.37$. These observations are consistent with \cref{prop:kappa_VaR}. 

\begin{figure}[bth!]
    \centering
    \includegraphics[width=6in]{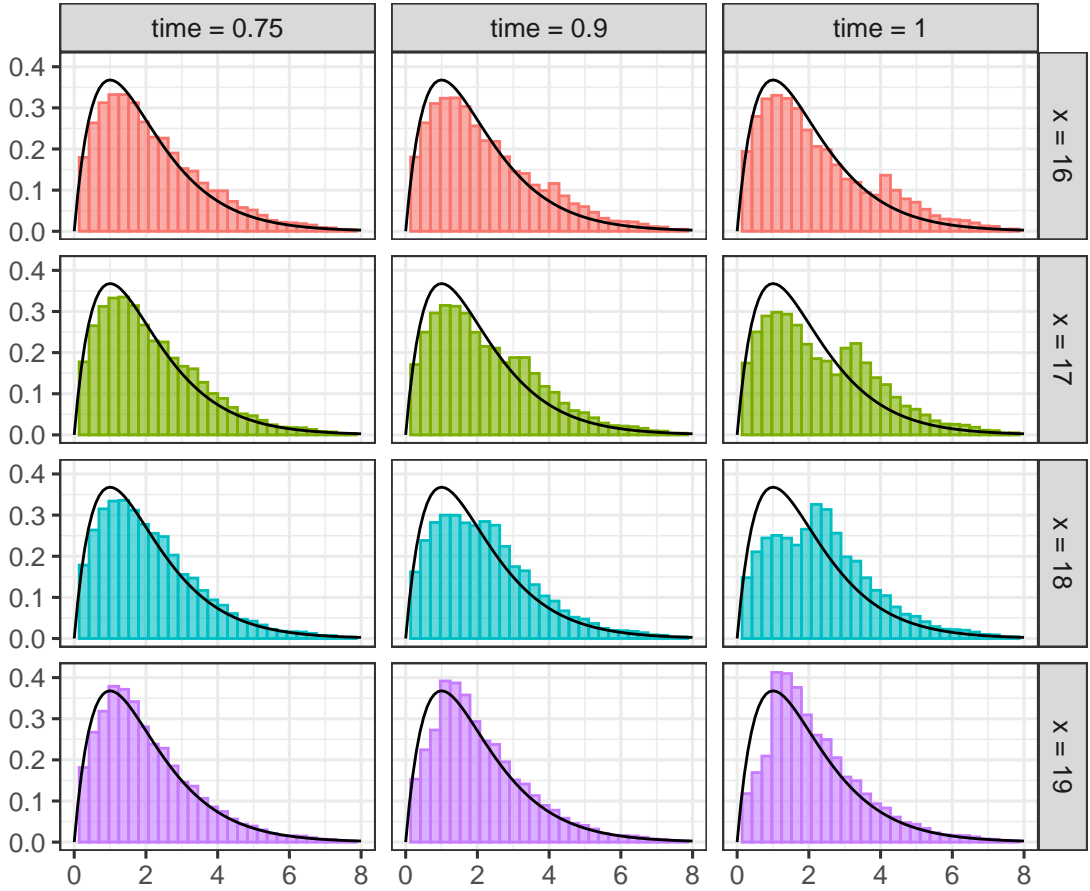}
    \caption{Histogram of 1 million draws from the stressed severity distribution $G^*(t,x,dy)$ under a 15\% increase in $\text{VaR}_{0.9}(X_T)$ for times $t =$ 0.75 (left column), $t=0.9$ (middle column), and $t=1$ (right column) and state space $x =$ 16, 17, 18, and 19 (top to bottom). 
    The black lines are the density of the severity distribution under $\P$, which is $\Gamma(2,1)$ for all values of time and state space.}
    \label{fig:VaR_G}
\end{figure}

\cref{fig:VaR_G} shows histograms of the stressed severity distribution $G^*(t,x,dy)$ for state space values $x \in \{16,17,18,19 \}$ and times $t \in \{ 0.75, 0.9, 1\}$. The black line depicts the density of the severity random variable under $\P$, which is $\Gamma(2,1)$. Consistent with \cref{prop:h_VaR}, we see that the severity distributions are more distorted when $x$ is close to, but less than, $q$. The change in the distribution also become more apparent as we approach terminal time. The distribution remains fairly close to the baseline at $t=0.75$ (left column), while we observe a significant change in the distribution at time 0.9 (middle column) and in particular at the terminal time, 1 (right column). At the terminal time, when $x=$ 18 and 19, we observe that the stressed severity distributions has a larger right tail which is induced by the stress, while for $x=16$ and 17, the distribution is bimodal.

Select sample paths of $X$ under the stressed measure and their corresponding stressed intensity processes are shown in \cref{fig:VaR_paths}. The black lines are the 10th, 50th, and 90th quantiles under both $\P$ (solid line) and $\Q^*$ (dashed line). The dashed grey line is $q = \text{VaR}^\Q_{0.9}(X_T) = 19.97$. Examine first the blue line -- once the path crosses $q$ (the dashed grey line) in the left panel, the corresponding stressed intensity process (right panel) falls to the value of the intensity under the reference measure, i.e., $\kappa = 5$. In contrast, the yellow path (left panel) never crosses $q$, and thus $\kappa^*(t,x)$ continues to increase as $t \to T$. Paths which never get close to the value of $q$ have intensity processes that are not extensively distorted, i.e., $\kappa^*(t,x)$ is close to $\kappa=5$. This is consistent with our observations of \cref{fig:VaR_kappa_3d}.

\begin{figure}[htb!]
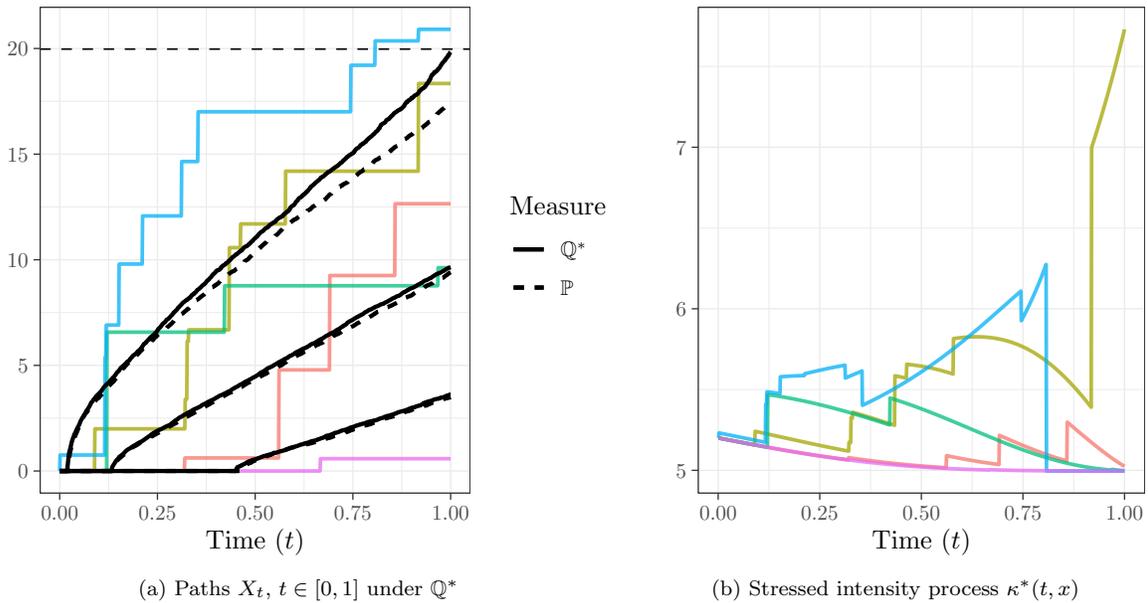

\centering
\subfloat[Paths $X_t, \, t \in {[0,1]}$ under $\Q^*$]{\includegraphics{figures/VaR_path_plot.pdf}}
\subfloat[Stressed intensity process $\kappa^*(t,x)$]{\includegraphics{figures/VaR_kappa_plot.pdf}}
\caption{Panel (a) displays sample paths under the stressed measure. 10,000 paths were simulated and five are highlighted along with the 10\%, 50\%, and 90\% quantiles (black lines) for $\P$ and $\Q^*$. Panel (b) displays the corresponding (same colour) sample path of the  stressed intensity processes. Under $\P$, the severity is $\xi \sim \Gamma(2,1)$ and the intensity is $\kappa=5$. A stress of a 15\% increase in $\text{VaR}_{0.9}(X_T)$ is applied.}
\label{fig:VaR_paths}
\end{figure}
\end{example}

\subsection{Conditional Value-at-Risk}
Next, we examine the effect on the dynamics of $X$ under a stress on the risk measure CVaR. The CVaR, also called Expected Shortfall, of a random variable $Z$ under a measure $\Q$ at level $\alpha \in [0,1)$ is \citep{acerbi2002coherence}
\begin{equation*}
    \text{CVaR}^\Q_\alpha(Z) = \frac{1}{1-\alpha} \int_\alpha^1 \text{VaR}^\Q_u(Z) \, du\,.
\end{equation*}
For notational simplicity, we write $\text{CVaR}_\alpha(Z) := \text{CVaR}^\P_\alpha(Z)$.  

Since the value of CVaR$_\alpha^\Q$ depends on that of VaR$_\alpha^\Q$, we impose constraints on both risk measures. We exploit the representation 
\begin{equation*}
    \text{CVaR}^\Q_\alpha(Z) = \frac{1}{1-\alpha} \E[(X-\text{VaR}^\Q_\alpha(Z))_+] + \text{VaR}^\Q_\alpha(Z)
\end{equation*}
to write this optimization problem in the form of \cref{opti:main} using the following two constraints:
\begin{equation}
\label{eq:CVaR-constraint}
    \Q(X_T < q) = \alpha \quad \text{and} \quad \E^\Q[(X_T-q)_+] = (s-q)(1-\alpha).
\end{equation}
The solution to \cref{opti:main} with the above constraints is given in the next proposition.

\begin{proposition}
\label{prop:h_eta_CVaR} Let $K_{X_T | X_T > q}(s)$ denote the cumulant generating function of $X_T|X_T > q$. Suppose $\alpha \in (0,1)$, $\essinf X_T < q < s < \esssup X_T$, and there exists an $r>0$ such that $K_{X_T | X_T > q}(s)<\infty$ for all $s \in (-r,r)$. Then the solution to \cref{opti:main} with constraints $f_1(x) = \mathds{1}_{\{x < q\}}$, $c_1 = \alpha$, $f_2(x) = (x-q)\mathds{1}_{\{x \geq q\}}$, and $c_2 = (s-q)(1-\alpha)$ is the measure $\Q^*$ characterized by the Girsanov kernel
\begin{equation*}
    h^*(t,x,y) = \frac{ e^{-\eta_1^*} \, \P\left(X_T - X_{t-}  \leq q - x - y \right) + e^{-\eta_2^*(x+y-q)} \E\left[  e^{-\eta_2^*(X_T - X_{t-})}  \mathds{1}_{\{X_T -X_{t-} \geq q -x-y\}}\right] }{ e^{-\eta_1^*} \, \P\left(X_T - X_{t-}  \leq q - x  \right) + e^{-\eta_2^*(x-q)} \E\left[  e^{-\eta_2^*(X_T - X_{t-})}  \mathds{1}_{\{X_T -X_{t-} \geq q -x\}}\right] },
\end{equation*}
where $\eta_2^*$ is the solution to
\begin{equation}
\label{eq:eta2_eq}
    \E \left[ e^{-\eta_2^* (X_T-q)} (X_T-s) \mathds{1}_{\{X_T > q\}} \right] = 0
\end{equation}
and
\begin{equation*}
    \eta_1^* = \log \left(\frac{(1-\alpha)}{\alpha}\frac{\P(X_T < q)}{\E \left[ e^{-\eta_2 (X_T-q)} \mathds{1}_{\{X_T \geq q\}} \right]}\right) = \eta^\text{VaR} + \log \left(\frac{\P(X_T \geq q)}{\E \left[ e^{-\eta_2 (X_T-q)} \mathds{1}_{\{X_T \geq q\}} \right]}\right).
\end{equation*}
Here, $\eta^\text{VaR}$ refers to the optimal Lagrange multiplier when only a VaR constraint is applied (\cref{prop:eta_h_VaR}). The optimization problem has a unique solution.
\end{proposition}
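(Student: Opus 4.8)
The plan is to specialize \cref{thm:h} to $f_1(x)=\mathds{1}_{\{x<q\}}$ and $f_2(x)=(x-q)\mathds{1}_{\{x\ge q\}}$, and then to pin down the Lagrange multipliers through \cref{prop:optim_eta_gen}. First I would note that $\{X_T<q\}$ and $\{X_T\ge q\}$ are complementary, that $f_2$ vanishes on the first and $f_1$ on the second, so that
\[
\exp\!\Big(-\eta_1 f_1(X_T)-\eta_2 f_2(X_T)\Big)= e^{-\eta_1}\,\mathds{1}_{\{X_T<q\}}+e^{-\eta_2(X_T-q)}\,\mathds{1}_{\{X_T\ge q\}}\,.
\]
Since $X_T=X_{t^-}+(X_T-X_{t^-})$ with the increment independent of $\F_t$, conditioning on $X_{t^-}=z$ only shifts the running level, so the conditional expectations in the numerator and denominator of \eqref{eq:h} (evaluated at $z=x+y$ and $z=x$) equal, after pulling $e^{-\eta_2(z-q)}$ out of the second term, $e^{-\eta_1}\P(X_T-X_{t^-}\le q-z)+e^{-\eta_2(z-q)}\E[\,e^{-\eta_2(X_T-X_{t^-})}\mathds{1}_{\{X_T-X_{t^-}\ge q-z\}}]$; their ratio is the Girsanov kernel of \cref{thm:h} for these $f_i$, which at the optimal multipliers found below is the asserted $h^*(t,x,y)$.

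Next I would determine $\boldeta^*$. By \cref{prop:optim_eta_gen}, together with the RN derivative in \cref{thm:optim_RN}, the multipliers solve \eqref{eq:find_eta} for $i=1,2$. Inserting the split of $\exp(-\eta_1 f_1-\eta_2 f_2)$ into the $i=1$ equation and using $\mathds{1}_{\{X_T<q\}}\mathds{1}_{\{X_T\ge q\}}=0$ collapses it to
\[
e^{-\eta_1}(1-\alpha)\,\P(X_T<q)=\alpha\,\E[\,e^{-\eta_2(X_T-q)}\mathds{1}_{\{X_T\ge q\}}]\,,
\]
which is the stated formula for $\eta_1^*$ in terms of $\eta_2^*$; multiplying and dividing the argument of the logarithm by $\P(X_T\ge q)$ and recalling $\eta^{\text{VaR}}$ from \cref{prop:eta_h_VaR} gives the alternative form. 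Substituting this identity into the $i=2$ equation cancels the $\mathds{1}_{\{X_T<q\}}$ contribution, and the $\alpha$ and $1-\alpha$ weights recombine, leaving $\E[e^{-\eta_2(X_T-q)}\big((X_T-q)-(s-q)\big)\mathds{1}_{\{X_T\ge q\}}]=0$, i.e.\ \eqref{eq:eta2_eq}.

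The step I expect to be the main obstacle is existence and uniqueness of a root $\eta_2^*$ of \eqref{eq:eta2_eq}: unlike the single VaR constraint of \cref{prop:eta_h_VaR}, the map $\eta_2\mapsto\E[e^{-\eta_2(X_T-q)}(X_T-s)\mathds{1}_{\{X_T\ge q\}}]$ need not be monotone. To handle it I would pass to the conditional probability measure $\P(\,\cdot\mid X_T>q)$ — well defined since $q<\esssup X_T$ — under which \eqref{eq:eta2_eq} is precisely the single-constraint equation \eqref{eq:find_eta_univariate} of \cref{prop:c-range} with $f=X_T-q$ (a.s.\ positive on this event) and $c=s-q$. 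Because $X_T-q>0$ there, the conditional cumulant generating function $K_{(X_T-q)\mid X_T>q}$ is automatically finite on $(-\infty,0]$, and the hypothesis $K_{X_T\mid X_T>q}(\cdot)<\infty$ on $(-r,r)$ (which differs from it only by a linear term) extends finiteness to $(-\infty,r)$; hence its derivative is continuous and strictly increasing there, with limit $\essinf(X_T-q\mid X_T>q)$ at $-\infty$ and limit $L\in(0,+\infty]$ at $r^-$. Using $q<s$ at the lower end and $s<\esssup X_T$ at the upper end, $s-q$ lies strictly in $(\essinf(X_T-q\mid X_T>q),\,L)$, so \cref{prop:c-range} yields a unique $-\eta_2^*$ solving $K'_{(X_T-q)\mid X_T>q}(-\eta_2^*)=s-q$, equivalently the unique root of \eqref{eq:eta2_eq}. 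With $\eta_1^*$ then fixed explicitly and $\E[\exp(-\sum_i\eta_i^* f_i(X_T))]<\infty$, \cref{thm:optim_RN} applies, and uniqueness of $\Q^*=\Q^{\boldeta^*}$ follows from \cref{thm:h} via convexity of $\Qset$ and strict convexity of the KL divergence.
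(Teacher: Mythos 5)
Your proposal follows the paper's proof essentially step for step: the same decomposition $\exp(-\eta_1 f_1(X_T)-\eta_2 f_2(X_T))=e^{-\eta_1}\mathds{1}_{\{X_T<q\}}+e^{-\eta_2(X_T-q)}\mathds{1}_{\{X_T\ge q\}}$ plugged into \cref{thm:h} to get the kernel, the same elimination of $\eta_1$ from the first constraint, and the same reduction of the second constraint to $K'_{(X_T-q)\,|\,X_T>q}(-\eta_2)=s-q$. The only difference is organizational: you invoke \cref{prop:c-range} under the conditional measure $\P(\cdot\mid X_T>q)$ where the paper redoes the intermediate-value argument on the CGF derivative inline via the three cases $s\lessgtr s^\P$, and your claim that $s-q$ lies strictly below the limit $L$ of that derivative at $r^-$ is exactly as (im)precisely justified as the paper's own passage to the limits $\eta_2\to\pm\infty$, so nothing is lost or gained.
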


\begin{proof}
First, we find the Girsanov kernel for given Lagrange multipliers $\boldeta=(\eta_1,\eta_2)$. We compute
\begin{align*}
    \E_{t,x}\left[\exp \left( -\eta_1\,f_1(X_T) - \eta_2 f_2(X_T)\right)\right] &= \E_{t,x}\left[e^{-\eta_1 \Id_{\{X_T < q\}}  - \eta_2 (X_T-q) \Id_{\{X_T \geq q\}}} \right] \\
    &= \E_{t,x}\left[e^{-\eta_1} \Id_{\{X_T < q\}}  + e^{- \eta_2 (X_T-q)} \Id_{\{X_T \geq q\}} \right] \\
     &= e^{-\eta_1} \P (X_T < q | X_{t-}=x) + e^{\eta_2 q} \, \E \left[ e^{- \eta_2 X_T} \Id_{\{X_T \geq q\}} | X_{t-}=x \right] \\
    &= e^{-\eta_1} \P (X_T - X_{t-} < q-x) + e^{-\eta_2 (x-q)}  \E \left[ e^{- \eta_2 (X_T-X_{t-})} \Id_{\{X_T-X_{t-} \geq q - x \}}  \right] \, ,
\end{align*}
and the formula for $h^*$, if the optimal Lagrange multipliers exist, follows from \cref{thm:h}. Next, we show the existence of the Lagrange multipliers and their representation.
By \cref{thm:optim_RN}, the RN derivative of the measure $\Q^\boldeta$ that solves \cref{opti:main} with constraints \eqref{eq:CVaR-constraint} is
\begin{equation*}
    \frac{d\Q^\boldeta}{d\P} = \frac{e^{-\eta_1 f_1(X_T)-\eta_2 f_2(X_T)}}{\E[e^{-\eta_1 f_1(X_T)-\eta_2 f_2(X_T}]} =\frac{e^{-\eta_1} \mathds{1}_{\{X_T < q\}} + e^{-\eta_2 (X_T - q)} \mathds{1}_{\{X_T \geq q\}}}{\E[e^{-\eta_1} \mathds{1}_{\{X_T < q\}} + e^{-\eta_2 (X_T - q)} \mathds{1}_{\{X_T \geq q\} }]}  .
\end{equation*}
Using this representation of the RN derivative, the first constraint can be written as
\begin{align*}
    0 = \E \left[ \left( e^{-\eta_1} \mathds{1}_{\{X_T < q\}} + e^{-\eta_2 (X_T - q)} \mathds{1}_{\{X_T \ge q\}} \right) \left(\mathds{1}_{\{X_T < q\}} - \alpha \right) \right].
\end{align*}
Solving for $\eta_1$ gives 
\begin{equation}
\label{eq:eta1}
    \eta_1 = \log\left( \frac{1-\alpha}{\alpha} \frac{\P(X_T < q)}{\E[e^{-\eta_2 (X_T - q)} \mathds{1}_{\{X_T \geq q\}}]} \right).
\end{equation}
Analogously, we may write the second constraint as
\begin{align*}
    0 = \E \left[ \left( e^{-\eta_1} \mathds{1}_{\{X_T < q\}} + e^{-\eta_2 (X_T - q)} \mathds{1}_{\{X_T \geq q\}} \right) \Big((X_T - q)\mathds{1}_{\{X_T \geq q\}} - (s-q)(1-\alpha) \Big) \right].
\end{align*}
Simplifying this expression and substituting in the above expression for $\eta_1$ gives 
\begin{equation}\label{eq:eta-2-eqn}
   0 = \E \left[ e^{-\eta_2 (X_T-q)} (X_T-s) \mathds{1}_{\{X_T \geq q\}} \right], 
\end{equation}
providing \eqref{eq:eta2_eq}.
To determine if \eqref{eq:eta-2-eqn} has a solution, we first rewrite for fixed $\boldeta$ the RN derivative using \eqref{eq:eta1}:
\begin{equation*}
    \frac{d\Q^\boldeta}{d\P} =  \frac{\alpha}{\P(X_T < q)} \mathds{1}_{\{X_T < q\}} + \frac{1-\alpha}{\E[e^{-\eta_2 (X_T - q)} \mathds{1}_{\{X_T \geq q\}}]} e^{-\eta_2 (X_T - q)} \mathds{1}_{\{X_T \geq q\}} .
\end{equation*}
Therefore, the second constraint becomes
\begin{align*}
    (s-q)(1-\alpha) = \E^{\Q^\boldeta} \left[ (X_T-q)_+ \right] = \E \left[  \frac{1-\alpha}{\E[e^{-\eta_2 (X_T - q)} \mathds{1}_{\{X_T \geq q\}}]} e^{-\eta_2 (X_T - q)} (X_T - q) \mathds{1}_{\{X_T \geq q\}}  \right] .
\end{align*}
Simplifying the above expression gives 
\begin{equation}
\label{eq:eta2_eq_key}
    s-q 
    = \frac{\E \left[ e^{-\eta_2 (X_T - q)} (X_T - q)_+ \right]}{\E [e^{-\eta_2 (X_T - q)}\mathds{1}_{\{X_T > q\}} ]} 
    = \frac{\E \left[ e^{-\eta_2 (X_T - q)} (X_T - q) | X_T > q \right]}{\E [e^{-\eta_2 (X_T - q)}  | X_T > q]} .
\end{equation}

We observe that the right-hand side of \eqref{eq:eta2_eq_key} is the derivative of the cumulant generating function of $X_T - q | X_T > q$ evaluated at $-\eta_2$. We will denote the derivative of the cumulant generating function of $X_T - q | X_T > q$ by $\frac{d}{ds}K_{X_T - q | X_T > q}(s)$.  Since moment generating functions are log-convex, $\frac{d}{ds}K_{X_T - q | X_T > q}(s) \big |_{s=-\eta_2}$ is decreasing in $\eta_2$.
To show that there exists a solution to \cref{{eq:eta2_eq_key}}, we denote $\alpha^\P= \P(X_t < q)$ and $s^\P = \E [X_T | X_T > q]$.

Consider the following cases: first, if $s = s^\P$, then $\eta_2 = 0$ solves \eqref{eq:eta2_eq_key}. Second, assume that $s > s^\P$. Observe that when $\eta_2 = 0$, the right-hand side of \eqref{eq:eta2_eq_key} is $\E [X_T - q | X_T > q] = s^\P-q < s-q$ by assumption. Moreover, as $\eta_2 \to - \infty$, $\frac{d}{ds} K_{X_T - q | X_T > q}(s) \big |_{s=-\eta_2}$ is monotonically increasing, and in fact,
\begin{equation*}
    \lim_{\eta_2 \to - \infty} \frac{d}{ds}K_{X_T - q | X_T > q}(s) \big |_{s=-\eta_2} = \esssup X_T - q > s-q .
\end{equation*}
Thus we conclude \eqref{eq:eta2_eq_key} must have a solution for some $\eta_2 < 0$. 

Finally, suppose $s < s^\P$. When $\eta_2 = 0$, the right-hand side of \eqref{eq:eta2_eq_key} is $\E [X_T - q | X_T > q] > s - q$.  Since as $\eta_2 \to  \infty$, $\frac{d}{ds} K_{X_T - q | X_T > q}(s) \big |_{s=-\eta_2}$ is monotonically decreasing and
\begin{equation*}
    \lim_{\eta_2 \to \infty} \frac{d}{ds} K_{X_T - q | X_T > q}(s) \big |_{s=-\eta_2} = \essinf X_T - q< s-q,
\end{equation*}
we conclude \eqref{eq:eta2_eq_key} must have a solution for some $\eta_2 > 0$.
\end{proof}

We note that the sign of $\eta_2^*$ is as follows:
\begin{corollary}
For $\alpha \in (0,1)$, let $q$ and $s$ be the stressed $\VaR_\alpha$ and $\text{CVaR}_\alpha$ values, respectively. Define $\alpha^\P= \P(X_T < q)$ and $s^\P = \E [X_T | X_T > q]$. If $s < s^\P$, then $\eta_2^* > 0$, and if $s > s^\P$, then $\eta_2^* < 0$.
\end{corollary}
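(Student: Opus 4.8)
The plan is to read off the sign of $\eta_2^*$ directly from the monotone characterization established in the last part of the proof of \cref{prop:h_eta_CVaR}. Recall that there the optimal multiplier $\eta_2^*$ is shown to be the solution of \eqref{eq:eta2_eq_key}, i.e.
\begin{equation*}
    s - q = \frac{d}{ds}K_{X_T - q\mid X_T > q}(s)\Big|_{s = -\eta_2^*},
\end{equation*}
and that the map $\eta_2 \mapsto \frac{d}{ds}K_{X_T - q\mid X_T > q}(s)\big|_{s=-\eta_2}$ is monotonically decreasing on the (open) interval on which it is finite; this decreasingness combines the convexity of the cumulant generating function $K_{X_T-q\mid X_T>q}$ with the inner sign reversal $s=-\eta_2$.

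First I would evaluate this map at $\eta_2 = 0$: it equals $\frac{d}{ds}K_{X_T-q\mid X_T>q}(0) = \E[X_T - q \mid X_T > q] = s^\P - q$. Next I would invoke strict monotonicity: since the map is strictly decreasing in $\eta_2$, the target value $s-q$ is attained at a point strictly to the right of $0$ exactly when $s - q < s^\P - q$, that is when $s < s^\P$, so that $\eta_2^* > 0$; and it is attained strictly to the left of $0$ exactly when $s - q > s^\P - q$, i.e. when $s > s^\P$, giving $\eta_2^* < 0$. (The borderline $s = s^\P$ forces $\eta_2^* = 0$, which is consistent with the dichotomy in the statement.) Finally I would note that existence and uniqueness of $\eta_2^*$ are already guaranteed by \cref{prop:h_eta_CVaR}, so the sign is well defined.

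I do not expect any real obstacle here: the corollary is essentially a one-line corollary of the monotone relation derived in the proof of \cref{prop:h_eta_CVaR}. The only point that needs care is the direction of monotonicity — because $K_{X_T-q\mid X_T>q}$ is evaluated at $-\eta_2$, increasingness in its own argument translates into decreasingness in $\eta_2$ — but this has already been settled in the preceding proof, so nothing substantial remains to be shown.
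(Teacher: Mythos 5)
Your argument is correct and is essentially the paper's own: the corollary is stated there to follow directly from the proof of \cref{prop:h_eta_CVaR}, which uses exactly the facts you cite — that $\frac{d}{ds}K_{X_T-q\mid X_T>q}(s)\big|_{s=-\eta_2}$ equals $s^\P-q$ at $\eta_2=0$ and is decreasing in $\eta_2$ by (log-)convexity of the moment generating function — to place $\eta_2^*$ on the correct side of zero. Nothing further is needed.
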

This follows directly from the proof of \cref{prop:h_eta_CVaR}. If $X_T$ is continuous,  $s^\P = \text{CVaR}_{\alpha^\P}(X_T)$.
Thus, the sign of $\eta_2^*$ depends in a direct manner on the stress on CVaR while its dependence on the stress on VaR is indirectly through $\alpha^\P$.

\begin{proposition}
Let the severity distribution $G$ have non-negative support. Let $h^*$ be given in \cref{prop:h_eta_CVaR}, i.e., the Girsanov kernel associated with the solution to \cref{opti:main} with constraints \eqref{eq:CVaR-constraint}, where $\alpha \in (0,1)$, $\essinf X_T < q < s < \esssup X_T$, and there exists an $r>0$ such that $K_{X_T | X_T > q}(s)<\infty$ for all $s \in (-r,r)$. Then for all $y \ge 0$, $h^*(t,x,y)$ satisfies
\begin{equation*}
\lim_{t \to T}  h^* (t,x,y) =
\begin{cases}
e^{-\eta_2^* y} \qquad &   \text{if}\quad x > q \, ,
\\
\mathds{1}_{\{y < q - x\}} + e^{\eta_1^* -\eta_2^* (x+y-q)} \mathds{1}_{\{y \geq q-x\}}  \qquad  & \text{if} \quad x \le q \, .
\end{cases}
\end{equation*}
Moreover, if $x > q$, then $h^*(t,x,y) = e^{-\eta_2^* y}$ for any $t \in [0,T]$.
\end{proposition}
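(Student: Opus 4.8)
The plan is to follow the strategy of the proof of \cref{prop:h_VaR}: use the explicit form of $h^*$ from \cref{prop:h_eta_CVaR}, exploit that $\xi$ — and hence every increment $X_T-X_{t^-}$ — is $\P$-a.s.\ non-negative, and feed in the short-time asymptotics of the compound Poisson process. Throughout, $\eta_1^*,\eta_2^*$ are the finite Lagrange multipliers from \cref{prop:h_eta_CVaR}, and I would split the state space into the three generic regimes $x>q$, $\{x\le q,\ x+y<q\}$, and $\{x\le q,\ x+y>q\}$; the boundaries $\{x=q\}$ and $\{x+y=q\}$ form a Lebesgue-null set and are handled exactly as in \cref{prop:h_VaR}.

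For $x>q$, so $q-x<0$ and, since $y\ge 0$, also $q-x-y<0$, non-negativity of the jumps forces $\P(X_T-X_{t^-}\le q-x)=\P(X_T-X_{t^-}\le q-x-y)=0$ identically and $\mathds{1}_{\{X_T-X_{t^-}\ge q-x\}}=\mathds{1}_{\{X_T-X_{t^-}\ge q-x-y\}}=1$ $\P$-a.s. Substituting into the formula for $h^*$, the remaining exponential expectations $\E[e^{-\eta_2^*(X_T-X_{t^-})}]$ — finite because the increment is compound Poisson on $[t,T]$ and the standing assumption on $K_{X_T|X_T>q}$ propagates to finiteness of $\E[e^{-\eta_2^*\xi}]$ and hence of the increment's moment generating function at $-\eta_2^*$ — cancel between numerator and denominator, leaving $h^*(t,x,y)=e^{-\eta_2^*(x+y-q)}/e^{-\eta_2^*(x-q)}=e^{-\eta_2^*y}$ for every $t\in[0,T]$. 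This settles both the $x>q$ branch of the limit and the final assertion of the proposition.

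For $x\le q$ I would put $t=T-\epsilon$ and extend the short-time expansion used in the proof of \cref{prop:h_VaR} to the exponentially weighted quantities in $h^*$: conditioning on the number of jumps in $(T-\epsilon,T]$ gives, for $a\le 0$, $\E[e^{-\eta_2^*(X_T-X_{T-\epsilon})}\mathds{1}_{\{X_T-X_{T-\epsilon}\ge a\}}]=\E[e^{-\eta_2^*(X_T-X_{T-\epsilon})}]=1+\kappa\epsilon(\E[e^{-\eta_2^*\xi}]-1)+o(\epsilon)$, whereas for $a>0$ the same truncated expectation is $\kappa\epsilon\,\E[e^{-\eta_2^*\xi}\mathds{1}_{\{\xi\ge a\}}]+o(\epsilon)=O(\epsilon)$, and $\P(X_T-X_{T-\epsilon}\le a)=1+\kappa\epsilon(G(a)-1)+o(\epsilon)$ for $a\ge 0$ while it is identically $0$ for $a<0$. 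If $x+y<q$, hence $q-x-y>0$ and $q-x>0$, both the numerator and the denominator of $h^*(T-\epsilon,x,y)$ converge to $e^{-\eta_1^*}$, so $\lim_{t\to T}h^*(t,x,y)=1$; if $x+y>q$, hence $q-x-y<0$ and, for $x<q$, $q-x>0$, the numerator converges to $e^{-\eta_2^*(x+y-q)}$ and the denominator to $e^{-\eta_1^*}$, so $\lim_{t\to T}h^*(t,x,y)=e^{\eta_1^*-\eta_2^*(x+y-q)}$. These two values are precisely $\mathds{1}_{\{y<q-x\}}+e^{\eta_1^*-\eta_2^*(x+y-q)}\mathds{1}_{\{y\ge q-x\}}$ on the respective regions.

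I expect the main obstacle to be the bookkeeping: correctly identifying, in each regime, which of the four terms composing the numerator and denominator of $h^*$ survive in the limit, and tracking their order — an honest constant, an $O(\epsilon)$ term, or an identically vanishing one. The subtle point is the integrability of the truncated exponential moments $\E[e^{-\eta_2^*(X_T-X_{t^-})}\mathds{1}_{\{\cdots\}}]$ for $t$ near $T$, which must be derived from the hypothesis that $K_{X_T|X_T>q}$ is finite on a neighbourhood of the origin; once that is secured, passing to the limit term-by-term is immediate since the surviving contributions are either constants or $o(1)$.
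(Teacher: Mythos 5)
Your proposal is correct and follows essentially the same route as the paper's proof: the $x>q$ case is settled exactly as you describe (vanishing probabilities, indicators identically one, cancellation of the exponential moments), and for $x\le q$ the paper uses the same short-time expansion obtained by conditioning on the number of jumps in $(T-\epsilon,T]$, with your two sub-cases $x+y<q$ and $x+y>q$ corresponding to the paper's split into $x\le q-y$ and $q-y<x\le q$. Your explicit separation of the boundary set $\{x+y=q\}$ is a reasonable (and slightly more careful) handling of a case the paper's statement and proof also do not resolve exactly.
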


\begin{proof}
First, if $x > q$, then $q-x < 0$ and $q-x-y < 0$, thus the probabilities in the numerator and denominator of $h^*(t,x,y)$ from \cref{prop:h_eta_CVaR} are both 0, and the indicator functions in the expectations are both identically equal to 1. This gives
\begin{equation*}
    h^*(t,x,y) = \frac{e^{-\eta_2^*(x+y-q)} \E \left[  e^{-\eta_2^*(X_T - X_{t-})}  \right] }{ e^{-\eta_2^*(x-q)} \E \left[  e^{-\eta_2^*(X_T - X_{t-})} \right] }  = e^{-\eta_2^* y} .
\end{equation*}
Second, if $x \leq q$, we examine the limiting behaviour of $h^*(t,x,y)$. In addition to the approximation of $\P\left(X_T - X_{t-}  \leq q - z  \right)$ derived in the proof of \cref{prop:h_VaR}, we further use the properties of a Poisson process over short intervals to obtain an approximation of $\E\left[e^{-\eta_2(X_T - X_{t-})}  \mathds{1}_{\{X_T -X_{t-} \geq q - z\}}\right] $ for $t = T-\epsilon$, $\epsilon>0$, as follows:
\begin{align*}
    \E\left[  e^{-\eta_2^*(X_T - X_{T-\epsilon})}  \mathds{1}_{\{X_T -X_{T-\epsilon} \geq q - z\}}\right]  &= \E\left[  e^{-\eta_2^* \xi}  \mathds{1}_{\{\xi \geq q - z\}}\right] \P(N_\epsilon = 1) + \E\left[\mathds{1}_{\{0 \geq q - z\}}\right] \P(N_\epsilon = 0 ) + o(\epsilon) \\
    &= \E\left[  e^{-\eta_2^* \xi}  \mathds{1}_{\{\xi \geq q - z\}}\right] (\kappa \epsilon + o(\epsilon)) + \E\left[\mathds{1}_{\{0 \geq q - z\}}\right] (1 -\kappa \epsilon + o(\epsilon)) + o(\epsilon) \\
    &= \kappa \epsilon \left( \E\left[  e^{-\eta_2^* \xi}  \mathds{1}_{\{\xi \geq q - z\}}\right] - \mathds{1}_{\{z \geq q \}} \right) + \mathds{1}_{\{z \geq q \}} + o(\epsilon) \, .
\end{align*}
Using these two approximations, we consider two cases: first, if $q-y < x \leq q$, then for $t = T-\epsilon$ we have
\begin{align*}
    h^*(T- \epsilon,x,y) = \frac{  e^{-\eta_2^*(x+y-q)} \left[\kappa \epsilon \left( \E\left[  e^{-\eta_2^* \xi}  \mathds{1}_{\{\xi \geq 0\}}\right] - 1 \right) + 1 \right] + o(\epsilon)}{ e^{-\eta_1^*} \, \left(1 + \kappa \epsilon \left( G(q-x)  - 1 \right)  \right) + e^{-\eta_2^*(x-q)} \left( \kappa \epsilon \, \E\left[  e^{-\eta_2^* \xi}  \mathds{1}_{\{\xi \geq q - x\}}\right] \right) + o(\epsilon) } \,.
\end{align*}
Taking the limit, we obtain
\begin{equation*}
    \lim_{\epsilon \to 0} h^*(T- \epsilon,x,y) = \frac{  e^{-\eta_2^*(x+y-q)} }{ e^{-\eta_1^*} } = e^{\eta_1^* -\eta_2^*(x+y-q)}\,.
\end{equation*}
Else, if $x\leq q - y$, we have
\begin{align*}
    h^*(T- \epsilon,x,y) 
    =
    \frac{ e^{-\eta_1^*} \, \left(1 + \kappa \epsilon \left( G(q-x-y)  - 1 \right)  \right) + e^{-\eta_2^*(x+y-q)} \, \kappa \epsilon \, \E\left[  e^{-\eta_2^* \xi}  \mathds{1}_{\{\xi \geq q - x -y \}}\right]  + o(\epsilon)}
    { e^{-\eta_1^*} \, \left(1 + \kappa \epsilon \left( G(q-x)  - 1 \right)  \right) + e^{-\eta_2^*(x-q)} \,\kappa \epsilon \, \E\left[  e^{-\eta_2^* \xi}  \mathds{1}_{\{\xi \geq q - x\}}\right]  + o(\epsilon) } \,.
\end{align*}
Taking the limit, we obtain
\begin{equation*}
    \lim_{\epsilon \to 0} h^*(T- \epsilon,x,y) = \frac{  e^{-\eta_1^*}}{ e^{-\eta_1^*} } = 1 .
\end{equation*}
\end{proof}

\begin{proposition}
\label{prop:kappa-CVaR}
Let the severity random variable $\xi$ with $\P$-distribution $G$ have non-negative support. Then, the limiting behaviour of $\kappa^*(t,x)$ is
\begin{equation*}
\lim_{t \to T}  \kappa^*(t,x) =
\begin{cases}
\kappa \,\E[e^{-\eta_2^* \xi}] \qquad  &   \text{if}\quad x > q \, ,
\\[0.5em]
\kappa \left( G(q-x) + e^{\eta_1^* - \eta_2^* (x-q)} \E[e^{-\eta_2^* \xi} \mathds{1}_{\{\xi \geq q-x\}}] \right) \qquad & \text{if} \quad x \le q \, .
\end{cases}
\end{equation*}
Moreover, if $x > q$, the stressed intensity is $\kappa^*(t,x) = \kappa\, \E[e^{-\eta_2^* \xi}]$ for any $t \in [0,T]$. 
\end{proposition}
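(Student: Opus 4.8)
The plan is to read the result off from \cref{prop:kappa_G}, which gives $\kappa^*(t,x) = \kappa \int_{\R_+} h^*(t,x,y)\,G(dy)$ (the integral is over $\R_+$ since $\xi$ has non-negative support), combined with the limiting behaviour of $h^*(t,x,y)$ established in the preceding proposition. The case $x>q$ requires no limiting argument: there the preceding proposition gives $h^*(t,x,y) = e^{-\eta_2^* y}$ for \emph{every} $t\in[0,T]$, so $\kappa^*(t,x) = \kappa\int_{\R_+} e^{-\eta_2^* y}\,G(dy) = \kappa\,\E[e^{-\eta_2^*\xi}]$ for all $t$, and in particular in the limit $t\to T$; finiteness of $\E[e^{-\eta_2^*\xi}]$ follows from the $\P$-integrability of $e^{-\eta_1^* f_1(X_T)-\eta_2^* f_2(X_T)}$ together with the compound-Poisson structure.

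For $x\le q$ I would pass the limit $t\to T$ inside the integral by dominated convergence, invoking the pointwise limit $\lim_{t\to T} h^*(t,x,y) = \mathds{1}_{\{y<q-x\}} + e^{\eta_1^*-\eta_2^*(x+y-q)}\mathds{1}_{\{y\ge q-x\}}$ from the preceding proposition. Splitting the resulting integral over the events $\{y<q-x\}$ and $\{y\ge q-x\}$ gives
\begin{align*}
\lim_{t\to T}\kappa^*(t,x)
&= \kappa\left(\P(\xi<q-x) + e^{\eta_1^*-\eta_2^*(x-q)}\int_{\R_+} e^{-\eta_2^* y}\mathds{1}_{\{y\ge q-x\}}\,G(dy)\right) \\
&= \kappa\left(G(q-x) + e^{\eta_1^*-\eta_2^*(x-q)}\,\E[e^{-\eta_2^*\xi}\mathds{1}_{\{\xi\ge q-x\}}]\right),
\end{align*}
which is the claimed expression (identifying $\P(\xi<q-x)$ with $G(q-x)$).

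The one step that needs care is justifying the interchange of limit and integral, i.e.\ exhibiting a $G$-integrable, $t$-independent majorant of $h^*(T-\epsilon,x,\cdot)$ for small $\epsilon$. Writing $h^*(t,x,y) = \theta(t,x+y)/\theta(t,x)$ with $\theta(t,x):=\E_{t,x}[e^{-\eta_1^* f_1(X_T)-\eta_2^* f_2(X_T)}]$, the denominator is continuous and strictly positive on $[0,T]$, hence bounded below by a positive constant on a left-neighbourhood of $T$; for the numerator one uses that $X$ has non-negative, stationary, independent increments and $f_2(x)=(x-q)_+$ to bound $f_2(X_T)\le (x+y-q)_+ + (X_T-X_{t-})$ under $\P_{t,x+y}$, which yields $\theta(t,x+y)\le C\,\big(1 + e^{-\eta_2^*(x+y-q)_+}\big)$ with $C$ independent of $t$, and this bound is $G$-integrable in $y$ for fixed $x$ precisely because $\E[e^{-\eta_2^*\xi}]<\infty$ under the stated cumulant generating function assumption. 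With the majorant in hand, the remaining steps are the elementary integral splittings displayed above.
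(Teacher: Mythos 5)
Your proposal is correct and follows essentially the same route as the paper's proof: for $x>q$ the identity $h^*(t,x,y)=e^{-\eta_2^* y}$ holds for all $t$ so no limit is needed, and for $x\le q$ one passes the limit inside $\kappa\int_{\R_+}h^*(t,x,y)\,G(dy)$ by dominated convergence and splits the integral at $y=q-x$. The only difference is that you explicitly construct a $t$-independent, $G$-integrable majorant to justify the dominated convergence step, which the paper invokes without proof; this is a welcome extra detail rather than a divergence in method.
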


\begin{proof}
If $x> q$, we have $h^*(t,x,y) = e^{-\eta_2^* y}$ by the previous proposition, so
\begin{equation*}
\kappa^*(t,x) = \kappa \int_{\R_+} e^{-\eta_2^* y}  G(dy) = \kappa\; \E[e^{-\eta_2^* \xi}].
\end{equation*}
If $x \leq q$, we apply dominated convergence to obtain
\begin{align*}
\lim_{t \to T}\kappa^*(t,x) 
&=
\int_{\R_+} \lim_{t \to T} h^*(t,x,y) \kappa \, G(dy) \\
&=\kappa \int_{\R_+} \left[ \mathds{1}_{\{y \leq q - x\}} + e^{\eta_1^* -\eta_2^*(x+y-q)} \mathds{1}_{\{y > q - x\}} \right]  G(dy) \\
&= \kappa \left( G(q-x) + e^{\eta_1^* -\eta_2^*(x-q)} \E\left[e^{-\eta_2^* \xi} \mathds{1}_{\{\xi > q-x\}} \right] \right).
\end{align*}

\end{proof}

If the severity distribution under the reference measure $\P$ is explicitly given, we may obtain the limit of $\kappa^*(t,x)$ for $t \to T$ analytically. In keeping with our recurring example, let $\xi \sim \Gamma(a, b)$ under $\P$, where $a$ is the shape and $b$ the rate parameter. Then the limiting behaviour of $\kappa^*(t,x)$, if $\eta_2^* > -b$, is
\begin{equation*}
\lim_{t \to T}  \kappa^*(t,x) =
\begin{cases}
\kappa \left(\frac{b}{b + \eta^*_2} \right)^a \qquad &   \text{if}\quad x > q \, ,
\\
\kappa \left( \mathbb{P}(\xi < q-x) + e^{\eta_1^*-\eta_2^* (x - q)} \left(\frac{b}{b + \eta_2^*} \right)^a  \mathbb{P}(\xi' \geq q-x) \right) \qquad  & \text{if} \quad x \le q \,,
\end{cases}
\end{equation*}
where $\xi ' \sim \Gamma(a, b + \eta_2^*)$.
\begin{figure}[t!]
    \centering
    \includegraphics[width=10cm]{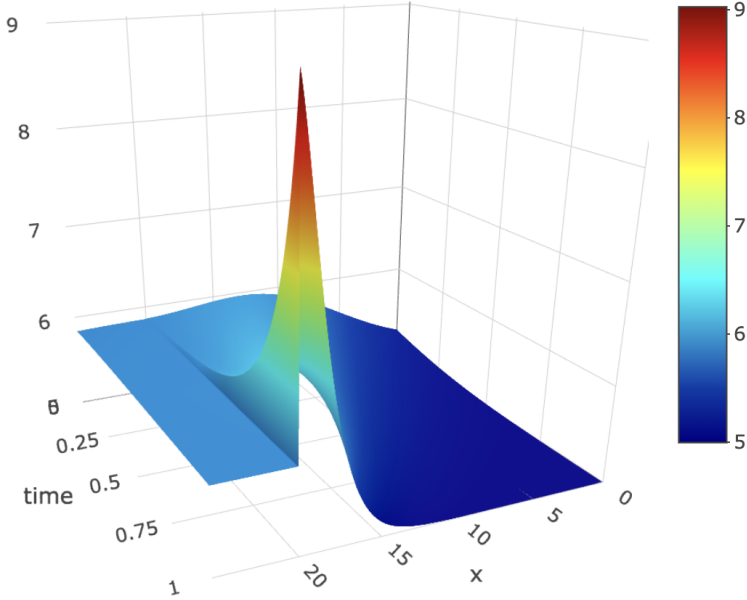}
    \caption{Stressed intensity $\kappa^*(t,x)$ for constraints corresponding to a 15\% increase in $\text{VaR}_{0.9}(X_T)$ and a 12\% increase in $\text{CVaR}_{0.9}(X_T)$. Parameters under $\P$ are $\kappa=5$ and $\xi \sim \Gamma(2,1)$, where $\E[\xi]=2$}
    \label{fig:CVaR_kappa_3d}
\end{figure}

Finally, we examine the effect of joint stresses on both VaR and CVaR in the following numerical example:

\begin{example}
\label{ex:CVaR} 
Let $X$ be under $\P$ a compound Poisson process over a time horizon $[0,1]$ with severity distribution $\xi \sim \Gamma(2, 1)$ and intensity $\kappa = 5$, as in \cref{ex:VaR}. We apply a stress of a 15\% increase in VaR at the 90\% level, as in \cref{ex:VaR}, and a 12\% increase in the CVaR at the 90\% level. 

\cref{fig:CVaR_kappa_3d} displays $\kappa^*(t,x)$ for a grid of times $t \in [0,1]$ and state space values $x$ in [0,25]. We see that if $x$ is greater than $q = \text{VaR}^\Q_{0.9}(X_T) = 19.97$, then consistent with \cref{prop:kappa-CVaR}, $\kappa^*(t,x) = \kappa \left(b / (b + \eta^*_2)\right)^a = 5.87$. If $x$ is less than $q$, we observe that at time $T= 1$, when $x$ is close to 0, $\kappa^*(t,x)$ is close to $\kappa = 5$, and as $x$ approaches $q$ from below, $\kappa^*(t,x)$ approaches $\kappa e^{\eta^*_1} \left(\frac{b}{b + \eta^*_2} \right)^a = 9.08$.

\cref{fig:CVaR_G} displays histograms of the stressed severity distribution $G^*(t,x,dy)$  for state space values $x \in \{ 16,17, 18,19 \}$ and times $t \in \{ 0.75, 0.9, 1\}$. The black lines are the density of the severity distribution under $\P$, here $\Gamma(2,1)$. As in the VaR stress case (\cref{ex:VaR}), we observe that the stressed severity distributions become more and more distorted from the reference distribution as time approaches $t=1$. We note in this case, however, that the addition of the CVaR constraint has a smoothing effect on the tails. Compared to \cref{fig:VaR_G}, there is a reduced  bimodal effect for $x=$ 16 and 17, and less of a spike around 1, when $x=$ 18 and 19.

\cref{fig:CVaR_paths} shows select sample paths of $X$ under $\Q^*$ (left panel) and their corresponding intensity processes $\kappa^*(t,x)$ (right panel). The black lines show the 10th, 50th, and 90th percentiles under both $\P$ (solid) and $\Q^*$ (dashed). The horizontal dashed grey line corresponds to the stressed VaR, which is approximately 20. Comparing \cref{fig:CVaR_paths} with \cref{fig:VaR_paths} (sample paths of $X$ for the VaR constraint only), we observe that the paths evolve similarly. However, in \cref{fig:CVaR_paths} due to the additional CVaR constraint, if the path crosses $q$ the stressed intensity no longer falls to the baseline parameter, $\kappa = 5$, but instead becomes constant equal to $\kappa \left(b / (b + \eta^*_2)\right)^a = 5.87$. A case in point is the magenta line, where the sample path crosses $q$ and the intensity process becomes constant equal to $5.87$. 

Finally, we turn to a risk-forecasting exercise that can be undertaken in this dynamic framework. At any time prior to the stress, $T$, a risk manager may wish to forecast the expected losses in the next period. In particular, suppose they calculate risk measures of the increment $X_{t+\Delta_t}-X_t$, conditioning on the observed value $X_{t-}$, for $t\leq T- \Delta_t$. 
As the dynamics of the stressed loss process are known in our framework, we can forecast the future losses 
at all times. 
\cref{fig:incr_riskmes} shows the expected value, VaR at the 90\% level, and CVaR at the 90\% level of the quarter-year increment $X_{t+0.25}-X_t$, for times in $[0,0.75]$. Three scenarios are shown under the stressed measure, which correspond to the paths in \cref{fig:CVaR_paths} of the same colour. Clearly, under the baseline the expected value, VaR and CVaR are constant and thus independent of time and state. Under the stressed measure, however, the risk forecasts all depend on time and state. Moreover, the jumps in the underlying paths in \cref{fig:CVaR_paths} lead to sharp increases in the risk forecasts in \cref{fig:incr_riskmes}. Furthermore, for the paths with fewer losses (blue and red), the risk measures tend to decrease when the underlying paths do not jump. However, the risk measures are increasing for the riskier scenario (purple). This risk forecasting exercise illustrates that the proposed dynamic framework leads to richer information for the risk manager, that cannot be obtained in a static setting.

\begin{figure}[tb!]
    \centering
    \includegraphics[width=6in]{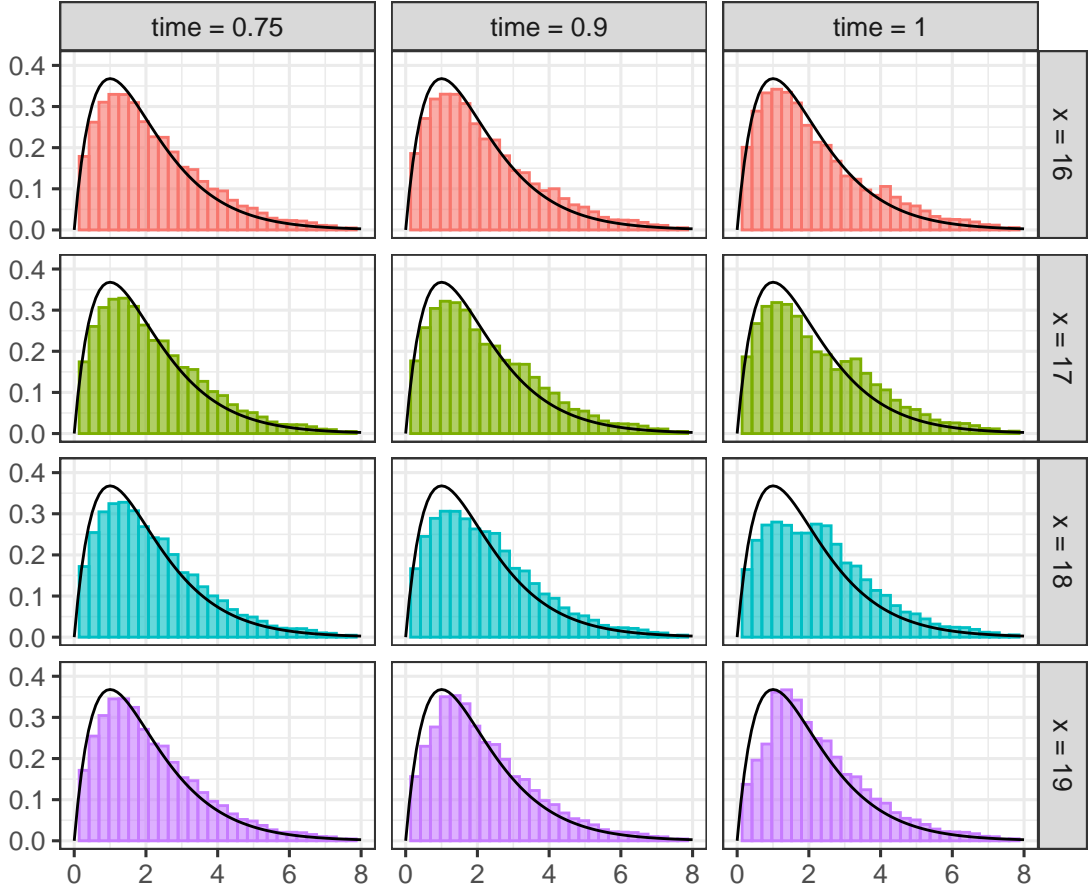}
    \caption{Histogram of 1 million draws from the stressed severity distribution $G^*(t,x,dy)$ under a 15\% increase in $\text{VaR}_{0.9}(X_T)$ and a 12\% increase in $\text{CVaR}_{0.9}(X_T)$ for $t =$ 0.75 (left column), $t = 0.9$ (middle column), and $t=1$ (right column), and $x =$ 16, 17, 18, and 19 (top to bottom). The black lines are the density of the severity random variable under $\P$, which is $\Gamma(2,1)$.}
    \label{fig:CVaR_G}
\end{figure}

\begin{figure}[!]
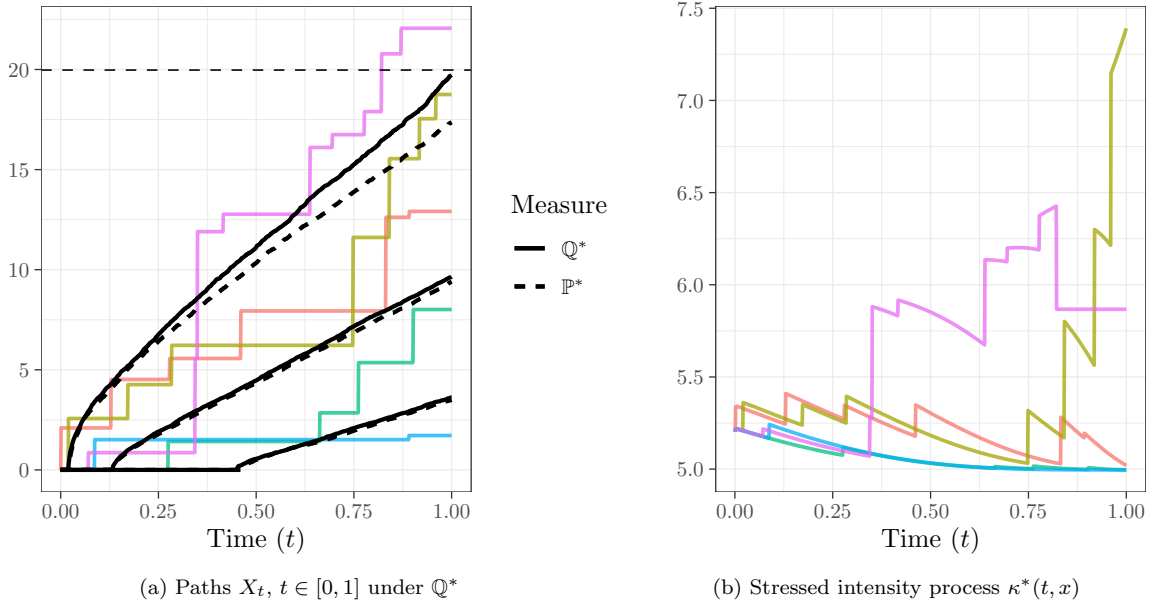

\centering
\subfloat[Paths $X_t, \, t \in {[0,1]}$ under $\Q^*$]{\includegraphics{figures/CVaR_path_plot.pdf}}
\subfloat[Stressed intensity process $\kappa^*(t,x)$]{\includegraphics{figures/CVaR_kappa_plot.pdf}}
\caption{Panel (a): sample paths of $X$ under $\Q^*$. 10,000 paths were simulated and five are highlighted along with the 10\%, 50\%, and 90\% quantiles (black lines) for $\P$ and $\Q^*$. Panel (b): corresponding (same colour) stressed intensity processes. Under $\P$, the severity  distribution is $\xi \sim \Gamma(2,1)$ and the intensity $\kappa=5$. The stresses correspond to a 15\% increase in $\text{VaR}_{0.9}(X_T)$ and a 12\% increase in $\text{CVaR}_{0.9}(X_T)$.}
\label{fig:CVaR_paths}
\end{figure}

\begin{figure}[!]
    \centering
    \includegraphics{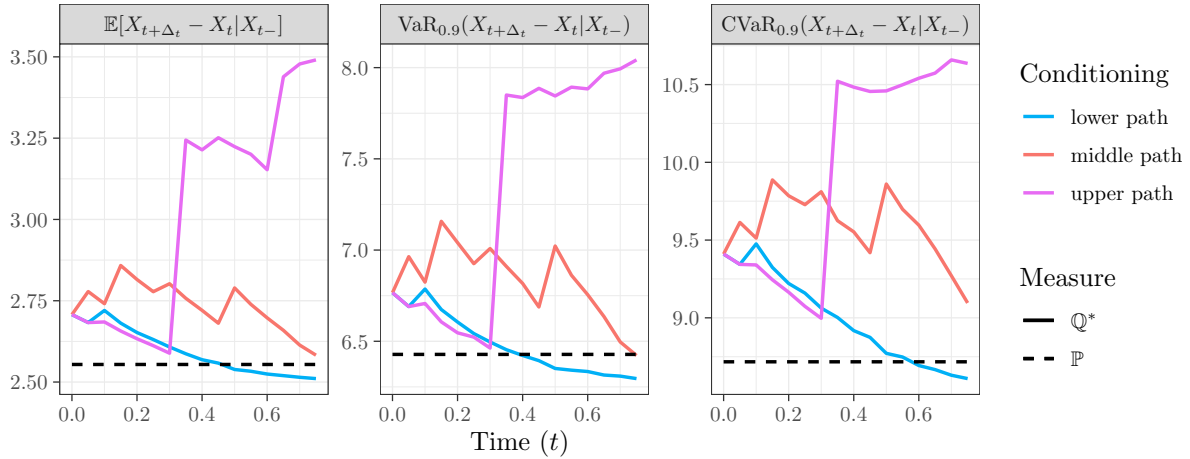}
    \caption{The expected value, VaR at the 90\% level, and CVaR at the 90\% level of the increment $X_{t+\Delta_t}-X_t$, conditioning on the event that $X_t$ is one of the sample paths in \cref{fig:CVaR_paths}. The colours indicate which sample path is used for the conditioning. The forecast is made in time increments of 0.05, forecasting forward $\Delta_t=0.25$. The value of these risk measures under the baseline model is given by the dashed black line.}
    \label{fig:incr_riskmes}
\end{figure}

\end{example}

\section{Aggregate Portfolios and   ``What-if'' Scenarios}
\label{sec:extensions}
Here we provide two extensions of our framework which allow for sensitivity testing of insurance portfolios. In \cref{sec:early}, we solve the optimization problem when a stress is imposed at an earlier time than the terminal time $T$ and when we have two constraints at different times. In \cref{sec:multivariate}, we generalize the results to multivariate compound Poisson processes. Combining these extension allows for application of ``what-if'' scenarios in the spirit of \cref{ex:what-if}. In that application, we consider an aggregate insurance portfolio and study how a stress at time $T^\dagger <T$ on one of the sub-portfolios cascades through time and affects the aggregate portfolio. We observe that a stress on one sub-portfolio perturbs not only the other sub-portfolios but also the dependence between the sub-portfolios.

\subsection{Stresses at Earlier Times}
\label{sec:early}

First, suppose that instead of imposing constraints at the terminal time $T$, we impose constraints at an earlier time $T^\dagger \in (0,T)$. This leads to the following optimization problem:

\begin{optimization}
\label{opti:early-constr}
Let $0 <  T^\dagger  < T <  \infty$,  $f_i \colon \R \to \R$, and $c_i \in \R$ for $i \in [n]$. Consider
\begin{equation*}
    \inf_{\Q\in\Qset} \KL{\Q} \quad \text{s.t.}
    \quad \E^\Q\left[f_i(X_{T^\dagger})\right]=c_i, \quad i \in [n]\,,
\end{equation*}
where $\Qset$ is the class of equivalent probability measures given by
\begin{equation*}
\mathcal{Q}:=\left\{\Q_h\; \Big| \; \frac{d\Q_h}{d\P}=\mathfrak{E}\left(\int_0^{T} \int_\R \left[ h_t(y) - 1 \right] \, \tilde \mu(dy,dt)\right) \right\}
\end{equation*}
and $h: \R_+ \times \R \to \R$ is a predictable, non-negative process satisfying Novikov's condition on $[0,T]$.
\end{optimization}

Note that \cref{opti:early-constr} is not a special case of \cref{opti:main} as the probability measures we seek over are induced by random fields $h$ that live on $[0,T]$. The solution to this new optimization problem is as follows:

\begin{proposition}
If there exists \(\boldeta^* = (\eta_1^*, \ldots, \eta_n^*) \in \R^n\) such that \(\E[\exp \left( -\sum_{i=1}^n \eta_i^* \, f_i(X_{T^\dagger}) \right)] < \infty\) and
\[0 =  \E \left[ \exp \left( -\sum_{j=1}^n\eta^*_j\,f_j(X_{T^\dagger}) \right)\left( f_i(X_{T^\dagger})-c_i \right)\right], \quad i \in [n],\]
then \cref{opti:early-constr} has a solution.

The optimal measure \(\Q^*\) is characterized by the Girsanov kernel
\begin{equation*}
h^*(t,x,y) =  
\begin{cases}
\frac{\E_{t,x+y}\left[\exp \left( -\sum_{i=1}^n\eta^*_i\,f_i(X_{T^\dagger}) \right)\right]}{\E_{t,x}\left[\exp \left( -\sum_{i=1}^n\eta^*_i\,f_i(X_{T^\dagger}) \right)\right]} & \text{if}  \quad t \leq T^\dagger\,, \\
1 & \text{if}  \quad t > T^\dagger \, 
\end{cases}
\end{equation*}
and the corresponding RN derivative is given by
\[\frac{d\Q^*}{d\P} = \frac{\exp \left( -\sum_{i=1}^n\eta^*_i\,f_i(X_{T^\dagger}) \right)}{\E[\exp \left( -\sum_{i=1}^n\eta^*_i\,f_i(X_{T^\dagger}) \right)]}.\]

The solution is unique.
\end{proposition}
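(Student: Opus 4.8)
The plan is to reduce \cref{opti:early-constr} to an instance of \cref{opti:main} run on the shorter horizon $[0,T^\dagger]$, by first showing that any optimal Girsanov kernel must equal $1$ on $(T^\dagger,T]$. The intuition is that the constraints depend only on $X_{T^\dagger}$, so perturbing the dynamics after $T^\dagger$ adds relative entropy without improving feasibility. Concretely, I would first record that, because the stochastic exponential is a $\P$-martingale and the increments of $\mu$ on $(T^\dagger,T]$ are independent of $\F_{T^\dagger}$, the $\F_{T^\dagger}$-restriction of $\tfrac{d\Q_h}{d\P}$ equals $\mathfrak{E}\big(\int_0^{T^\dagger}\!\int_\R (h_t(y)-1)\,\tilde\mu(dy,dt)\big)$; in particular the $\Q_h$-law of $X_{T^\dagger}$, and hence feasibility of the constraints, depends on $h$ only through its restriction to $[0,T^\dagger]$.

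Next I would invoke the chain rule for relative entropy along the filtration,
\begin{equation*}
\KL{\Q_h} = D_{KL}\big(\Q_h|_{\F_{T^\dagger}}\,\Vert\,\P|_{\F_{T^\dagger}}\big) + \E^{\Q_h}\!\left[D_{KL}\big(\Q_h(\cdot\,|\,\F_{T^\dagger})\,\Vert\,\P(\cdot\,|\,\F_{T^\dagger})\big)\right],
\end{equation*}
and observe that the second term is nonnegative and vanishes precisely when $h\equiv 1$ on $(T^\dagger,T]$. Hence, given any feasible $\Q_h$, replacing $h$ by $\tilde h := h\,\mathds{1}_{[0,T^\dagger]} + \mathds{1}_{(T^\dagger,T]}$ keeps $\Q_{\tilde h}\in\Qset$, preserves feasibility (by the previous paragraph), and does not increase the objective. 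So it suffices to optimise over kernels that are $1$ after $T^\dagger$, for which the objective collapses to $D_{KL}(\Q_h|_{\F_{T^\dagger}}\,\Vert\,\P|_{\F_{T^\dagger}})$ subject to $\E^{\Q_h}[f_i(X_{T^\dagger})]=c_i$ — which is exactly \cref{opti:main} with terminal time $T^\dagger$ and $X$ a compound Poisson process on $[0,T^\dagger]$.

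Applying \cref{thm:h} on $[0,T^\dagger]$ then gives the optimal kernel there as $h^{\boldeta^*}(t,x,y)=\E_{t,x+y}\big[\exp(-\sum_i\eta^*_i f_i(X_{T^\dagger}))\big]\big/\E_{t,x}\big[\exp(-\sum_i\eta^*_i f_i(X_{T^\dagger}))\big]$ for $t\le T^\dagger$, with $\boldeta^*$ determined by the displayed system of equations (this is \cref{prop:optim_eta_gen} at time $T^\dagger$, whose solvability is the standing hypothesis); \cref{thm:optim_RN} identifies the RN derivative of the $\F_{T^\dagger}$-restriction as $\exp(-\sum_i\eta^*_i f_i(X_{T^\dagger}))\big/\E[\exp(-\sum_i\eta^*_i f_i(X_{T^\dagger}))]$; and since $\tilde h\equiv 1$ after $T^\dagger$, this restriction already coincides with the full RN derivative $\tfrac{d\Q^*}{d\P}$. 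Uniqueness follows, as in \cref{thm:h}, from convexity of $\Qset$ together with strict convexity of the KL divergence. The step I expect to require the most care is the truncation argument — rigorously justifying the chain-rule decomposition in this jump-process setting and verifying that the post-$T^\dagger$ contribution to the entropy is exactly the conditional relative entropy that vanishes at $h\equiv 1$; once that is in place, the remainder is bookkeeping plus the already-established results.
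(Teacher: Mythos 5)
Your proposal is correct and follows essentially the same route as the paper: truncate the Girsanov kernel to $\tilde h = h\,\Id_{[0,T^\dagger]} + \Id_{(T^\dagger,T]}$, show this preserves feasibility and does not increase the KL divergence, and then apply \cref{thm:h}, \cref{thm:optim_RN}, and \cref{prop:optim_eta_gen} on the horizon $[0,T^\dagger]$. The only (cosmetic) difference is that you justify the inequality $D_{KL}(\Q_h\Vert\P)\ge D_{KL}(\Q_{\tilde h}\Vert\P)$ via the chain rule for relative entropy, whereas the paper obtains it directly from the non-negativity of the integrand $1-h(1-\log h)$ in the explicit KL representation and then verifies by conditioning on $\F_{T^\dagger}$ that the retained term equals $D_{KL}(\Q_{\tilde h}\Vert\P)$ — two ways of writing the same decomposition.
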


\begin{proof}
First, we show that we can restrict to probability measures that are induced by a specific random field. For this let $\Q_h \in \Qset$ for a random field $h$ and define 
$\tilde{h}_t :=h_t \Id_{t \in [0, T^\dagger]} + \Id_{t \in ( T^\dagger, T]}$ and moreover, denote its induced probability measure by $\Q_{\tilde{h}}\in \Qset$. The KL-divergence then becomes
\begin{align*}
D_{KL}(\Q_h \Vert \P) &=  \E^{\Q_h}\left[  \int_0^T\int_\R \left[1-h_t(y)(1-\log h_t(y)) \right]\,\nu(dy,dt)  \right] \\
& \geq \E^{\Q_h}\left[\int_0^{T^\dagger}\int_\R \left[1-h_t(y)(1-\log h_t(y)) \right]\,\nu(dy,dt) \right]  , \quad (\star)
\end{align*}
where the last inequality follows since $h_t\ge 0$ and $\ell(x) :=1-x(1-\log x)$ is a non-negative convex function that attains its minimum when $x \equiv 1$ with $\ell(1) = 0$. Furthermore,
\begin{align*} 
(\star) 
    &=\E \Bigg[  \E \left[ \frac{d\Q_h}{d\P}\Big| \F_{T^\dagger} \right] \int_0^{T^\dagger} \! \! \! \int_\R \left[1-h_t(y)(1-\log h_t(y)) \right]\,\nu(dy,dt)  \Bigg] \\
      &=\E \Bigg[  \exp \left(\int_0^{T^\dagger}\! \! \!\int_\R \left(h_t(y)-1\right)\,\nu(dy,dt) + \int_0^{T^\dagger}\! \! \!\int_\R\log h_t(y)\;\mu(dy,dt) \right)  \int_0^{T^\dagger} \! \! \!\int_\R \left[1-h_t(y)(1-\log h_t(y)) \right]\,\nu(dy,dt)  \Bigg] \\
      &=\E \Bigg[  \exp \left(\int_0^{T} \! \int_\R \left(\tilde h_t(y)-1\right)\,\nu(dy,dt) + \int_0^{T} \! \int_\R\log \tilde h_t(y)\;\mu(dy,dt) \right) \int_0^{T} \! \int_\R \left[1-\tilde h_t(y)(1-\log \tilde h_t(y)) \right]\,\nu(dy,dt)  \Bigg] \\
        &= \E \left[ \frac{d \Q_{\tilde{h}}}{d\P} \log \left( \frac{d \Q_{\tilde{h}}}{d\P} \right) \right] \\
        &= D_{KL}(\Q_{\tilde{h}} \Vert \P) \,.
\end{align*}
Thus we can indeed restrict to probability measures that are induced by random fields that are constant equal to 1 on $(T^\dagger, T]$, i.e., to
\begin{equation*}
    \tilde\Qset :=\left\{\Q_h \in \Qset \; | \; h_t(y) = 1 \text{ for } t \in ( T^\dagger, T] \right\}\,.
\end{equation*}
Next, we observe that this set of probability measures is equivalent to the following set 
\begin{equation*}
    \tilde{\Qset} = 
    \left\{\Q_h \in \Qset \; \Big| \; \frac{d\Q_h}{d\P}=\mathfrak{E}\left(\int_0^{T^\dagger} \int_\R \left[ h_t(y) - 1 \right] \, \tilde \mu(dy,dt)\right) \right\} =:\Qset^\dagger \subset \Qset \,,
\end{equation*}
that is probability measures induced by random fields defined on $[0,T^\dagger]$.
Thus, we showed that in  \cref{opti:early-constr} we can restrict to probability measures in $\Qset^\dagger$. Restricting to $\Qset^\dagger$ provides an optimization problem that is of the form \cref{opti:main} and whose solution and uniqueness is given by \cref{thm:h}, \cref{thm:optim_RN}, and \cref{prop:optim_eta_gen}.
\end{proof}

Next, we impose constraints at both the terminal time $T$ and an earlier time $T^\dagger \in (0,T)$, leading to the following optimization problem:

\begin{optimization}
\label{opti:early-and-terminal-constr}
Let $0 <  T^\dagger  < T <  \infty$,  $f, f^\dagger \colon \R \to \R$, and $c, c^\dagger \in \R$. Consider
\begin{equation*}
    \inf_{\Q\in\Qset} \KL{\Q} \quad \text{s.t.}
    \quad \E^\Q\left[f^\dagger(X_{T^\dagger})\right]=c^\dagger \text{ and } \; \; \E^\Q\left[f(X_{T})\right]=c \,,
\end{equation*}
where $\Qset$ is the class of equivalent probability measures given by
\begin{equation*}
\mathcal{Q}:=\left\{\Q_h\; \Big| \; \frac{d\Q_h}{d\P}=\mathfrak{E}\left(\int_0^{T} \int_\R \left[ h_t(y) - 1 \right] \, \tilde \mu(dy,dt)\right) \right\}
\end{equation*}
and $h: \R_+ \times \R \to \R$ is a predictable, non-negative process satisfying Novikov's condition on $[0,T]$.
\end{optimization}

\begin{proposition}
\label{prop:two-constraints}
If a solution to \cref{opti:early-and-terminal-constr} exists, it is given by $\Q_h$ with $h$ characterized by $h_t(y)=h^{\eta, \eta^\dagger}(t, X_{t^{-}}, y)$, where
\begin{equation*}
        h^{\eta,\eta^\dagger}(t,x,y) = \frac{\E_{t,x+y}\left[  \exp\left( -\eta ( f(X_{T}) -c ) - \eta^\dagger ( f^\dagger(X_{T^\dagger}) -c^\dagger )\Id_{t < T^\dagger} \right) \right]}{\E_{t,x}\left[  \exp\left( -\eta ( f(X_{T}) -c ) - \eta^\dagger ( f^\dagger(X_{T^\dagger}) -c^\dagger )\Id_{t < T^\dagger} \right) \right]} \, 
    \end{equation*}
and $\eta,\eta^\dagger$ are Lagrange multipliers such that the constraints hold. Furthermore, the solution, if it exists, is unique.
\end{proposition}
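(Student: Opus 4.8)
The plan is to mimic the dynamic-programming argument behind \cref{thm:h}, modified to account for the two constraints being attached to different times. Introduce Lagrange multipliers $\eta,\eta^\dagger \in \R$ and, expanding $\log\frac{d\Q_h}{d\P}$ in terms of $\mu$ and $\nu$ exactly as in the proof of \cref{thm:h}, write the Lagrangian as $\inf_{\Q_h\in\Qset}\E^{\Q_h}\big[\log\frac{d\Q_h}{d\P} + \eta(f(X_T)-c) + \eta^\dagger(f^\dagger(X_{T^\dagger})-c^\dagger)\big]$. Define $J(t,x)$ to be the conditional (given $X_{t^-}=x$) infimum of the running entropy cost on $[t,T]$ together with whichever of the two penalties has not yet been incurred by time $t$. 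Since $\eta^\dagger(f^\dagger(X_{T^\dagger})-c^\dagger)$ is a lump-sum cost incurred exactly at $T^\dagger$ (and a fixed time is $\P$-a.s.\ not a jump time, so $X_{T^\dagger}=X_{(T^\dagger)^-}$), the value function solves the HJB equation \eqref{eqn:HJB} separately on $(T^\dagger,T]$ and on $[0,T^\dagger]$, with terminal condition $J(T,x)=\eta(f(x)-c)$ and the gluing condition $\lim_{t\uparrow T^\dagger}J(t,x) = \lim_{t\downarrow T^\dagger}J(t,x) + \eta^\dagger(f^\dagger(x)-c^\dagger)$ at the junction.

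On each subinterval the first-order condition gives the feedback control $h=e^{-\Delta_y J}$ as in \eqref{eqn:opt-h}, and the Cole--Hopf substitution $J=-\log\omega$ turns the HJB into the linear PDE \eqref{eqn:HJB-linearized}. On $(T^\dagger,T]$, Feynman--Kac gives $\omega(t,x)=\E_{t,x}[e^{-\eta(f(X_T)-c)}]$; the gluing condition translates to $\lim_{t\uparrow T^\dagger}\omega(t,x) = e^{-\eta^\dagger(f^\dagger(x)-c^\dagger)}\,\lim_{t\downarrow T^\dagger}\omega(t,x)$, which is the terminal datum for the PDE on $[0,T^\dagger]$. Applying Feynman--Kac again on $[0,T^\dagger]$ and then collapsing the nested conditional expectations by the Markov property and the tower rule yields $\omega(t,x)=\E_{t,x}[e^{-\eta(f(X_T)-c) - \eta^\dagger(f^\dagger(X_{T^\dagger})-c^\dagger)}]$ for $t\le T^\dagger$, while for $t>T^\dagger$ the $f^\dagger$ term is absent. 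The optimal control in feedback form, $h^*(t,x,y)=\omega(t,x+y)/\omega(t,x)$, is then precisely the expression in the statement, with $\Id_{t<T^\dagger}$ recording whether the $f^\dagger$ penalty is still active. To see that $\eta,\eta^\dagger$ are fixed by requiring the constraints to hold, I would apply the argument of \cref{thm:special-stoch-exp} on $[0,T^\dagger]$ and on $[T^\dagger,T]$ separately --- on the first interval with $\mathfrak{Z}(x)=e^{-\eta^\dagger(f^\dagger(x)-c^\dagger)}\,\E_{T^\dagger,x}[e^{-\eta(f(X_T)-c)}]$ --- and multiply the two resulting stochastic exponentials, whose jump-time supports are disjoint; the nested expectation again collapses via the tower rule, giving $\frac{d\Q^*}{d\P}=e^{-\eta f(X_T)-\eta^\dagger f^\dagger(X_{T^\dagger})}/\E[e^{-\eta f(X_T)-\eta^\dagger f^\dagger(X_{T^\dagger})}]$ after the constants $c,c^\dagger$ cancel, and substituting this into the two constraints reduces them to scalar equations in $\eta,\eta^\dagger$ (as in \cref{prop:optim_eta_gen}). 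Uniqueness follows, as in \cref{thm:h}, from convexity of $\Qset$ and strict convexity of $D_{KL}(\cdot\,\Vert\,\P)$.

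I expect the main obstacle to be the rigorous handling of the discontinuity at $T^\dagger$: justifying that the dynamic programming principle splits the problem at $T^\dagger$ with exactly the stated gluing condition, that $\omega$ (equivalently $J$) genuinely jumps there in the stated way, and that the nested conditional expectations recombine --- via the Markov property of $X$ --- into the single joint functional $\E_{t,x}[e^{-\eta f(X_T)-\eta^\dagger f^\dagger(X_{T^\dagger})}]$. A secondary point is admissibility, i.e.\ checking that the candidate $h^*$ is predictable, non-negative and satisfies Novikov's condition on $[0,T]$ so that $\Q^*\in\Qset$; this is largely inherited from the piecewise reduction to the already-established \cref{thm:h} and \cref{thm:special-stoch-exp}, and the ``if a solution exists'' phrasing of the statement makes it unnecessary to re-derive.
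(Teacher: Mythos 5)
Your proposal is correct and follows essentially the same route as the paper's proof: a Lagrangian formulation, dynamic programming split at $T^\dagger$, the Cole--Hopf change of variables and Feynman--Kac on each subinterval, and the tower property to collapse the nested expectations for $t\le T^\dagger$, with uniqueness from convexity as in \cref{thm:h}. The only cosmetic difference is that you encode the $T^\dagger$ penalty as a gluing (jump) condition on $\omega$ at $T^\dagger$, whereas the paper carries the already-realized, $\F_t$-measurable term $\eta^\dagger(f^\dagger(X_{T^\dagger})-c^\dagger)$ additively in the value function on $(T^\dagger,T]$ --- these are equivalent and yield the same terminal datum for the PDE on $(0,T^\dagger]$.
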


\begin{proof}
    The Langrangian associated with the problem is 
    \begin{equation*}
        \E^{\Q_h} \left[ \log \left( \frac{d \Q_h}{d\P} \right) + \eta^\dagger (f^\dagger(X_{T^\dagger}) -c^\dagger) + \eta ( f(X_{T}) -c ) \right]\,.
    \end{equation*}
    As before, we write
    \begin{equation*}
        \E^{\Q_h} \left[ \log \left( \frac{d \Q_h}{d\P} \right) \right] = \E^{\Q_h} \left[ \int_0^T \int_\R \big(1-(1-\log h(t,y))\;h(t,y) \big)\,\kappa \, G(dy) dt  \right].
    \end{equation*}
    Thus, the associated value function at time $t$ is
    \begin{equation*}
        J^{\eta,\eta^\dagger}(t,x) = \inf_{\Q\in\Qset} \E^\Q_{t,x}\left[ \int_t^T\int_\R \big(1-(1-\log h(t,y))\;h(t,y) \big)\,\kappa \, G(dy) dt + \eta^\dagger (f^\dagger(X_{T^\dagger}) -c^\dagger) + \eta ( f(X_{T}) -c ) \right].
    \end{equation*}
    We consider two cases: First, if $t \in (T^\dagger, T]$, then $f^\dagger (X_{T^\dagger})$ is $\mathcal{F}_t$-measurable, so
        \begin{equation*}
        J^{\eta,\eta^\dagger}(t,x) = \eta^\dagger (f^\dagger(X_{T^\dagger}) -c^\dagger) + \ell^\eta(t,x) \,,
    \end{equation*}
    where 
    \begin{equation*}
    \ell^\eta(t,x) := \inf_{\Q\in\Qset} \E^\Q_{t,x}\left[ \int_t^T\int_\R \big(1-(1-\log h(t,y))\;h(t,y) \big)\,\kappa \, G(dy) dt + \eta ( f(X_{T}) -c ) \right], \quad t \in (T^\dagger, T].
    \end{equation*}
    This is a value function of the same form as in the proof of \cref{thm:h}. Thus by the same procedure, we find that the optimal control is $h^\eta(t,x,y) = e^{-\Delta_y \ell^\eta(t,x)}$ and 
    \begin{equation}
    \label{eq:ell-eta}
        \ell^\eta(t,x)  = - \log \left( \E_{t,x}\left[ e^{-\eta ( f(X_{T}) -c )} \right]\right), \quad t \in (T^\dagger, T].
    \end{equation}
    Second, if $t \in (0, T^\dagger]$, then we write the value function as follows
        \begin{equation*}
        J^{\eta,\eta^\dagger}(t,x) = \inf_{\Q\in\Qset} \E^\Q_{t,x}\left[ \int_t^{T^\dagger}\int_\R \big(1-(1-\log h(t,y))\;h(t,y) \big)\,\kappa \, G(dy) dt + \eta^\dagger (f^\dagger(X_{T^\dagger}) -c^\dagger) + \ell^\eta(T^{{\dagger+}},x) \right] \,,
    \end{equation*}
    where $\ell^\eta(T^{{\dagger+}},x): = \lim_{t\searrow T^\dagger}\ell^\eta(t,x)$ denotes the limit of $\ell^\eta(t,x)$ as $t$ approaches $T^\dagger$ from the right.
    $J^{\eta,\eta^\dagger}(t,x) $ satisfies the HJB equation
    \begin{align*}
    \partial_t J^{\eta,\eta^\dagger}(t,x)+ \inf_{h} \left\{ \L^{h} J^{\eta,\eta^\dagger}(t,x) + \int_\R \big(1-(1-\log h(t,y))\;h(t,y) \big) \, \kappa \, G(dy)  \right\} &=0 ,
    \\
    J^{\eta,\eta^\dagger}(T^\dagger,x) &=  \eta^\dagger (f^\dagger(x) -c^\dagger) + \ell^\eta(T^{\dagger+},x) .
\end{align*}
    The optimal control is $h^\eta(t,x,y) = e^{-\Delta_y J^{\eta,\eta^\dagger}(t,x)}$, and again using the Cole-Hopf change of variables and the Feynman-Kac Theorem, we obtain
    \begin{equation*}
        J^{\eta,\eta^\dagger}(t,x)  = - \log \left( \E_{t,x}\left[  \exp\left(-\eta^\dagger ( f^\dagger(X_{T^\dagger}) -c^\dagger ) + \ell^\eta(T^{\dagger+},X_{T^\dagger}) \right) \right]\right), \quad t \in (0, T^\dagger].
    \end{equation*}
    Substituting in \eqref{eq:ell-eta} and using the tower property of expectations, we obtain
    \begin{equation*}
        J^{\eta,\eta^\dagger}(t,x)  = - \log \left( \E_{t,x}\left[  \exp\left(-\eta^\dagger ( f^\dagger(X_{T^\dagger}) -c^\dagger ) -\eta ( f(X_{T}) -c ) \right) \right]\right), \quad t \in (0, T^\dagger].
    \end{equation*}
    Taken together, we have
    \begin{equation*}
        h^{\eta,\eta^\dagger}(t,x,y) = \frac{\E_{t,x+y}\left[  \exp\left( -\eta ( f(X_{T}) -c ) - \eta^\dagger ( f^\dagger(X_{T^\dagger}) -c^\dagger )\Id_{t < T^\dagger} \right) \right]}{\E_{t,x}\left[  \exp\left( -\eta ( f(X_{T}) -c ) - \eta^\dagger ( f^\dagger(X_{T^\dagger}) -c^\dagger )\Id_{t < T^\dagger} \right) \right]} \, ,
    \end{equation*}
    and 
    \begin{equation*}
        J^{\eta,\eta^\dagger}(t,x) = \begin{cases}
            \eta^\dagger (f^\dagger(X_{T^\dagger}) -c^\dagger) - \log \left( \E_{t,x}\left[ e^{-\eta ( f(X_{T}) -c )} \right]\right) & t < T^\dagger,\\
            - \log \left( \E_{t,x}\left[  e^{-\eta^\dagger ( f^\dagger(X_{T^\dagger}) -c^\dagger ) -\eta ( f(X_{T}) -c ) }\right]\right) & t \geq T^\dagger.
        \end{cases}
    \end{equation*}
\end{proof}

To obtain expressions for the optimal Lagrange multipliers, we note that \cref{thm:special-stoch-exp} can be extended to show that if $\E \left[ \exp \left(-\eta ( f(X_{T}) -c )  - \eta^\dagger ( f^\dagger(X_{T^\dagger}) -c^\dagger )\right) \right] < \infty$, then the RN density of the solution $\Q_h$ from \cref{prop:two-constraints} admits the form
\begin{equation}
\label{eq:two-time-RN}
    \frac{d\Q^{\eta,\eta^\dagger}}{d\P} = \frac{\exp\left( -\eta ( f(X_{T}) -c ) \right) - \eta^\dagger ( f^\dagger(X_{T^\dagger}) -c^\dagger )}{\E\left[  \exp\left( -\eta ( f(X_{T}) -c ) \right) - \eta^\dagger ( f^\dagger(X_{T^\dagger}) -c^\dagger ) \right]} \,.
\end{equation}
This representation gives us the following system of equations for $\eta$ and $\eta^\dagger$:

\begin{lemma}
There exists a unique solution to \cref{opti:early-and-terminal-constr} if there exist Lagrange multipliers $\eta^\dagger, \eta$ such that $ \E \left[ \exp \left(-\eta ( f(X_{T}) -c )  - \eta^\dagger ( f^\dagger(X_{T^\dagger}) -c^\dagger )\right) \right] < \infty$ and the system of equations
\begin{align*}
    c^\dagger &= \E\left[ \exp\left( -\eta ( f(X_{T}) -c ) - \eta^\dagger ( f^\dagger(X_{T^\dagger}) -c^\dagger ) \right)  (f^\dagger(X_{T^\dagger}) -c^\dagger )\right] ,\\
    c &= \E\left[ \exp\left( -\eta ( f(X_{T}) -c )  - \eta^\dagger ( f^\dagger(X_{T^\dagger}) -c^\dagger )  \right) (f(X_{T}) -c)\right],
\end{align*}
has a solution.
\end{lemma}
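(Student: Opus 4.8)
The plan is to combine \cref{prop:two-constraints}, which already pins down the functional form of the optimal Girsanov kernel $h^{\eta,\eta^\dagger}$ and records its uniqueness, with the two-time extension of \cref{thm:special-stoch-exp} announced just above the lemma: under the integrability hypothesis $\E[\exp(-\eta(f(X_T)-c)-\eta^\dagger(f^\dagger(X_{T^\dagger})-c^\dagger))]<\infty$, the stochastic exponential generated by $h^{\eta,\eta^\dagger}$ collapses to the explicit Radon--Nikodym derivative \eqref{eq:two-time-RN}. Once \eqref{eq:two-time-RN} is available, the remaining task is purely to translate the two stresses into equations for $(\eta,\eta^\dagger)$, exactly as \cref{prop:optim_eta_gen} does for \cref{opti:main} using \cref{thm:optim_RN}.

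Concretely, I would fix candidate multipliers $(\eta,\eta^\dagger)$ meeting the integrability condition, let $\Q^{\eta,\eta^\dagger}$ denote the measure with density \eqref{eq:two-time-RN}, and rewrite each constraint $\E^{\Q^{\eta,\eta^\dagger}}[\,\cdot\,]$ as a $\P$-expectation against $\tfrac{d\Q^{\eta,\eta^\dagger}}{d\P}$. Substituting \eqref{eq:two-time-RN} and clearing the strictly positive, finite normalizing constant reduces $\E^{\Q^{\eta,\eta^\dagger}}[f^\dagger(X_{T^\dagger})]=c^\dagger$ and $\E^{\Q^{\eta,\eta^\dagger}}[f(X_T)]=c$ to the displayed system of equations --- the same cancellation that produced \eqref{eq:find_eta}. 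Hence any solution of the system yields a feasible measure of the form prescribed by \cref{prop:two-constraints}, and by that proposition it is the minimizer; uniqueness is inherited from \cref{prop:two-constraints}, which in turn rests on convexity of $\Qset$, strict convexity of $D_{KL}(\cdot\Vert\P)$ on $\{\Q\ll\P\}$, and the affineness of both constraints in $\Q$.

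The one step requiring genuine care --- and the main obstacle --- is the two-time extension of \cref{thm:special-stoch-exp}, which I would prove by running the argument of that theorem on the two subintervals separately. On $(T^\dagger,T]$ the factor $\exp(-\eta^\dagger(f^\dagger(X_{T^\dagger})-c^\dagger))$ is $\F_{T^\dagger}$-measurable, hence constant there, so $\Theta_t:=\E_t[\mathfrak{Z}]$ with $\mathfrak{Z}=\exp(-\eta(f(X_T)-c)-\eta^\dagger(f^\dagger(X_{T^\dagger})-c^\dagger))$ is again Markov in $X_{t^-}$ and the It\^o-and-ratio computation of \cref{thm:special-stoch-exp} applies verbatim. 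On $[0,T^\dagger]$ the tower property folds both factors into a single conditional expectation of a function of $X_{t^-}$, so the same computation applies there as well, and the right-continuity of the processes at $T^\dagger$ glues the two pieces together. The integrability hypothesis is precisely what makes $\Theta$ a true martingale and $h^{\eta,\eta^\dagger}$ bounded, which is all the argument needs.
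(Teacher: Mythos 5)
Your proposal is correct and follows essentially the same route as the paper, whose entire proof is to rewrite the two constraints under the density \eqref{eq:two-time-RN} and clear the positive normalizing constant, with uniqueness inherited from \cref{prop:two-constraints}. Your additional sketch of the two-time extension of \cref{thm:special-stoch-exp} — running the martingale-ratio argument separately on $[0,T^\dagger]$ and $(T^\dagger,T]$ and gluing at $T^\dagger$ — fills in a step the paper merely asserts, and is a worthwhile supplement rather than a deviation.
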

\begin{proof}
Rewriting the constraints using \eqref{eq:two-time-RN} concludes the proof.
\end{proof}

Observe that the proof of \cref{prop:two-constraints} can be generalized to a finite number of constraints at different time points, by splitting the value function into additional pieces. An alternative approach to obtain the stressed measure and the dynamics of the process under the stressed measure, is to increase the state space and introduce stopped processes. Specifically, suppose that we have a constraint of the form $\E^\Q[f(X_{T_1},\dots,X_{T_n})]=c$ where $0<T_1<\dots<T_n\le T$, then, by introducing the set of stopped processes $X^{(i)}_t:=X_t\mathds{1}_{\{t\le T_i\}} + X_{T_i}\mathds{1}_{\{t>T_1\}}$ for $i=1,\dots,n$, the original constraint can be viewed as a single constraint on a multi-variate process at terminal time, i.e., $\E^\Q[f(X_{T}^{(1)},\ldots, X_{T}^{(n)})]=c$. Thus the approach in the next section, where we consider multivariate compound Poisson processes, can be applied to this new problem.

\subsection{Multivariate Compound Poisson Processes}\label{sec:multivariate}
In this section, we consider a $d$-dimensional compound Poisson process $\boldsymbol{X} := \left(X_t^1, \ldots, X_t^d\right)_{t \in [0,T]}$. We denote (random) vectors using bold font to distinguish them from (random) variables and scalars. We assume that, under $\P$, $\boldsymbol{X}$ has mean measure
\begin{equation*}
    \nu(\boldsymbol{x},dt) = \kappa \, G(d\boldsymbol{x}) dt,
\end{equation*} 
where $G$ is the $d$-dimensional severity distribution and $\kappa>0$ the scalar intensity. The optimization problem we consider in this section is a stress applied to one of the components of $\boldsymbol X$ at time $T^\dagger \in (0,T]$. For simplicity of notation, we stress throughout the first component of $\boldsymbol{X}$, i.e., $X^1$.

\begin{optimization}
\label{opti:x1}
Let $0 <  T^\dagger  \leq T <  \infty$, $f_i \colon \R \to \R$ and $c_i \in \R$ for $i \in [n]$, and consider
\begin{equation*}
    \inf_{\Q\in\Qset} D_{KL}(\mathbb{Q} \, \Vert \, \mathbb{P} )
    \quad
    \text{s.t.}
    \quad
    \E^\Q\left[f_i(X^1_{T^\dagger})\right]=c_i, \quad i \in [n]\,,
\label{eq:gen_prob_mult}
\end{equation*}
where $\Qset$ is the class of equivalent probability measures induced by Girsanov's theorem
\begin{equation*}
\label{eq:Q_multiv}
\mathcal{Q}:=\left\{\Q_h\; \Big| \; \frac{d\Q_h}{d\P}=\mathfrak{E}\left(\int_0^T \int_{\R^d} \left[ h_t(\boldsymbol{y}) - 1 \right] \, \tilde \mu(d\boldsymbol{y},dt)\right) \right\} ,
\end{equation*}
and $h: \R_+ \times \R^d \to \R$ is a predictable, non-negative random field satisfying Novikov's condition on $[0,T]$.
\end{optimization}

The solution to \cref{opti:x1} is given in the next proposition.
\begin{proposition}
\label{prop:h_RN_biv}
If there exists $\boldeta^* = (\eta_1^*, \ldots, \eta_n^*) \in \R^n$ such that $\E[\exp \left( -\sum_{i=1}^n \eta_i^* \, f_i(X^1_{T^\dagger}) \right)] < \infty$ and
\begin{equation*}
0 =  \E \left[ \exp \left( -\sum_{j=1}^n\eta^*_j\,f_j(X_{T^\dagger}^1) \right)\left( f_i(X_{T^\dagger}^1)-c_i \right)\right] \quad \text{for } i \in [n],
\end{equation*}
then \cref{opti:x1} has a solution. The solution is the measure $\Q^*$ characterized by the Girsanov kernel
\begin{equation*}
h^*(t,\boldsymbol{x},\boldsymbol{y}) =  
\begin{cases}
\frac{\E_{t,\boldsymbol{x}+\boldsymbol{y}}\left[\exp \left( -\sum_{i=1}^n\eta^*_i\,f_i(X^1_{T^\dagger}) \right)\right]}{\E_{t,\boldsymbol{x}}\left[\exp \left( -\sum_{i=1}^n\eta^*_i\,f_i(X^1_{T^\dagger}) \right)\right]} & \text{if}  \quad t \leq T^\dagger \\
1 & \text{if}  \quad t > T^\dagger\,.
\end{cases}
\end{equation*}
The solution is unique. 
\end{proposition}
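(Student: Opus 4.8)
The plan is to combine three ingredients already established in the excerpt, now carried out in the multivariate setting: the time-restriction argument used for stresses at earlier times, the HJB / Cole--Hopf / Feynman--Kac derivation behind \cref{thm:h}, and the change-of-measure representation of \cref{thm:special-stoch-exp}.

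First I would reduce the problem to a stress at the ``terminal'' time $T^\dagger$. Given $\Q_h \in \Qset$, set $\tilde h_t(\boldsymbol y) := h_t(\boldsymbol y)\Id_{t \le T^\dagger} + \Id_{t > T^\dagger}$ and let $\Q_{\tilde h}$ denote the induced measure. Exactly as in the proof of the earlier-time proposition, writing
\begin{equation*}
\KL{\Q_h} = \E^{\Q_h}\!\left[\int_0^T\!\!\int_{\R^d}\big(1 - h_t(\boldsymbol y)(1-\log h_t(\boldsymbol y))\big)\,\kappa\,G(d\boldsymbol y)\,dt\right],
\end{equation*}
using that $\ell(x) := 1 - x(1-\log x)$ is non-negative with $\ell(1)=0$, and conditioning on $\F_{T^\dagger}$ via the tower property together with $\E[\tfrac{d\Q_h}{d\P}\mid\F_{T^\dagger}] = \tfrac{d\Q_{\tilde h}}{d\P}$, yields $\KL{\Q_h} \ge \KL{\Q_{\tilde h}}$. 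Hence one may restrict to the convex subclass $\Qset^\dagger \subset \Qset$ of measures whose Girsanov kernel is supported on $[0,T^\dagger]$; since the constraints only involve $X^1_{T^\dagger}$, the resulting problem is precisely of the form of \cref{opti:main} with horizon $T^\dagger$ and a $d$-dimensional driving process, and $h^*\equiv 1$ on $(T^\dagger,T]$.

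Next I would repeat the dynamic-programming argument of \cref{thm:h} essentially verbatim in $\R^d$. Under the Markov ansatz $h_t(\boldsymbol y) = h(t,\boldsymbol X_{t^-},\boldsymbol y)$, the value function $J^{\boldeta}(t,\boldsymbol x)$ solves an HJB equation with $\Q_h$-generator $\L^h J^{\boldeta}(t,\boldsymbol x) = \int_{\R^d}[J^{\boldeta}(t,\boldsymbol x+\boldsymbol y) - J^{\boldeta}(t,\boldsymbol x)]\,h(t,\boldsymbol x,\boldsymbol y)\,\kappa\,G(d\boldsymbol y)$; the first-order condition gives the feedback control $h^{\boldeta}(t,\boldsymbol x,\boldsymbol y) = e^{-\Delta_{\boldsymbol y}J^{\boldeta}(t,\boldsymbol x)}$; the Cole--Hopf substitution $J^{\boldeta} = -\log\omega^{\boldeta}$ linearizes the equation to $\partial_t\omega^{\boldeta} + \L\omega^{\boldeta} = 0$ with terminal datum $\omega^{\boldeta}(T^\dagger,\boldsymbol x) = \exp(-\sum_i\eta_i(f_i(x^1)-c_i))$; and Feynman--Kac gives $\omega^{\boldeta}(t,\boldsymbol x) = \E_{t,\boldsymbol x}[\exp(-\sum_i\eta_i(f_i(X^1_{T^\dagger})-c_i))]$. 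Forming the ratio $\omega^{\boldeta}(t,\boldsymbol x+\boldsymbol y)/\omega^{\boldeta}(t,\boldsymbol x)$, the constants $c_i$ cancel, producing the stated kernel for $t \le T^\dagger$. For the RN-derivative form I would invoke the multivariate analogue of \cref{thm:special-stoch-exp} with $\mathfrak{Z}(\boldsymbol X_{T^\dagger}) := \exp(-\sum_i \eta_i^* f_i(X^1_{T^\dagger}))$ (integrable by hypothesis) applied on $[0,T^\dagger]$; existence of the Lagrange multipliers then follows exactly as in \cref{prop:optim_eta_gen} by rewriting each constraint through this RN derivative and cancelling, and uniqueness follows from the convexity of $\Qset$ (hence of $\Qset^\dagger$) and the strict convexity of the KL divergence.

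Since this is a transcription of the one-dimensional and earlier-time arguments, I do not expect a genuinely new obstacle; the points requiring care are bookkeeping ones. Namely, one must check that the Novikov condition still guarantees the martingale property in $\R^d$ (so the formal HJB solution is admissible), and that \cref{thm:special-stoch-exp} continues to apply when the terminal functional depends on the process only through $X^1_{T^\dagger}$ rather than through the full vector $\boldsymbol X_{T}$ at time $T$ — which it does, since the It\^o / martingale-ratio computation there uses only the Markov property and the boundedness of $\ell(t,\boldsymbol x,\boldsymbol y) = \theta(t,\boldsymbol x+\boldsymbol y)/\theta(t,\boldsymbol x)$, both of which survive the restriction to $[0,T^\dagger]$ and the projection onto the first coordinate.
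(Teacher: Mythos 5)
Your proposal is correct and follows exactly the route the paper indicates: the paper omits the proof of this proposition, stating that it ``follows using steps similar to those in the proof of \cref{thm:h} and arguments in \cref{sec:early},'' and your write-up is precisely that combination (time-restriction to $[0,T^\dagger]$, the multivariate HJB/Cole--Hopf/Feynman--Kac derivation, the analogue of \cref{thm:special-stoch-exp} for the RN derivative, and existence/uniqueness via \cref{prop:optim_eta_gen} and convexity). Nothing further is needed.
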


We omit the proof of \cref{prop:h_RN_biv} as it follows using steps similar to those in the proof of \cref{thm:h} and arguments in \cref{sec:early}. Note that if a stress is applied at the terminal time, i.e., $T^\dagger = T$, then $h^*(t,\boldsymbol{x},\boldsymbol{y})$ simplifies to a representation akin to that in \cref{sec:main}. Next, we show that the optimal Girsanov kernel is a function of $x_1$ and $y_1$ only.   
\begin{corollary}
The optimal Girsanov kernel depends only on the first components of $\boldsymbol{x}$ and $\boldsymbol{y}$, i.e., 
\begin{align*}
    h^*(t,\boldsymbol{x},\boldsymbol{y}) &=
    \frac{\left.\E\left[\exp \left( -\sum_{i=1}^n\eta^*_i\,f_i(X^1_{T^\dagger}) \right)\, \right|\, X^{1}_{t^-} = x_1+y_1\right]}
    {\left.\E \left[\exp \left( -\sum_{i=1}^n\eta_i^*\,f_i(X^1_{T^\dagger}) \right)\, \right|\, X^1_{t^-} =x_1\right]} 
    =: h^*(t,x_1,y_1).
\end{align*}
\end{corollary}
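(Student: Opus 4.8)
The plan is to exploit that, under $\P$, $\boldsymbol{X}$ is a $d$-dimensional compound Poisson — hence Lévy — process, so its increments are stationary and independent of the past filtration, together with the fact that the constraints in \cref{opti:x1} involve only the first coordinate $X^1_{T^\dagger}$. Since for $t>T^\dagger$ the kernel in \cref{prop:h_RN_biv} is identically $1$ and the claim is vacuous, it suffices to fix $t\le T^\dagger$ and analyse the numerator and denominator of $h^*(t,\boldsymbol{x},\boldsymbol{y})$ separately.

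For fixed $t\le T^\dagger$, decompose $X^1_{T^\dagger}=X^1_{t^-}+\Delta$ with $\Delta:=X^1_{T^\dagger}-X^1_{t^-}$. By the standing assumption that future increments of $\mu$ are independent of the past filtration, $\Delta$ is independent of $\F_{t^-}$, and by stationarity of increments its law equals that of $X^1_{T^\dagger-t}$; in particular the law of $\Delta$ does not depend on $\boldsymbol{x}$. Hence
\[
\E_{t,\boldsymbol{x}}\!\left[\exp\!\Big(-\textstyle\sum_{i=1}^n \eta^*_i\, f_i(X^1_{T^\dagger})\Big)\right]=\E\!\left[\exp\!\Big(-\textstyle\sum_{i=1}^n \eta^*_i\, f_i(x_1+\Delta)\Big)\right]=:\psi(t,x_1),
\]
a function of $t$ and $x_1$ only, finite by the standing hypothesis $\E[\exp(-\sum_i\eta^*_i f_i(X^1_{T^\dagger}))]<\infty$ of \cref{prop:h_RN_biv}. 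Replacing $\boldsymbol{x}$ by $\boldsymbol{x}+\boldsymbol{y}$ and using $(\boldsymbol{x}+\boldsymbol{y})_1=x_1+y_1$ shows the numerator equals $\psi(t,x_1+y_1)$, so for $t\le T^\dagger$
\[
h^*(t,\boldsymbol{x},\boldsymbol{y})=\frac{\psi(t,x_1+y_1)}{\psi(t,x_1)},
\]
which depends on $(\boldsymbol{x},\boldsymbol{y})$ only through $(x_1,y_1)$. Finally, running the same independence/stationarity argument while conditioning on the coarser information $X^1_{t^-}=x_1$ (rather than on the full vector $\boldsymbol{X}_{t^-}$) yields $\psi(t,x_1)=\E[\exp(-\sum_{i=1}^n\eta^*_i f_i(X^1_{T^\dagger}))\mid X^1_{t^-}=x_1]$, giving exactly the stated representation.

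I do not expect a real obstacle here: the statement is essentially a one-line consequence of the Lévy structure. The only points needing a little care are (i) the bookkeeping showing that conditioning on $\boldsymbol{X}_{t^-}$ and on $X^1_{t^-}$ alone produce the same value $\psi(t,x_1)$, which holds precisely because the relevant first-coordinate increment $\Delta$ is independent of $\F_{t^-}$; and (ii) the $t^-$ versus $t$ distinction, which is immaterial since $\boldsymbol{X}$ is $\P$-a.s.\ continuous at the fixed time $t$, so $\boldsymbol{X}_{t^-}=\boldsymbol{X}_t$ $\P$-a.s.
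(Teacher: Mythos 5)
Your proposal is correct and follows essentially the same route as the paper, which justifies the corollary in one line by the independence of increments under $\P$ and the state-independence of the severity distribution; your write-up simply makes that argument explicit via the decomposition $X^1_{T^\dagger}=X^1_{t^-}+\Delta$ with $\Delta$ independent of $\F_{t^-}$.
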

This follows since under the reference measure $\boldsymbol{X}$ has independent increment and the severity distributions do not depend on states. 

Next, we examine how the dynamics of $\boldsymbol{X}$ change when moving from $\P$ to a stressed measure $\Q^*$. Specifically, we are interested in how the other components of the process, $X^j$ for $j = 2, \ldots, d$ are affected by a stress on $X^1$. For this, we denote the marginal severity random variables by $\xi_i$, $i \in [d]: = \{1, \ldots, d\}$ and their distributions under the reference measure by $G_i(z) := \P(\xi_i \le z)$. For a stressed distribution $\Q^*$ we write  $G_i^*(t,\boldsymbol{x},B) := \Q^*(\xi_i \in B | \boldsymbol{X}_{t-} = \boldsymbol{x})$ for $t\in [0,T]$ and $B \in \B(\R)$.

\begin{proposition}
\label{prop:kappa_G_biv}
The stressed intensity and the stress joint severity distribution (solution to \cref{opti:x1}) are, respectively,
\begin{equation*}
    \kappa^*(t,x_1) =  \kappa \int_\R h^*(t,x_1,y_1) \, G_1(dy_1) \quad \text{and} \quad
    G^*(t,x_1,d\boldsymbol{y}) = \frac{h^*(t,x_1,y_1) G(d\boldsymbol{y})}{\int_\R  h^*(t,x_1,y_1') G_1(dy_1')},
\end{equation*}
where we set $\kappa^*(t,x_1) := \kappa^*(t,\boldsymbol{x})$ and  $G^*(t,x_1,d\boldsymbol{y}):= G^*(t,\boldsymbol{x},d\boldsymbol{y})$, since the stressed intensity and the stressed severity distribution only depend on the value of the stressed component $x_1$.
\end{proposition}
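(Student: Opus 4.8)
The plan is to follow the proof of \cref{prop:kappa_G} in the univariate setting, adapting it to the $d$-dimensional Poisson random measure and exploiting the reduction provided by the preceding corollary, namely that $h^*(t,\boldsymbol{x},\boldsymbol{y})$ depends on $\boldsymbol{x}$ and $\boldsymbol{y}$ only through their first coordinates $x_1$ and $y_1$. There are two quantities to identify: (i) the jump intensity of $\boldsymbol X$ under $\Q^*$, obtained by integrating the stressed compensator over the mark space, and (ii) the stressed severity distribution, obtained by disintegrating the stressed compensator with respect to that intensity.

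First I would invoke Girsanov's theorem for the measure change in \cref{opti:x1}: under $\Q^*$ the compensator of $\mu(d\boldsymbol y,dt)$ is $\nu^*(d\boldsymbol y,dt)=h^*(t,\boldsymbol X_{t^-},\boldsymbol y)\,\nu(d\boldsymbol y,dt)$, which by the form of $\nu$ and the corollary equals $h^*(t,X^1_{t^-},y_1)\,\kappa\,G(d\boldsymbol y)\,dt$. The stressed intensity is then the rate of the associated counting process,
\begin{equation*}
\kappa^*(t,\boldsymbol x)=\lim_{\Delta t\to0}\frac1{\Delta t}\int_t^{t+\Delta t}\!\!\int_{\R^d}\nu^*(d\boldsymbol y,ds)=\kappa\int_{\R^d}h^*(t,x_1,y_1)\,G(d\boldsymbol y),
\end{equation*}
and since the integrand depends on $\boldsymbol y$ only through $y_1$, marginalising $G$ over the remaining coordinates reduces this to $\kappa\int_\R h^*(t,x_1,y_1)\,G_1(dy_1)$; in particular $\kappa^*$ is a function of $x_1$ alone, justifying the notation $\kappa^*(t,x_1)$. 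For $t>T^\dagger$ one has $h^*\equiv 1$, so this collapses to $\kappa^*=\kappa$, consistent with $\boldsymbol X$ being unaffected after the stress time.

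Finally, the stressed severity law is recovered by normalising the stressed compensator by the intensity: writing $\nu^*(d\boldsymbol y,dt)=\kappa^*(t,x_1)\,G^*(t,x_1,d\boldsymbol y)\,dt$ — the standard disintegration of a marked-point-process compensator into intensity times jump law — and substituting the expressions above yields $G^*(t,x_1,d\boldsymbol y)=h^*(t,x_1,y_1)\,G(d\boldsymbol y)\big/\int_\R h^*(t,x_1,y_1')\,G_1(dy_1')$, which again depends only on $x_1$. No step is genuinely difficult; the only points requiring a little care are the marginalisation step that collapses the $d$-dimensional integral to a one-dimensional one against the marginal $G_1$ (which rests squarely on the corollary), and the measure-theoretic justification of the intensity/severity disintegration, handled exactly as in the univariate \cref{prop:kappa_G}. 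As noted there, the argument does not use optimality of $\boldeta^*$, so the same formulae hold for any admissible Lagrange multipliers.
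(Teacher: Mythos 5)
Your proposal is correct and follows essentially the same route as the paper: Girsanov's theorem identifies the stressed compensator $\nu^*(d\boldsymbol{y},dt)=h^*(t,x_1,y_1)\,\nu(d\boldsymbol{y},dt)$, and the intensity and severity are read off exactly as in \cref{prop:kappa_G}, with the marginalisation to $G_1$ justified by the corollary that $h^*$ depends only on $(x_1,y_1)$. The paper's proof is just a terser version of the same argument.
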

\begin{proof}
By Girsanov's theorem, the compensator of $\mu(d\boldsymbol{y},dt)$ under the stressed measure is $\nu^*(d\boldsymbol{y},dt)=h^*(t,x_1,y_1)\,\nu(d\boldsymbol{y},dt)$. The result follows analogous to the proof of \cref{prop:kappa_G}.
\end{proof}

Notice that under the stressed measure all components $X^i$, $i \in [d]$, share the same stressed intensity process $\kappa^*(t, x_1)$ given in \cref{prop:kappa_G_biv} and thus the dynamics of all components are distorted. We report the stressed marginal severity distributions $G_i^*$, $i \in [d]$, in the next proposition.

\begin{proposition}\label{prop:G_biv}
The stressed marginal severity distributions of $\boldsymbol{X}$ under $\Q^*$ (solution to \cref{opti:x1}) are
\begin{equation*}
    G^*_1(t,x_1, dy_1) = \frac{h^*(t,x_1,y_1) G_1(dy_1)}{\int_\R  h^*(t,x_1,y_1') G_1(dy_1')}
    \quad \text{and}
    \quad G^*_j(t,x_1,dy_j) = \frac{\int_\R h^*(t,x_1,y_1)  G_{1,j}(dy_1,dy_j)}{\int_\R  h^*(t,x_1,y_1') G_1(dy_1')} 
\end{equation*}
for $j \in [d] \setminus \{1\}$ and where $ G_{1,j}(dy_1,dy_j)$ is the joint severity distribution of the first and the $j^\text{th}$ components under $\P$.
\end{proposition}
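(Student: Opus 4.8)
The plan is to obtain each marginal stressed severity distribution by marginalizing the joint stressed severity distribution $G^*(t,x_1,d\boldsymbol{y})$ from \cref{prop:kappa_G_biv} over the appropriate coordinates, exploiting the fact, established in the corollary preceding \cref{prop:kappa_G_biv}, that the optimal Girsanov kernel depends only on $(t,x_1,y_1)$. Concretely, since under $\Q^*$ the compensator of the jump measure is $\nu^*(d\boldsymbol{y},dt)=h^*(t,x_1,y_1)\,\nu(d\boldsymbol{y},dt)$, the conditional law of the jump of the $i$-th component given $\boldsymbol{X}_{t^-}=\boldsymbol{x}$ is precisely the $i$-th marginal of $G^*(t,x_1,d\boldsymbol{y})$; that is, $G_i^*(t,x_1,dy_i)=\int_{\R^{d-1}}G^*(t,x_1,d\boldsymbol{y})$, where the integral is taken over all components $y_k$ with $k\neq i$. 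I would state this identification first, referring back to the argument in the proof of \cref{prop:kappa_G} that links the Girsanov-modified compensator to the stressed intensity and severity.

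Next I would carry out the two marginalizations. Substituting the expression for $G^*(t,x_1,d\boldsymbol{y})$ from \cref{prop:kappa_G_biv}, the denominator $\int_\R h^*(t,x_1,y_1')\,G_1(dy_1')$ is a normalizing constant independent of $\boldsymbol{y}$, so it factors out of every marginalization. For $i=1$, integrating the numerator $h^*(t,x_1,y_1)\,G(d\boldsymbol{y})$ over $y_2,\dots,y_d$ pulls $h^*(t,x_1,y_1)$ out (it does not depend on those variables) and leaves $h^*(t,x_1,y_1)$ times the first marginal of $G$, namely $G_1(dy_1)$; this yields the stated formula for $G_1^*$. For $j\in[d]\setminus\{1\}$, integrating $h^*(t,x_1,y_1)\,G(d\boldsymbol{y})$ over all coordinates except $y_1$ and $y_j$ leaves $h^*(t,x_1,y_1)\,G_{1,j}(dy_1,dy_j)$, where $G_{1,j}$ is the joint $\P$-law of $(\xi_1,\xi_j)$; integrating this against $y_1$ gives the numerator $\int_\R h^*(t,x_1,y_1)\,G_{1,j}(dy_1,dy_j)$, producing the claimed expression for $G_j^*$. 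All interchanges of integration are justified by Tonelli's theorem since $h^*\ge 0$.

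The only genuinely non-routine point is the first step—confirming that the conditional law $G_i^*(t,\boldsymbol{x},\cdot)=\Q^*(\xi_i\in\cdot\mid\boldsymbol{X}_{t^-}=\boldsymbol{x})$ really is the $i$-th marginal of the joint stressed severity kernel $G^*(t,x_1,\cdot)$ rather than something more subtle; this rests on the structure of the Girsanov change of measure for Poisson random measures (the modified intensity measure is absolutely continuous with density $h^*$), and once that is in place the rest is bookkeeping. I would therefore spend most of the write-up on that identification and treat the two marginalization computations as short verifications.
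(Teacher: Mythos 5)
Your proposal is correct and follows essentially the same route as the paper, which simply derives both marginals by integrating the joint stressed severity distribution of \cref{prop:kappa_G_biv} over the remaining coordinates, using that $h^*$ depends only on $(t,x_1,y_1)$. Your additional discussion of why the conditional jump law equals the marginal of the stressed severity kernel is sound but is already covered by the compensator argument in \cref{prop:kappa_G_biv} and \cref{prop:kappa_G}.
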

\begin{proof}
The result follows from \cref{prop:kappa_G_biv} by integration.
\end{proof}
We note that the stressed marginal severity distribution of the first component of $\boldsymbol{X}$, $G^*_1$, is equal to the stressed severity distribution in the 1-dimensional case (\cref{prop:kappa_G}). The $j$-th marginal severity distribution $G^*_j$, $j \neq 1$, however, depends on the copula between $(\xi_1, \xi_j)$. If $\xi_j$ is independent from $\xi_1$, then the stressed marginal severity distribution of $\xi_j$ is equal to its distribution under $\P$, i.e., $G^*_j(t,x_1,dy_j) = G_j(d y_j)$. If $\xi_1$ and $\xi_j$ are dependent under $\P$, a stress on $X^1$ leads to a stressed marginal severity distributions of $\xi_j$ which may depend on time $t$ and state $x_1$. We illustrate in \cref{ex:copulas} how different copulas between $(\xi_1, \xi_j)$ alter the stressed severity distributions.

Even if the severity distributions are independent under $\P$, implying that $G_j^* = G_j$, $j\neq 1$, the paths of the components of $\boldsymbol{X}$ under $\Q^*$ are in general still dependent. This is due to the shared intensity process $\kappa^*(t,x_1)$. In the next example, we construct a joint severity distribution of $(\xi_1, \xi_2)$ under $\P$ such that the paths of $X^1$ and $X^2$ are independent under $\P$ and $\Q^*$.

\begin{example}
\label{ex:ind_mix}
We consider a bivariate compound Poisson process under $\P$ with a specific dependence between the severity distributions, which we will refer to as the ``independent mixture'' distribution. Let $G_a$ and $G_b$ be distribution functions and let $p \in (0,1)$. Suppose the severity distributions $(\xi_1,\xi_2)$ are such that
\begin{equation*}
G_{1,2}(d y_1, d y_2)
=
p \, \delta_{0} \, (y_2) G_a(dy_1) + (1-p) \, \delta_0(y_1) \, G_b(d y_2),
\end{equation*}
where $\delta_0$ denotes the Dirac measure at $0$. This choice implies that under $\P$ if the process $\boldsymbol{X}$ jumps, only $X^1$ or $X^2$ will jump. In particular, the parameter $p$ (resp. $1-p$) is the probability that $X^1$ (resp. $X^2$) jumps, conditional on the event that the process jumps. 

Under the stressed measure $\Q^*$ (solution to \cref{opti:x1}) the stressed intensity of the process is, applying \cref{prop:kappa_G_biv},
\begin{equation*}
    \kappa^*(t,x_1) = 
    \kappa \left[1 - p + p \int_\R h^*(t,x_1,y_1) G_a(dy_1) \right]\,,
\end{equation*}
as $h^*(t,x_1,0) = 1$. The joint stressed severity distribution  becomes
\begin{align*}
    G^*(t,x_1,d\boldsymbol{y})
    &=  p^*(t,x_1) \delta_0(y_2)\,\frac{h^*(t,x_1,y_1)  G_a(dy_1)}{\int_\R  h^*(t,x,dy_1^\prime) G_a(dy_1^\prime)}  
    +
    [1-p^*(t,x_1)] \delta_0(y_1) \, h^*(t,x_1,y_1) G_b(dy_2) ,
\end{align*} 
where 
\begin{equation*}
    p^*(t,x_1) 
    :=
    \frac{\kappa}{\kappa^*(t, x_1) }\,p \int_\R h^*(t,x_1,y_1^\prime) G_a(dy_1^\prime) \qquad \text{and}\qquad
    1 - p^*(t,x_1) 
    :=
    \frac{\kappa}{\kappa^*(t, x_1)} (1-p)\,.
\end{equation*}
Thus, under $\Q^*$ if $\boldsymbol{X}$ jumps, only $X^1$ or $X^2$ jump. Moreover, the probability that $X^1$ (resp. $X^2$) jumps, given that the process jumps, is $p^*(t, x_1)$ (resp. $1-p^*(t, x_1)$).

Next, we show that the probability that $X^2$ jumps is identical under both $\P$ and $\Q^*$. For this, let $N$ denote a Poisson process with intensity $\kappa$ under $\P$ and $N^*$ a Poisson process with intensity $\kappa^*(t,x_1)$ under $\Q^*$ such that $(X_t^1, X_t^2) = \sum_{i = 1}^{N_t} (\xi_{1,i}, \xi_{2,i})$ under $\P$ and $(X_t^1, X_t^2) = \sum_{i = 1}^{N^*_t} (\xi_{1,i}, \xi_{2,i})$ under $\Q^*$, where for $j = 1,2$,  $\xi_{j,i}$ are i.i.d. with distribution $G_j$ and $G_j^*$ respectively. Note that under $\Q^*$ the process $N^*$ is independent from the corresponding severity random variables. Thus we obtain, for all $t \ge 0$ and $\Delta t>0$ small enough, that
\begin{align*}
    \Q^*(X^2_t - X^2_{t-\Delta t} > 0) &= \Q^*(N^*_{\Delta t} = 1) \, \Q^*(\xi_2 > 0)  + o(\Delta t)\\
    &= \kappa^*(t,x_1) \Delta t \,(1-p^*(t,x_1)) + o(\Delta t)\\
    &= \kappa \Delta t \, (1-p)+ o(\Delta t) \\
    &= \P(N_{\Delta t} = 1) \, \P(\xi_2 > 0)  + o(\Delta t)\\
    &= \P(X^2_t - X^2_{t-\Delta t} > 0) \, .
\end{align*}
Moreover, the marginal severity distribution of $\xi_2$, conditional on the event that $X^2$ jumps, is $G_b(dy_2)$ under both $\P$ and $\Q^*$. Therefore, even though $X^2$ has under $\Q^*$ an intensity process $\kappa^*(t,x_1)$ that depends on $X^1$, the change to the intensity and the change to the probability of the jump occurring to $X^2$ cancel, leaving the process $X^2$ unchanged.
\end{example}

If $X^1$ and $X^2$ are dependent, then the stressed measure affects the dependence between the paths. The next example examines how the dependence structure between $X^1$ and $X^2$ changes for different copulas of the joint severity distribution $(\xi_1, \xi_2)$.

\begin{figure}[!tbh]
\centering
\includegraphics[width=4in]{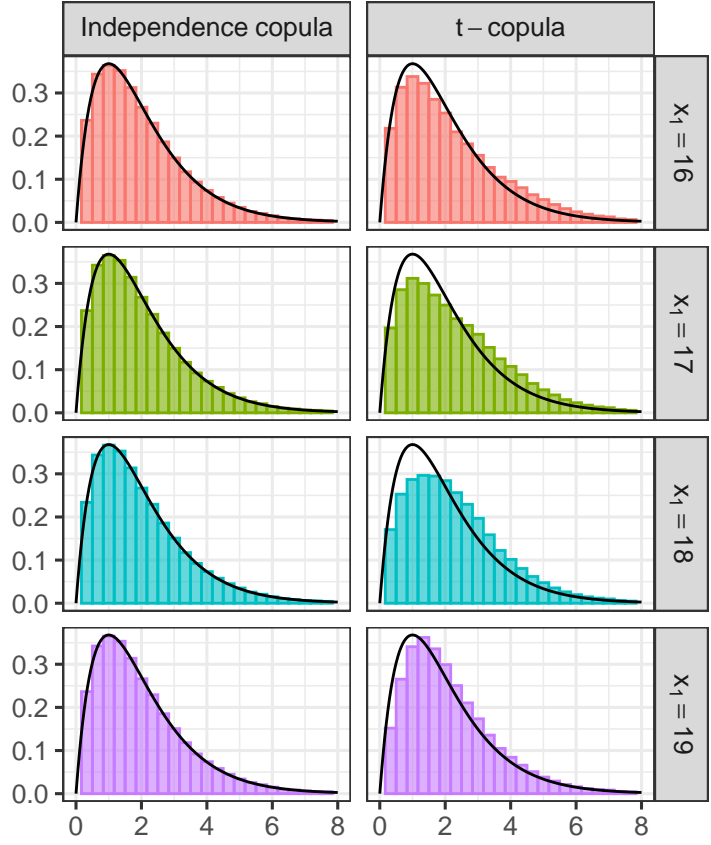}
    \caption{Histogram of 1 million draws from $G_2^*(T,x_1,dy_2)$ at the terminal time $T=1$ for state space values $x_1 =$ 16, 17, 18, and 19, when a stress is applied to the first component of the process only. Under $\P$, the severity random variables $(\xi_1,\xi_2)$ are independent (left column) and have a $t$-copula with 3 degrees of freedom and correlation 0.8 (right column). Under $\P$, the intensity is $\kappa=5$ and the marginal severity distributions are both $\Gamma(2,1)$.}
    \label{fig:x2_jumps_biv}
\end{figure}

\begin{example}
\label{ex:copulas}
Under $\P$, we consider a bivariate compound Poisson process over the time horizon $[0,T]$, $T = 1$, with marginal severity distributions $\xi_1, \xi_2 \sim \Gamma(2, 1)$ and intensity $\kappa = 5$. With these parameters, we have $\text{VaR}_{0.9}(X^1_T) = 17.4$. We apply a 15\% increase in VaR at the 90\% level to the first component, i.e. $\text{VaR}^\Q_{0.9}(X^1_T)=19.97$. \cref{fig:x2_jumps_biv} displays histograms of the stressed severity distribution $\xi_2$ at time $T = 1$ for $x_1 =$ 16, 17, 18, and 19. The panels in the left column display the histograms under the assumption that $(\xi_1,\xi_2)$ are independent under $\P$, while the panels in the right column display the histograms under the assumption that $(\xi_1,\xi_2)$ have under $\P$ a $t$-copula with 3 degrees of freedom and correlation 0.8. The black line is the density of the severity under $\P$, $\Gamma(2,1)$. We observe that when the reference severity distributions are independent, the stressed severity distribution of the second component is unchanged from the reference model. However, when the severity distributions are dependent, the severity distribution of the second component gets distorted under the stressed measure. The distortion is most noticeable for $x_1=17$ and 18. As the second component of the process is only stressed indirectly, this distortion reflects a ``spill-over'' effect of the stress on the first component.

\begin{figure}[!t]
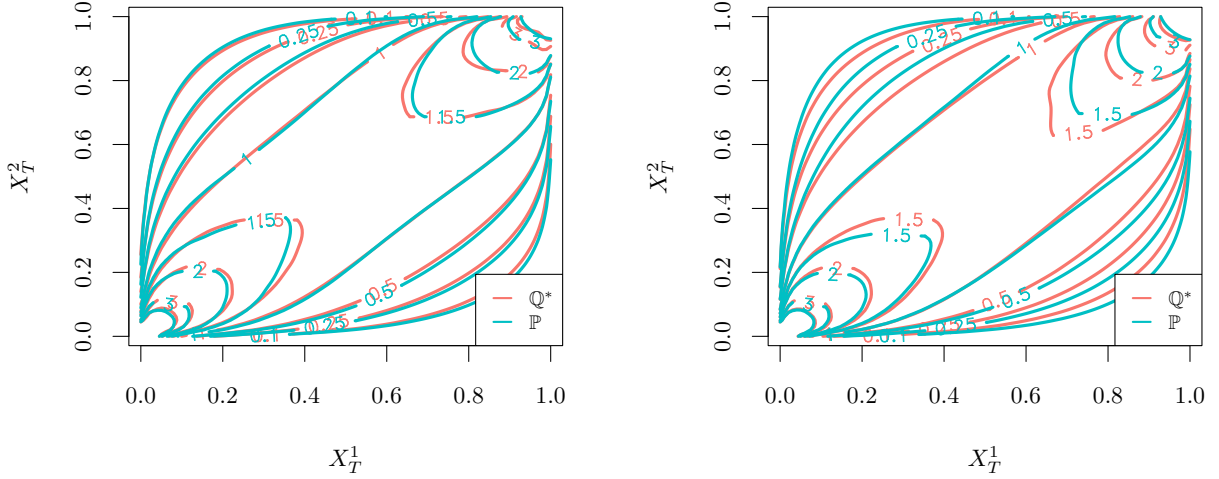

\centering
\subfloat[Copula of $(X_T^1, X_T^2)$ -- independent severity distributions  ]{\input{figures/independence_copula_updated.tex}}
\subfloat[Copula of $(X_T^1, X_T^2)$ -- severity distributions with $t$ copula, correlation 0.8, 3 d.f.]{\input{figures/t_copula_updated.tex}}
\caption{Contour plots of the densities of the empirical copula of $(X_T^1, X_T^2)$ under $\P$ and $\Q^*$, where $\Q^*$ is the optimal measure under a 15\% increase to  $\text{VaR}_{0.9}(X^1_T)$. Under $\P$, $\boldsymbol{X}$ is a compound Poisson process with marginal severity distributions $\xi_1,\xi_2 \sim \Gamma(2, 1)$ and intensity $\kappa = 5$. In panel (a), $\xi_1$ and $\xi_2$ are independent, while in panel (b), they are dependent via a $t$-copula with correlation 0.8 and 3 degrees of freedom.}
    \label{fig:copula}
\end{figure}

\cref{fig:copula} shows contour plots of the copula densities of $(X^1_T, X_T^2)$ for the same two choices of copulas for the severity distribution. We observe an increase in upper tail dependence between $X_T^1$ and $X_T^2$ under $\Q^*$ when the marginal severity distributions $\xi_1$ and $\xi_2$ are dependent. 
\cref{tab:spearman} reports the Spearman's rank correlation coefficient between $X^1_T$ and $X^2_T$ under $\P$ and $\Q^*$, i.e. $\rho_\tau(X_T^1, X_T^2)$ and $\rho_\tau^{\Q^*}(X_T^1, X_T^2)$, for different copulas. The 95\% confidence intervals of the estimates are reported in brackets, computed using the method of \textcite{bonett2000sample}. When the severity distribution of $X^1$ and $X^2$ is the independent mixture from \cref{ex:ind_mix}, $\rho_\tau(X_T^1, X_T^2)$ is approximately 0 under both $\P$ and $\Q^*$. When $X^1$ and $X^2$ have independent severity distributions under $\P$, we observe an increase in correlation under $\Q^*$, however it is not statistically significant at the 5\% level. However, for most other choices of copulas, there is a statistically significant increase in the correlation under $\Q^*$. 
\end{example}

\begin{table}[!t]
    \centering
    \begin{tabular}{ c @{\hskip 3em} r @{\hskip 2em} r }
    Copula of $(\xi_1, \xi_2)$ under $\P$& $\rho_\tau(X_T^1, X_T^2)$ & $\rho_\tau^{\Q^*}(X_T^1, X_T^2)$\\[0.5em]
\toprule
Independent mixture & $-0.003$ ($\pm$0.019) & $-0.010$ ($\pm$0.020)\\[0.5em]
Independence copula & 0.674 ($\pm$0.011) & 0.692 ($\pm$0.012)\\[0.5em]
$t$-copula, corr = 0.8, df = 1 & 0.668 ($\pm$0.011) & 0.707 ($\pm$0.011)\\[0.5em]
$t$-copula, corr = 0.8, df = 3 & 0.661 ($\pm$0.012) & 0.710 ($\pm$0.011)\\[0.5em]
Gumbel copula, $\theta=2$ & 0.672 ($\pm$0.012) & 0.691 ($\pm$0.011)\\[0.5em]
Frank copula, $\theta=2$ & 0.671 ($\pm$0.012) & 0.694 ($\pm$0.012)\\
\bottomrule
    \end{tabular}
    \caption{Spearman's rank correlation coefficient with 95\% confidence bounds in brackets for different copulas of $(\xi_1, \xi_2)$, where $\xi_1, \xi_2 \sim \Gamma(2,1)$ and the intensity is $\kappa = 5$. A stress of a 15\% increase in $\text{VaR}_{0.9}(X^1_T)$ is imposed.}
    \label{tab:spearman}
\end{table} 

Finally, we examine a simulation approach to ``what-if'' scenarios. We consider an insurance loss portfolio where one component of the portfolio ($X^1$) undergoes a stress at time $T/2$ and are interested in how the stress impacts the risk (measured via a risk measure) of the aggregate portfolio. In a risk management context one may be interested in the required severity of a stress on $X^1$ at time $T/2$ that leads to a breach of a risk threshold at the terminal time. 

\begin{example}\label{ex:what-if}
Suppose we have a bivariate process $\boldsymbol{X} = (X^1_t, X^2_t)_{t\in[0,T]}$ and impose a stress on $\text{VaR}_\alpha(X^1_{T/2})$ for some $\alpha \in(0,1)$. We then examine the effect of the stress on the VaR of the aggregate portfolio at the terminal time, i.e., on $\text{VaR}_{0.9}(X^1_T + X^2_T)$. 
Specifically, for $\alpha\in (0,1)$, we calculate the minimal percentage increase in $\text{VaR}_\alpha(X^1)$ such that VaR at the 90\% level of the aggregate portfolio is at least 5\% larger under the stressed measure, i.e. $\text{VaR}_{0.9}^{\Q^*}(X^1_T + X^2_T) \ge 1.05 \text{VaR}_{0.9}(X^1_T + X^2_T)$.

\cref{tab:what_if} reports the corresponding increase in $\text{VaR}_\alpha(X^1_{T/2})$ in percentage for different $\alpha$ levels, where $\xi_1 \stackrel{\P}{\sim} \Gamma(2,1)$, $\xi_2 \stackrel{\P}{\sim}\text{Exp}(2)$, $\kappa=5$, and $(\xi_1, \xi_2)$ have under $\P$ a $t$-copula with 3 degrees of freedom and correlation 0.8. Since the stress increases the positive dependence between $X^1$ and $X^2$, we see that a larger stress for smaller levels of $\alpha$ is required to achieve the same increase in $\text{VaR}_{0.9}(X^1_{T} + X^2_T)$.

\begin{table}[thb]
    \centering
    \begin{tabular}{ c @{\hskip 3em} c }
   $\alpha$ &  Stress: \% increase in $\text{VaR}_\alpha(X_{T/2}^1)$\\
\toprule
0.3 & 58.6\\
0.4 & 40.4\\
0.5 & 30.4\\
0.6 & 24.8\\
0.7 & 19.6\\
0.8 & 16.1\\
\bottomrule
    \end{tabular}
        \caption{Required percentage increase in $\text{VaR}_\alpha(X^1_{T/2})$ for various levels of $\alpha$ to achieve a 5\% increase in $\text{VaR}_{0.9}(X^1_{T} + X^2_T)$. $\boldsymbol{X}$ is a compound Poisson process with intensity $\kappa=5$, marginal severity random variables $\xi_1 \stackrel{\P}{\sim} \Gamma(2,1)$, $\xi_2 \stackrel{\P}{\sim} \text{Exp}(2)$. $(\xi_1,\xi_2)$ have a $t$ copula with correlation 0.8 and 3 degrees of freedom.}
    \label{tab:what_if}
\end{table}
\end{example}

\vspace{3em}
\section{Simulating Under the Stressed Measure}
\label{sec:numerics}
To illustrate how a stress affects the dynamics of a compound Poisson process, we plot in the earlier examples different sample paths of the process, its intensity processes, and the severity distributions under the reference and the stressed probability measure. Key to simulating sample paths under the stressed measure is to (a) find the optimal Lagrange multiplier $\boldeta^*$, and (b) calculate the Girsanov kernel $h^*(t,x,y)$, which then allows to estimate the stressed intensity $\kappa^*(t,x)$ and the stressed severity distribution $G^*(t,x)$ using Proposition \ref{prop:kappa_G}
To find the optimal Lagrange multiplier, we solve \cref{eq:find_eta} by estimating the required expectation using the Fourier space time-stepping (FST) algorithm \citep{jackson2008fourier}. For completeness, we recall the FST methodology in \cref{sec:FST}. To estimate the Girsanov kernel $h^*(t,x,y)$, which is given in \cref{thm:h}, we again utilize the FST for calculating conditional expectations, see \cref{sec:simulation}.

The code implementing the algorithms is available at \href{https://github.com/emmakroell/stressing-dynamic-loss-models}{https://github.com/emmakroell/stressing-dynamic-loss-models}.

\subsection{Fourier space time-stepping algorithm}
\label{sec:FST}
In this section, we provide an algorithm to simulate paths of $X$ under the stressed measure, which is based on the Fourier space time-stepping (FST) algorithm \citep{jackson2008fourier}. FST is  a method for efficiently solving partial integro-differential equations (PIDE) such as \cref{eqn:HJB-linearized} using fast Fourier transform methods. We briefly summarize the FST method before describing our algorithm for simulating under the stressed measure.

For a function $\ell$, we  denote its Fourier transform by $\hat \ell$ and consider a PIDE with a general terminal condition $g(x)$:
\begin{equation}
\label{eq:PIDE_ex}
  {\left \lbrace \begin{aligned}
  \partial_t\omega(t,x) + \L \omega(t,x) &= 0 \, , \\
   \omega(T,x)  &=  g(x)  \, ,
  \end{aligned}\right.}
\end{equation}
where the linear operator $\L$ is  the $\P$-generator of the process $X$. Using Fourier transform, we rewrite the above PIDE as
\begin{equation}
  {\left \lbrace \begin{aligned}
  (\partial_t + \Psi(\zeta)) \hat \omega(t,\zeta) &= 0 \, , \\
    \hat \omega(T, \zeta) &= \hat g(\zeta) \, ,
  \end{aligned}\right.}
  \label{eq:fourier_PDE}
\end{equation}
where $\Psi$ is the characteristic function of $X_T$. As \eqref{eq:fourier_PDE} is a linear ODE, its solution is
\begin{equation}
    \hat \omega (t,\zeta) = \hat g (\zeta)  e^{\Psi(\zeta)(T-t)}.
    \label{eq:fourier_soln}
\end{equation}
Thus a solution to \eqref{eq:PIDE_ex} is obtained by applying the inverse Fourier transform to \eqref{eq:fourier_soln}.

To implement this procedure, we first create a grid in the state space variable $x$ and in the frequency domain, $\zeta$. One then computes $\hat \omega(\cdot,\zeta)$ via \eqref{eq:fourier_soln} on this grid to approximate the solution to the PIDE \eqref{eq:fourier_PDE}. Applying the inverse Fourier transform to $\hat \omega(\cdot,\zeta)$ we obtain $\omega(\cdot,x)$, the solution to \eqref{eq:PIDE_ex}, on the state space grid. For detail on grid selection, we refer to \textcite{jackson2008fourier}.

\subsection{Simulating Sample Paths}\label{sec:simulation}
\begin{algorithm}[t!]
\setstretch{1.2}
\KwInput{constraint functions $f_i(x)$, constants $c_i$ for $i \in [n]$, intensity $\kappa$, severity distribution $G$, terminal time $T$, time step size $\Delta t$}\vspace{0.5em}

Compute $\eta^*_i$, $i \in [n]$ by using the FST method to solve \\[0.2em]
 \algindent $0 =  \E \left[ \exp \left( -\sum_{j=1}^n\eta^*_j\,f_j(X_T) \right)\left( f_i(X_T)-c_i \right)\right]$ \;

Set $X_0 = 0$\;
\For{$t = 0$ \KwTo $T$ \KwBy $\Delta t$}{
    Compute $\kappa^*(t,X_t)$ using \cref{alg:1}  with $\boldeta = \boldeta^*$\;
    Simulate whether a jump occurs using intensity rate $\kappa^*(t,X_t)$ \;
    If a jump occurs, sample from $G^*(t,X_t,dy)$ obtained in \cref{alg:1} with $\boldeta = \boldeta^*$\;
    Update $X_t$ \;
}
\KwOutput{$X$}
\caption{Simulating paths of $X_t$ under $\Q^*$}
\label{alg:2}
\end{algorithm}
Next, we illustrate how the FST method can be used to calculate the stressed intensity, the stressed severity distributions, and how to simulate sample paths under a stressed measure. For simplicity, we state the algorithm for a one-dimensional compound Poisson process $X$. We use the FST method to approximate (conditional) $\P$-expectations of functions applied to $X_T$. Recall that both the optimal Girsanov kernel $h^*(t,x,y)$ and the optimal Lagrange multipliers $\boldeta^*$ are computed by taking $\P$-expectations. Moreover, since $X_T$ is under $\P$ a compound Poisson random variable, its characteristic function (under $\P$) is given by the L\'{e}vy-Khintchine formula, i.e., $\Psi(\zeta) = \kappa  \left( \varphi_\xi(\zeta) -1 \right)$, where $\varphi_\xi$ is the characteristic function of the severity random variable $\xi$ under $\P$.

\cref{alg:2} describes the process of simulating under a stressed measure $\Q^*$. If the equations for the Lagrange multiplier are linear, a single application of FST will be sufficient. If the equations are however non-linear, we use a non-linear solver where the FST method is used to compute the equation for each iteration. To simulate paths of $X$ under the stressed measure $\Q^*$, we compute $\kappa^*(t,X_t)$ and $G^*(t,X_t)$ for a grid of times $t$, and simulate forward at each time step with this intensity and severity distribution.

The method to compute the intensity $\kappa^*(t,x)$ and severity distribution $G^*(t,x)$ is described in \cref{alg:1}. For this we first require the Girsanov kernel $h^*(t,x,y)$ which is obtain by computing
$\omega^*(t,x+y) = \E_{t,x+y}\left[\exp \left( -\sum_{i=1}^n\eta^*_i\,f_i(X_T) \right)\right]$ and $\omega^*(t,x)=\E_{t,x}\left[\exp \left( -\sum_{i=1}^n\eta^*_i\,f_i(X_T) \right)\right]$ using FST, and then taking their ratio. The intensity process $\kappa^*(t,x)$ is calculated via numerical integration with respect to $G(dy)$. Finally, the severity distribution $G^*(t,x)$ is obtained by re-sampling.

\begin{algorithm}[tbh!]
\setstretch{1.2}
\KwInput{time $t$, state variable $x$, Lagrange multiplier $\boldeta$, constraint functions $f_i(x)$, $i\in[n]$, intensity $\kappa$, severity distribution $G$, grid for numerical integration $y\_grid$, number of draws $N$}\vspace{0.5em}

\For{$y$ in $y\_grid$}{
    Compute $\omega^\boldeta(t,x+y) = \E_{t,x+y}\left[\exp \left( -\sum_{i=1}^n\eta_i\,f_i(X_T) \right)\right]$ and \\
     \algindent \algindent \quad 
     $\; \omega^\boldeta(t,x)=\E_{t,x}\left[\exp \left( -\sum_{i=1}^n\eta_i\,f_i(X_T) \right)\right]$ using the FST method \;
    Set $h^\boldeta(t,x,y) = \omega^\boldeta(t,x+y)/ \omega^\boldeta(t,x)$ \;
}
Approximate $\kappa^\boldeta(t,x) = \kappa \int_\R h^\boldeta(t,x,y) G(dy)$ using numerical integration \;
Draw $N$ times from $G(dy)$ \;
Compute weights by interpolating $h^\boldeta(t,x,y)$ in $y$ at these draws and normalizing by $\kappa^\boldeta(t,x)$\;
Obtain draws from $G^\boldeta(t,x,dy)$ by re-sampling the draws from $G(dy)$ with the computed weights \;
\KwOutput{$\kappa^\boldeta(t,x)$,  $G^\boldeta(t,x)$}
\caption{Approximating $\kappa^\boldeta(t,x)$ and $G^\boldeta(t,x)$ for fixed $t,x$}
\label{alg:1}
\end{algorithm}

To implement the case when $\boldsymbol{X}$ is bivariate, we follow the same procedure as in the univariate case, except that the jump distribution is now bivariate. Using the \texttt{copula} package \citep{hofert2018elements}, we set the desired marginal jump distributions and copula. \cref{alg:1} is modified so that $N$ draws are taken from this joint distribution but the weights are computed using only the draws corresponding to the stressed (first) marginal distribution. The bivariate distribution is then re-sampled with these weights. We then simulate forward using \cref{alg:2} with the two distorted marginal severity distributions and the shared distorted intensity process. 

\section{Conclusion}
In a dynamic setting, we consider a risk management framework based on the concept of reverse stress testing where the reference model is a compound Poisson process, $X$, over a finite time horizon $[0,T$]. We solve the optimization problem where we seek the probability measure on the path-space of stochastic processes which has minimal KL-divergence from the reference measure and fulfills constraints which can be written as expected values of functions applied to the process at terminal time. We show that this solution is unique, and refer to it as the stressed measure. We characterize the Radon-Nikodym derivative of the stressed measure, and derive the dynamics of $X$ under it. We prove that under the stressed measure, $X$ is a generalized version of a compound Poisson process, where both the intensity and the severity distribution depend on time and state. For general constraints, we provide a simulation algorithm which allows to simulate sample paths of $X$ under the stressed measure.

Of particular interest are constraints on VaR and CVaR risk measures, for which we analyze in detail the dynamics of $X$ under the stressed measures. 
In the multivariate setting and for constraints at time points earlier than $T$, we investigate how a stress on one component of the process alters the dependence between the components of the process and otherwise affects the components other than the one stressed. We find that for most dependence structures of the severity distribution, the stress increases the dependence between the portfolio components due to the shared intensity process, as well as changes in the the tail dependence of the process components. Finally, we consider a ``what-if'' scenario where we examine how severe a stress on a portfolio component needs to be to breach a risk threshold of the aggregate portfolio at a later time point. We find that smaller stresses are needed to the upper tail of a sub-portfolio to breach a VaR risk threshold as compared to stresses on the median or lower tails.

\section*{Acknowledgements}
EK is supported by an NSERC Canada Graduate Scholarship-Doctoral. SJ and SP would like to acknowledge support from the Natural Sciences and Engineering Research Council of Canada (grants RGPIN-2018-05705, RGPAS-2018-522715, and DGECR-2020-00333, RGPIN-2020-04289). SP also acknowledges the support from the Canadian Statistical Sciences Institute (CANSSI).

\printbibliography

\end{document}